\newtheorem{assumption}{Assumption} \newtheorem{lemma}{Lemma}
\newtheorem{theorem}{Theorem} \theoremstyle{remark}
\newtheorem*{remark}{{\bf Remark}} \newtheorem{example}{Example}
\begin{document}
\title{Optimal Subsampling for Large Sample Logistic Regression}

\author{HaiYing Wang
  \footnote{Department of Statistics,
    University of Connecticut, Storrs, CT 06269},
  Rong Zhu
  \footnote{Academy of Mathematics and Systems Science, Chinese Academy of Sciences, Beijing, China},
  Ping Ma
  \footnote{Department of Statistics, University of Georgia, Athens, GA 30602}}
\date{}
\maketitle
\begin{abstract}
  For massive data, the family of subsampling algorithms is popular to
  downsize the data volume and reduce computational burden. Existing
  studies focus on approximating the ordinary least squares estimate
  in linear regression, where statistical leverage scores are often
  used to define subsampling probabilities. In this paper, we propose
  fast subsampling algorithms to efficiently approximate the maximum
  likelihood estimate in logistic regression.  We first establish
  consistency and asymptotic normality of the estimator from a general
  subsampling algorithm, and then derive optimal subsampling
  probabilities that minimize the asymptotic mean squared error of the
  resultant estimator. An alternative minimization criterion is also
  proposed to further reduce the computational cost. The optimal
  subsampling probabilities depend on the full data estimate, so we
  develop a two-step algorithm to approximate the optimal subsampling
  procedure. This algorithm is computationally efficient and has a
  significant reduction in computing time compared to the full data
  approach. Consistency and asymptotic normality of the estimator from
  a two-step algorithm are also established.  Synthetic and real data
  sets are used to evaluate the practical performance of the proposed
  method.
\end{abstract}

\noindent%
{\it Keywords:} $A$-optimality; Logistic Regression; Massive Data;
Optimal Subsampling; Rare Event.  \vfill

\newpage
\section{Introduction}
\label{sec:introduction}
With the rapid development of science and technologies, massive data
have been generated at an extraordinary speed.  Unprecedented volumes
of data offer researchers both unprecedented opportunities and
challenges. The key challenge is that directly applying statistical
methods to these super-large sample data using conventional computing
methods is prohibitive. We shall now present two motivating examples.

\begin{example}\label{PExp1}{\it Census}.
  The U.S. census systematically acquires and records data of all
  residents of the United States.  The census data provide fundamental
  information to study socio-economic issues.  \cite{kohavi-nbtree}
  conducted a classification analysis using residents' information
  such as income, age, work class, education, the number of working
  hours per week, and etc.  They used these information to predict
  whether the residents are high income residents, i.e., those with
  annal income more than $\$50$K, or not.  Given that the whole census
  data is super-large, the computation of statistical analysis is very
  difficult.
\end{example}

\begin{example}\label{PExp2}{\it Supersymmetric Particles}.
  Physical experiments to create exotic particles that occur only at
  extremely high energy densities have been carried out using modern
  accelerators. e.g., large Hadron Collider (LHC).  Observations of
  these particles and measurements of their properties may yield
  critical insights about the fundamental properties of the physical
  universe.  One particular example of such exotic particles is
  supersymmetric particles, the search of which is a central
  scientific mission of the LHC \citep{Baldi2014}.  Statistical
  analysis is crucial to distinguish collision events which produce
  supersymmetric particles (signal) from those producing other
  particles (background).  Since LHC continuously generates petabytes
  of data each year, the computation of statistical analysis is very
  challenging.
\end{example}

The above motivating examples are classification problems with massive
data.  Logistic regression models are widely used for classification
in many disciplines, including business, computer science, education,
and genetics, among others \citep{hosmer2013applied}.  Given
covariates ${\mathbf{x}}_i$'s $\in\mathbb{R}^d$, logistic regression
models are of the form
\begin{equation}\label{eq:1}
  P(y_i=1|{\mathbf{x}}_i)=p_i(\boldsymbol{\beta})
  =\frac{\exp({\mathbf{x}}_i^T\boldsymbol{\beta})}{1+\exp({\mathbf{x}}_i^T\boldsymbol{\beta})},\quad i=1,2, ..., n,
\end{equation}
where $y_i$'s $\in\{0,1\}$ are the responses and $\boldsymbol{\beta}$
is a $d\times1$ vector of unknown regression coefficients belonging to
a compact subset of $\mathbb{R}^d$. The unknown parameter
$\boldsymbol{\beta}$ is often estimated by the maximum likelihood
estimator (MLE) through maximizing the log-likelihood function with
respect to $\boldsymbol{\beta}$, namely,
\begin{equation}\label{log-l}
  \hat{{\boldsymbol{\beta}}}_{{{\textnormal{\tiny MLE}}}}=\arg\max_{{\boldsymbol{\beta}}}\ \ell({\boldsymbol{\beta}})
  =\arg\max_{{\boldsymbol{\beta}}}\ \sum_{i=1}^n\big[y_i\log p_i({\boldsymbol{\beta}})+(1-y_i)\log\{1- p_i({\boldsymbol{\beta}})\}\big].
\end{equation}

Analytically, there is no general closed-form solution to the MLE
$\hat{{\boldsymbol{\beta}}}_{{{\textnormal{\tiny MLE}}}}$, and
iterative procedures are often adopted to find it numerically. A
commonly used iterative procedure is Newton's method. Specifically for
logistic regression, Newton's method iteratively applies the following
formula until $\hat{{\boldsymbol{\beta}}}^{(t+1)}$ converges.
\begin{equation*}
  \hat{{\boldsymbol{\beta}}}^{(t+1)}=\hat{{\boldsymbol{\beta}}}^{(t)}
  +\left\{\sum_{i=1}^n w_i\Big(\hat{{\boldsymbol{\beta}}}^{(t)}\Big){\mathbf{x}}_i{\mathbf{x}}_i^T
  \right\}^{-1} \frac{\partial\ell\Big(\hat{{\boldsymbol{\beta}}}^{(t)}\Big)}{\partial{\boldsymbol{\beta}}},
\end{equation*}
where $w_i({\boldsymbol{\beta}})=p_i({\boldsymbol{\beta}})\{1-p_i({\boldsymbol{\beta}})\}$.  Since it requires
$O(nd^2)$ computing time in each iteration, the optimization procedure
takes $O(\zeta nd^2)$ time, where $\zeta$ is the number of iterations
required for the optimization procedure to converge.  One common
feature of the two motivating examples is their super-large sample
size. 
For such super-large sample problems, the computing time $O(nd^2)$ for
a single run may be too long to afford, let along to calculate it
iteratively. Therefore, computation is a bottleneck for the
application of logistic regression on massive data.

When proven statistical methods are no longer applicable due to
limited computing resources, a popular method to extract useful
information from data is the subsampling method
\citep{drineas2006sampling,mahoney2009cur,Drineas:11}. This approach
uses the estimate based on a subsample that is taken randomly from the
full data to approximate the estimate from the full data.  It is
termed {\it algorithmic leveraging} in \cite{PingMa2014-ICML,
  PingMa2014-JMLR} because the empirical statistical leverage scores
of the input covariate matrix are often used to define the nonuniform
subsampling probabilities.  There are numerous variants of subsampling
algorithms to solve the ordinary least squares (OLS) in linear
regression for large data sets, see \cite{drineas2006sampling,
  Drineas:11, PingMa2014-ICML,PingMa2014-JMLR, Ma2014}, among
others. Another strategy is to use random projections of data matrices
to fast approximate the OLS estimate, which was studied in
\cite{rokhlin2008fast}, \cite{dhillon2013new}, \cite{clarkson2013low}
and \cite{mcwilliams2014fast}. The aforementioned approaches have been
investigated exclusively within the context of linear regression, and
available results are mainly on algorithmic properties.  For logistic
regression, \cite{owen2007infinitely} derived interesting asymptotic
results for infinitely imbalanced data sets. \cite{king2001logistic}
investigated the problem of rare events data. \cite{fithian2014local}
proposed an efficient local case-control (LCC) subsampling method for
imbalanced data sets, in which the method was motivated by balancing
the subsample.
  In this paper, we focus on approximating the full data MLE using a
  subsample, and our method is motivated by minimizing the asymptotic
  mean squared error (MSE) of the resultant subsample-estimator given
  the full data.  We rigorously investigate the statistical
properties of the general subsampling estimator and obtain its
asymptotic distribution. More importantly, using this asymptotic
distribution, we derive \textbf{o}ptimal \textbf{s}ubsampling methods
\textbf{m}otivated from the \textbf{A}-optimality \textbf{c}riterion
(OSMAC) in the theory of optimal experimental design.

In this paper, we have two major contributions for theoretical and
methodological developments in subsampling for logistic regression
with massive data:
\begin{enumerate}
\item {\it Characterizations of optimal subsampling}.
  Most work on subsampling algorithms (under the context of linear
  regression) focuses on algorithmic issues. One exception is the work
  by \cite{PingMa2014-ICML,PingMa2014-JMLR}, in which expected values
  and variances of estimators from algorithmic leveraging were
  expressed approximately. However, 
   there was no precise theoretical
  investigation on when these approximations hold. In this paper, we
  rigorously prove that the resultant estimator from a general
  subsampling algorithm is consistent to the full data MLE, and
  establish the asymptotic normality of the resultant estimator.
  Furthermore, from the asymptotic distribution, we
  derive the optimal subsampling method that minimizes the asymptotic
  MSE or a weighted version of the asymptotic MSE.
\item {\it A novel two-step subsampling algorithm}.  The OSMAC that
  minimizes the asymptotic MSEs depends on the full data MLE
  $\hat{{\boldsymbol{\beta}}}_{{{\textnormal{\tiny MLE}}}}$, so the theoretical characterizations do not
  immediately translate into good algorithms. We propose a novel
  two-step algorithm to address this issue. The first step is to
  determine the importance score of each data point. In the
  second step, the importance scores are used to define nonuniform
  subsampling probabilities to be used for sampling from the full data
  set. { We prove that the estimator from the two-step algorithm is consistent and asymptotically normal with the optimal asymptotic covariance matrix under some optimality criterion.} The two-step subsampling algorithm  runs in $O(nd)$ time,
  whereas the full data MLE typically requires $O(\zeta nd^2)$ time to
  run. This improvement in computing time is much more significant
  than that obtained from applying the leverage-based subsampling
  algorithm to solve the OLS in linear regression. In linear
  regression, compared to a full data OLS which requires $O(nd^2)$
  time, the leverage-based algorithm with approximate leverage scores
  \citep{Drineas:12} requires $O(nd\log n/\varepsilon^2)$ time with
  $\varepsilon\in(0, 1/2]$, which is $o(nd^2)$ for the case of
  $\log n=o(d)$.
\end{enumerate}

The remainder of the paper is organized as follows. In
section~\ref{sec:random-subsampling}, we conduct a theoretical
analyses of a general subsampling algorithm for logistic
regression. In section~\ref{sec:appr-optim-subs}, we develop optimal
subsampling procedures to approximate the MLE in logistic
regression. A two-step algorithm is developed in
section~\ref{sec:two-step} to approximate these optimal subsampling
procedures, and its theoretical properties are studied. The empirical performance of our algorithms is evaluated
by numerical experiments on synthetic and real data sets in
Sections~\ref{sec:numerical-examples}. Section~\ref{sec:discussion}
summarizes the paper. Technical proofs for the theoretical results, as
well as additional numerical experiments are given in the
Supplementary Materials. 

\section{General Subsampling Algorithm and its Asymptotic Properties}
\label{sec:random-subsampling} In this section, we first present a
general subsampling algorithm for approximating $\hat{{\boldsymbol{\beta}}}_{{{\textnormal{\tiny MLE}}}}$,
and then establish the consistency and asymptotic normality of the
resultant estimator. Algorithm~\ref{alg:1} describes the general
subsampling procedure.
\begin{algorithm}
  \begin{itemize}
  \item \textbf{Sampling:} Assign subsampling probabilities $\pi_i$,
    $i=1,2,...n$, for all data points.  Draw a random subsample of
    size $r\ (\ll n)$, according to the probabilities
    $\{\pi_i\}_{i=1}^{n}$, from the full data. Denote the covariates,
    responses, and subsampling probabilities in the subsample as
    ${\mathbf{x}}^{*}_i$, $y^{*}_i$, and $\pi_i^{*}$, respectively, for
    $i=1,2,...,r$.
  \item \textbf{Estimation:} Maximize the following weighted
    log-likelihood function to get the estimate $\tilde{{\boldsymbol{\beta}}}$ based
    on the subsample.
    \begin{equation*}
      \ell^*({\boldsymbol{\beta}})=\frac{1}{r}\sum_{i=1}^r\frac{1}{\pi_i^*}[y_i^*\log p_i^*({\boldsymbol{\beta}})+(1-y_i^*)\log\{1-p_i^*({\boldsymbol{\beta}})\}],
    \end{equation*}
    where
    $p_i^*({\boldsymbol{\beta}})={\exp({\boldsymbol{\beta}}^T{\mathbf{x}}_i^*)}/\{1+\exp
    ({\boldsymbol{\beta}}^T{\mathbf{x}}_i^*)\}$.
    Due to the convexity of $\ell^*({\boldsymbol{\beta}})$, the maximization can be
    implemented by Newton's method, i.e., iteratively applying the
    following formula until $\tilde{{\boldsymbol{\beta}}}^{(t+1)}$ and
    $\tilde{{\boldsymbol{\beta}}}^{(t)}$ are close enough,
    \begin{equation}\label{eq:21}
      \tilde{{\boldsymbol{\beta}}}^{(t+1)}=\tilde{{\boldsymbol{\beta}}}^{(t)}
      +\left\{\sum_{i=1}^r\frac{w_i^*\big(\tilde{{\boldsymbol{\beta}}}^{(t)}\big)
          {\mathbf{x}}_i^*({\mathbf{x}}_i^*)^T}{\pi_i^*}\right\}^{-1}
      \sum_{i=1}^r\frac{\big\{y_i^*-p_i^*\big(\tilde{{\boldsymbol{\beta}}}^{(t)}\big)\big\}
        {\mathbf{x}}_i^*}{\pi_i^*},
    \end{equation}
    where $w_i^*({\boldsymbol{\beta}})=p_i^*({\boldsymbol{\beta}})\{1-p_i^*({\boldsymbol{\beta}})\}$.
  \end{itemize}
  \caption{General subsampling algorithm}
  \label{alg:1}
\end{algorithm}

Now, we investigate asymptotic properties of this general subsampling
algorithm, which provide guidance on how to develop algorithms with
better approximation qualities. 
Note that in the two motivating examples, the sample sizes are super-large, but
the numbers of predictors are unlikely to increase even if the sample sizes further increase. 
We assume that $d$ is fixed and
  $n\rightarrow\infty$. For easy of discussion, we assume that ${\mathbf{x}}_i$'s are independent and identically
distributed (i.i.d) with the same distribution as that of ${\mathbf{x}}$. The case of
nonrandom ${\mathbf{x}}_i$'s is presented in the Supplementary Materials. To facilitate
the presentation, denote the full data matrix as ${\mathcal{F}_n}=({\mathbf{X}}, {\mathbf{y}})$, where
${\mathbf{X}}=({\mathbf{x}}_1, {\mathbf{x}}_2,\dots,{\mathbf{x}}_n)^T$ is the covariate matrix and
${\mathbf{y}}=(y_1,y_2,\dots,y_n)^T$ is the vector of responses. 
Throughout the paper, $\|\mathbf{v}\|$ denotes the Euclidean norm of a
vector $\mathbf{v}$, i.e.,
$\|\mathbf{v}\|=(\mathbf{v}^T\mathbf{v})^{1/2}$. We need the following
assumptions to establish the first asymptotic result.
\begin{assumption}\label{asp:1}
  As $n\rightarrow\infty$,
  ${\mathbf{M}}_X=n^{-1}\sum_{i=1}^n w_i(\hat{{\boldsymbol{\beta}}}_{{{\textnormal{\tiny MLE}}}}){\mathbf{x}}_i{\mathbf{x}}_i^T$
  goes to a positive-definite matrix in probability and
  $n^{-1}\sum_{i=1}^n\|{\mathbf{x}}_i\|^3=O_P(1)$.
\end{assumption}
\begin{assumption}\label{asp:2}
  $n^{-2}\sum_{i=1}^n\pi_i^{-1}\|{\mathbf{x}}_i\|^k=O_P(1)$ for $k=2, 4$.
\end{assumption}
Assumption~\ref{asp:1} imposes two conditions on the covariate
distribution and this assumption holds if ${\mathrm{E}}({\mathbf{x}}{\mathbf{x}}^T)$ is positive
definite and ${\mathrm{E}}\|{\mathbf{x}}\|^3<\infty$. Assumption~\ref{asp:2} is a
condition on both subsampling probabilities and the covariate
distribution. For uniform subsampling with $\pi_i=n^{-1}$, a
sufficient condition for this assumption is that ${\mathrm{E}}\|{\mathbf{x}}\|^4<\infty$.

The theorem below presents the consistency of the estimator from the
subsampling algorithm to the full data MLE.
\begin{theorem}\label{thm:1}
  If assumptions \ref{asp:1} and \ref{asp:2} hold, then as
  $n\rightarrow\infty$ and $r\rightarrow\infty$, $\tilde{{\boldsymbol{\beta}}}$ is
  consistent to $\hat{{\boldsymbol{\beta}}}_{{{\textnormal{\tiny MLE}}}}$ in conditional probability,
  given ${\mathcal{F}_n}$ in probability. Moreover, the rate of convergence is
  $r^{-1/2}$. That is, with probability approaching one, for any
  $\epsilon>0$, there exists a finite $\Delta_\epsilon$ and
  $r_\epsilon$ such that
  \begin{equation}\label{eq:18}
    P(\|\tilde{{\boldsymbol{\beta}}}-\hat{{\boldsymbol{\beta}}}_{{{\textnormal{\tiny MLE}}}}\|\ge
    r^{-1/2}\Delta_\epsilon|{\mathcal{F}_n})<\epsilon
  \end{equation}
  for all $r>r_\epsilon$.
\end{theorem}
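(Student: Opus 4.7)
The plan is to exploit the concavity of the weighted log-likelihood $\ell^*(\bbeta)$ and apply a standard argmax-on-a-shrinking-ball argument. Set $\gamma_r(\bu) = \ell^*(\hat{\bbeta}_{\mle} + r^{-1/2}\bu) - \ell^*(\hat{\bbeta}_{\mle})$, which is concave in $\bu$ and maximized at $\hat{\bu} = r^{1/2}(\tilde{\bbeta} - \hat{\bbeta}_{\mle})$. By concavity, to conclude $\|\hat{\bu}\| < \Delta_\epsilon$ with conditional probability at least $1-\epsilon$, it suffices to show that for every large enough $\Delta$, $\gamma_r(\bu) < 0$ holds uniformly on the sphere $\|\bu\| = \Delta$ with conditional probability at least $1-\epsilon$, on a full-data event of probability approaching one. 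A second-order Taylor expansion of $\gamma_r$ around $\hat{\bbeta}_{\mle}$ gives
\begin{equation*}
  \gamma_r(\bu) = r^{-1/2}\, \ell^{*\prime}(\hat{\bbeta}_{\mle})^T \bu
  \;-\; \tfrac{1}{2r}\, \bu^T \big[-\ell^{*\prime\prime}(\bbeta^{\#})\big]\bu,
\end{equation*}
where $\bbeta^{\#}$ lies on the segment joining $\hat{\bbeta}_{\mle}$ and $\hat{\bbeta}_{\mle}+r^{-1/2}\bu$.

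The central computation is the conditional moment analysis of the two Taylor coefficients after dividing by $n$. Writing $\tilde{U}^* = n^{-1}\ell^{*\prime}$ and $\tilde{M}^* = -n^{-1}\ell^{*\prime\prime}$, direct calculation using the fact that the subsample is drawn with replacement according to $\{\pi_i\}$ gives $\e[\tilde{U}^*(\hat{\bbeta}_{\mle})\mid\Fn] = 0$ (because $\hat{\bbeta}_{\mle}$ solves the full-data score equation) and
\begin{equation*}
  \e\big[\|\tilde{U}^*(\hat{\bbeta}_{\mle})\|^2 \,\big|\, \Fn\big]
  \;=\; \frac{1}{n^2 r}\sum_{j=1}^n \frac{\{y_j-p_j(\hat{\bbeta}_{\mle})\}^2\|\x_j\|^2}{\pi_j}
  \;\le\; \frac{1}{n^2 r}\sum_{j=1}^n \frac{\|\x_j\|^2}{\pi_j}
  \;=\; O_P(r^{-1}),
\end{equation*}
by Assumption~\ref{asp:2} with $k=2$, so $\tilde{U}^*(\hat{\bbeta}_{\mle}) = O_{P|\Fn}(r^{-1/2})$. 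For the Hessian, $\e[\tilde{M}^*(\hat{\bbeta}_{\mle})\mid\Fn] = \M_X$, and the analogous second-moment bound via Assumption~\ref{asp:2} with $k=4$ yields $\tilde{M}^*(\hat{\bbeta}_{\mle}) = \M_X + O_{P|\Fn}(r^{-1/2})$, which by Assumption~\ref{asp:1} is eventually bounded below by $\tfrac{1}{2}\M_X$. To pass from $\hat{\bbeta}_{\mle}$ to the intermediate point $\bbeta^{\#}$, I use the elementary bound $|w_i(\bbeta) - w_i(\hat{\bbeta}_{\mle})| \le C\|\bbeta-\hat{\bbeta}_{\mle}\|\|\x_i\|$ combined with Assumption~\ref{asp:1}'s third-moment condition and Assumption~\ref{asp:2} to obtain $\sup_{\|\bbeta-\hat{\bbeta}_{\mle}\|\le r^{-1/2}\Delta}\|\tilde{M}^*(\bbeta) - \tilde{M}^*(\hat{\bbeta}_{\mle})\| = O_{P|\Fn}(r^{-1/2})$ uniformly over the shrinking ball.

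Combining the pieces, on the shrinking ball $\|\bu\|\le\Delta$,
\begin{equation*}
  \frac{r}{n}\gamma_r(\bu) \;=\; r^{1/2}\,\tilde{U}^*(\hat{\bbeta}_{\mle})^T\bu
  \;-\; \tfrac{1}{2}\bu^T\tilde{M}^*(\bbeta^{\#})\bu
  \;\le\; \|r^{1/2}\tilde{U}^*(\hat{\bbeta}_{\mle})\|\,\Delta \;-\; \tfrac{1}{4}\lambda_{\min}(\M_X)\,\Delta^2,
\end{equation*}
with $\Fn$-probability approaching one. Since $r^{1/2}\tilde{U}^*(\hat{\bbeta}_{\mle}) = O_{P|\Fn}(1)$, choosing $\Delta_\epsilon$ large makes the right side negative with conditional probability at least $1-\epsilon$, and concavity of $\gamma_r$ then forces $\hat{\bu}$ into the open ball of radius $\Delta_\epsilon$, which is exactly \eqref{eq:18}. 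The main obstacle I expect is the uniform Hessian control on the shrinking ball: getting a clean $O_{P|\Fn}$ bound for $\tilde{M}^*(\bbeta)-\M_X$ uniformly over $\|\bbeta-\hat{\bbeta}_{\mle}\|\le r^{-1/2}\Delta$ requires carefully coupling the Lipschitz bound on $w_i$ with the moment condition on $\pi_i^{-1}\|\x_i\|^3$, and doing so while tracking which statements are unconditional ($\Fn$-probability) and which are conditional on $\Fn$; the rest is essentially bookkeeping.
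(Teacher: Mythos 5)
Your proposal is correct, but it takes a genuinely different route from the paper. The paper argues in two stages: it first proves consistency $\|\tilde{\bbeta}-\hat{\bbeta}_{\mle}\|=o_{P|\Fn}(1)$ by showing $n^{-1}\ell^*(\bbeta)-n^{-1}\ell(\bbeta)\to0$ in conditional probability and invoking the argmax theorem (van der Vaart, Theorem 5.9) on the compact parameter space, and only then obtains the rate by Taylor-expanding the score equation $\dot\ell^*(\tilde{\bbeta})=0$ around $\hat{\bbeta}_{\mle}$ with an integral-form remainder and inverting $\tilde{\M}_X$. You instead run a one-shot Hjort--Pollard style argument: concavity of $\ell^*$ (each summand is $y\x^T\bbeta-\log(1+e^{\x^T\bbeta})$ times a positive weight, hence concave) lets you conclude that if $\gamma_r<0$ on the sphere $\|\bu\|=\Delta$ then the maximizer lies strictly inside the ball, so localization and the $r^{-1/2}$ rate come out simultaneously without a separate consistency step, without compactness of the parameter space, and without the argmax theorem. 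The analytic ingredients are the same in both proofs — your moment analysis of $\tilde{U}^*$ and $\tilde{M}^*$ is exactly the paper's Lemma~\ref{lem1} (Assumption~\ref{asp:2} with $k=2$ and $k=4$ plus Markov), and your Lipschitz control of the Hessian over the shrinking ball via $\frac{1}{nr}\sumr\|\x_i^*\|^3/\pi_i^*=O_{P|\Fn}(1)$ is the same computation as the paper's bound \eqref{eq:3}--\eqref{eq:53} on the third-derivative remainder (note this uses the third-moment condition of Assumption~\ref{asp:1}, not Assumption~\ref{asp:2}, a mislabeling the paper itself also makes). What each approach buys: yours is cleaner and self-contained for this specific concave objective, and as a byproduct shows the subsample maximizer exists in the interior with conditional probability tending to one; the paper's two-stage argument is more generic and would survive if the weighted objective were not globally concave. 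Your expansion $\tilde{M}^*(\hat{\bbeta}_{\mle})=\M_X+O_{P|\Fn}(r^{-1/2})$, eventual lower bound $\tfrac12\M_X$ on a full-data event of probability tending to one, and the final choice of $\Delta_\epsilon$ from $r^{1/2}\tilde{U}^*(\hat{\bbeta}_{\mle})=O_{P|\Fn}(1)$ are all sound, so the proof goes through.
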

The consistency result shows that the approximation error can be made
as small as possible by a large enough subsample size $r$, as the
approximation error is at the order of $O_{P|{\mathcal{F}_n}}(r^{-1/2})$. Here
the probability measure in $O_{P|{\mathcal{F}_n}}(\cdot)$ is the conditional
measure given ${\mathcal{F}_n}$. This result has some similarity to the
finite-sample result of the worst-case error bound for arithmetic
leveraging in linear regression \citep{Drineas:11}, but neither of
them gives the full distribution of the approximation error.

Besides consistency, we derive the asymptotic distribution of the
approximation error, and prove that the approximation error,
$\tilde{{\boldsymbol{\beta}}}-\hat{{\boldsymbol{\beta}}}_{{{\textnormal{\tiny MLE}}}}$, is asymptotically normal. To obtain
this result, we need an additional assumption below, which is required
by the Lindeberg-Feller central limit theorem.
\begin{assumption}\label{asp:3}
  There exists some $\delta>0$ such that
  $n^{-(2+\delta)}\sum_{i=1}^n{\pi_i^{-1-\delta}}{\|{\mathbf{x}}_i\|^{2+\delta}}=O_P(1)$.
\end{assumption}
The aforementioned three assumptions are essentially moment conditions
and are very general. For example, a sub-Gaussian distribution
\citep{buldygin1980sub} has finite moment generating function on
$\mathbb{R}$ and thus has finite moments up to any finite order. If
the distribution of each component of ${\mathbf{x}}$ belongs to the class of
sub-Gaussian distributions and the covariance matrix of ${\mathbf{x}}$ is
positive-definite, then all the conditions are satisfied by the
subsampling probabilities considered in this paper. The result of
asymptotic normality is presented in the following theorem.
\begin{theorem}\label{thm:2}
  If assumptions \ref{asp:1}, \ref{asp:2}, and \ref{asp:3} hold, then
  as $n\rightarrow\infty$ and $r\rightarrow\infty$, conditional on
  ${\mathcal{F}_n}$ in probability,
  \begin{equation}\label{normal}
    {\mathbf{V}}^{-1/2}(\tilde{{\boldsymbol{\beta}}}-\hat{{\boldsymbol{\beta}}}_{{{\textnormal{\tiny MLE}}}})
    \longrightarrow N(0,\mathbf{I})
  \end{equation}
  in distribution, where
  \begin{equation}\label{varorder}
    {\mathbf{V}}={\mathbf{M}}_X^{-1}{\mathbf{V}}_c{\mathbf{M}}_X^{-1}=O_{p}(r^{-1})
  \end{equation}
  and
  \begin{equation}
    {\mathbf{V}}_c=\frac{1}{rn^2}
    \sum_{i=1}^n\frac{\{y_i-p_i(\hat{{\boldsymbol{\beta}}}_{{{\textnormal{\tiny MLE}}}})\}^2{\mathbf{x}}_i{\mathbf{x}}_i^T}{\pi_i}.
    \label{eq:vc}
  \end{equation}
\end{theorem}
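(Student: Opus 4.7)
The plan is to start from the score equation for the subsample MLE, $\ell^{*\prime}(\tilde{\bbeta})=0$, and perform a one-step Taylor expansion about the full-data MLE $\hat{\bbeta}_{\mle}$:
\begin{equation*}
  0 \;=\; \ell^{*\prime}(\tilde{\bbeta}) \;=\; \ell^{*\prime}(\hat{\bbeta}_{\mle}) + \ell^{*\prime\prime}(\bar{\bbeta})\,(\tilde{\bbeta}-\hat{\bbeta}_{\mle}),
\end{equation*}
where $\bar{\bbeta}$ lies on the segment joining $\tilde{\bbeta}$ and $\hat{\bbeta}_{\mle}$. Solving formally gives $\tilde{\bbeta}-\hat{\bbeta}_{\mle} = -\{\ell^{*\prime\prime}(\bar{\bbeta})\}^{-1}\,\ell^{*\prime}(\hat{\bbeta}_{\mle})$. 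The two main ingredients are then: (i) a conditional CLT for $\ell^{*\prime}(\hat{\bbeta}_{\mle})$, and (ii) a conditional law of large numbers for $\ell^{*\prime\prime}(\bar{\bbeta})$, both followed by a conditional Slutsky argument.

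For (i), set $\eta_i^* = \{y_i^*-p_i^*(\hat{\bbeta}_{\mle})\}\x_i^*/\pi_i^*$. Conditional on $\Fn$, the vectors $\eta_1^*,\ldots,\eta_r^*$ are i.i.d. with mean $\sum_{j=1}^n\{y_j-p_j(\hat{\bbeta}_{\mle})\}\x_j=0$ (the full-data score vanishes at $\hat{\bbeta}_{\mle}$), and with common covariance $\sum_{j=1}^n\{y_j-p_j(\hat{\bbeta}_{\mle})\}^2\x_j\x_j^T/\pi_j = rn^2\V_c$. Hence $\ell^{*\prime}(\hat{\bbeta}_{\mle}) = r^{-1}\sum_{i=1}^r\eta_i^*$ has conditional variance $rn^2\V_c/r^2$, and I will apply the Lindeberg--Feller CLT conditional on $\Fn$. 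The Lindeberg condition reduces (by the standard moment bound $\mathrm{E}|X|^2\mathbf{1}\{|X|>t\}\le t^{-\delta}\mathrm{E}|X|^{2+\delta}$) to controlling $r^{-\delta/2}\mathrm{E}(\|\eta^*\|^{2+\delta}\mid\Fn) = r^{-\delta/2}\sum_{j=1}^n \pi_j^{-(1+\delta)}\|\x_j\|^{2+\delta}\{y_j-p_j(\hat{\bbeta}_{\mle})\}^{2+\delta}$, which is $O_P(n^{2+\delta})$ by Assumption~\ref{asp:3} and the boundedness of $|y_j-p_j|$; after normalizing by the conditional covariance (which is of order $n^2/r$ in each direction under Assumption~\ref{asp:1}), this tends to zero in probability. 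This yields $(rn^2\V_c/r^2)^{-1/2}\ell^{*\prime}(\hat{\bbeta}_{\mle})\Rightarrow N(0,\mathbf{I})$ conditionally on $\Fn$ in probability.

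For (ii), observe that $-\ell^{*\prime\prime}(\bbeta) = r^{-1}\sum_{i=1}^r w_i^*(\bbeta)\x_i^*(\x_i^*)^T/\pi_i^*$. Evaluated at $\hat{\bbeta}_{\mle}$, its conditional mean is $\sum_{j=1}^n w_j(\hat{\bbeta}_{\mle})\x_j\x_j^T = n\M_X$, and the conditional variance of each coordinate is bounded via Assumption~\ref{asp:2} with $k=4$, giving $-\ell^{*\prime\prime}(\hat{\bbeta}_{\mle}) = n\M_X + o_{P\mid\Fn}(n)$ by Chebyshev. To pass from $\hat{\bbeta}_{\mle}$ to the intermediate point $\bar{\bbeta}$, I will use the Lipschitz bound $|w_i^*(\bar{\bbeta})-w_i^*(\hat{\bbeta}_{\mle})|\le C\|\x_i^*\|\,\|\tilde{\bbeta}-\hat{\bbeta}_{\mle}\|$ together with Theorem~\ref{thm:1} (which gives $\|\tilde{\bbeta}-\hat{\bbeta}_{\mle}\| = O_{P\mid\Fn}(r^{-1/2})$) and Assumption~\ref{asp:2}, concluding that $-\ell^{*\prime\prime}(\bar{\bbeta})/n\to\M_X$ in conditional probability.

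Combining (i) and (ii) via the conditional version of Slutsky's theorem yields
\begin{equation*}
  \tilde{\bbeta}-\hat{\bbeta}_{\mle} = (n\M_X)^{-1}\ell^{*\prime}(\hat{\bbeta}_{\mle}) + o_{P\mid\Fn}(r^{-1/2}),
\end{equation*}
and since $(n\M_X)^{-1}\{rn^2\V_c/r^2\}(n\M_X)^{-1} = \M_X^{-1}\V_c\M_X^{-1} = \V$, the displayed asymptotic normality in \eqref{normal} follows, with $\V = O_p(r^{-1})$ from \eqref{varorder} being a direct consequence of $\V_c = O_P(r^{-1})$ and the positive-definiteness of the limit of $\M_X$. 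The most delicate step will be verifying the Lindeberg condition with care, because the relevant quantities are themselves random through $\Fn$; this is the only place where the extra moment condition in Assumption~\ref{asp:3} is actually used, and the argument must be carried out inside a probability statement in $\Fn$ (convergence in conditional distribution, in probability).
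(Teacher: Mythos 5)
Your proposal follows essentially the same route as the paper's proof: a Taylor expansion of the subsample score about $\hat{\bbeta}_{\mle}$, a conditional Lindeberg--Feller CLT for $\ell^{*\prime}(\hat{\bbeta}_{\mle})$ driven by Assumption~\ref{asp:3}, concentration of the normalized Hessian around $\M_X$ via Assumption~\ref{asp:2} with $k=4$, and a conditional Slutsky step. The only cosmetic differences are that the paper writes the Taylor remainder componentwise in integral form (which you should also do, since the mean value theorem does not hold for the vector-valued score with a single intermediate point $\bar{\bbeta}$) and controls the resulting third-derivative term through the bound $n^{-1}\sum_{i=1}^n\|\x_i\|^3=O_P(1)$, which comes from Assumption~\ref{asp:1} rather than Assumption~\ref{asp:2} as cited in your step (ii).
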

\begin{remark}
  Note that in Theorems~\ref{thm:1} and \ref{thm:2} we are
  approximating the full data MLE, and the results hold for the case
  of oversampling ($r>n$). However, this scenario is not practical
  because it is more computationally intense than using the full
  data. Additionally, the distance between $\hat{{\boldsymbol{\beta}}}_{{{\textnormal{\tiny MLE}}}}$ and
  ${\boldsymbol{\beta}}_0$, the true parameter, is at the order of $O_P(n^{-1/2})$.
  Oversampling does not result in any gain in terms of estimating the
  true parameter. For aforementioned reasons, the scenario of
  oversampling is not of our interest and we focus on the scenario
  that $r$ is much smaller than $n$, typically, $n-r\rightarrow\infty$
  or $r/n\rightarrow0$.
\end{remark}

Result~\eqref{normal} shows that the distribution of
$\tilde{{\boldsymbol{\beta}}}-\hat{{\boldsymbol{\beta}}}_{{{\textnormal{\tiny MLE}}}}$ given ${\mathcal{F}_n}$ can be approximated
by that of ${\mathbf{u}}$, a normal random variable with distribution
$N({\bf 0},{\mathbf{V}})$. In other words, the probability
$P(r^{1/2}\|\tilde{{\boldsymbol{\beta}}}-\hat{{\boldsymbol{\beta}}}_{{{\textnormal{\tiny MLE}}}}\|\ge \Delta|{\mathcal{F}_n})$ can
be approximated by $P(r^{1/2}\|{\mathbf{u}}\|\ge\Delta|{\mathcal{F}_n})$ for any
$\Delta$. To facilitate the discussion, we write result~\eqref{normal} as
\begin{equation}\label{asnormal}
  \tilde{{\boldsymbol{\beta}}}-\hat{{\boldsymbol{\beta}}}_{{{\textnormal{\tiny MLE}}}}|{\mathcal{F}_n}\overset{a}{\sim} {\mathbf{u}},
\end{equation}
where $\overset{a}{\sim}$ means the distributions of the two terms are
asymptotically the same. This result is more statistically informative
than a worst-case error bound for the approximation error
$\tilde{{\boldsymbol{\beta}}}-\hat{{\boldsymbol{\beta}}}_{{{\textnormal{\tiny MLE}}}}$. Moreover, this result gives
direct guidance on how to reduce the approximation error while an
error bound does not, because a smaller bound does not necessarily mean
a smaller approximation error.

Although the distribution of $\tilde{{\boldsymbol{\beta}}}-\hat{{\boldsymbol{\beta}}}_{{{\textnormal{\tiny MLE}}}}$
given ${\mathcal{F}_n}$ can be approximated by that of ${\mathbf{u}}$, this does not
necessarily imply that
${\mathrm{E}}(\|\tilde{{\boldsymbol{\beta}}}-\hat{{\boldsymbol{\beta}}}_{{{\textnormal{\tiny MLE}}}}\|^2|{\mathcal{F}_n})$ is close to
${\mathrm{E}}(\|{\mathbf{u}}\|^2|{\mathcal{F}_n})$. ${\mathrm{E}}(\|{\mathbf{u}}\|^2|{\mathcal{F}_n})$ is an asymptotic MSE (AMSE)
of $\tilde{\boldsymbol{\beta}}$ and it is always well defined. However, rigorously
speaking, ${\mathrm{E}}(\|\tilde{{\boldsymbol{\beta}}}-\hat{{\boldsymbol{\beta}}}_{{{\textnormal{\tiny MLE}}}}\|^2|{\mathcal{F}_n})$, or any
conditional moment of $\tilde{{\boldsymbol{\beta}}}$, is undefined, because there is
a nonzero probability that $\tilde{{\boldsymbol{\beta}}}$ based on a subsample does
not exist. The same problem exists in subsampling estimators for the
OLS in linear regression. To address this issue, we define
$\tilde{{\boldsymbol{\beta}}}$ to be {\bf 0} when the MLE based on a subsample does
not exist. Under this definition, if $\tilde{{\boldsymbol{\beta}}}$ is uniformly
integrable under the conditional measure given ${\mathcal{F}_n}$,
$r^{1/2} \{{\mathrm{E}}(\|\tilde{{\boldsymbol{\beta}}} - \hat{{\boldsymbol{\beta}}}_{{{\textnormal{\tiny MLE}}}}\|^2|{\mathcal{F}_n}) -
{\mathrm{E}}(\|{\mathbf{u}}\|^2|{\mathcal{F}_n})\} \rightarrow0$ in probability.

Results in Theorems~\ref{thm:1} and \ref{thm:2} are distributional
results conditional on the observed data, which fulfill our primary
goal of approximating the full data MLE $\hat{{\boldsymbol{\beta}}}_{{{\textnormal{\tiny MLE}}}}$.
Conditional inference is quite common in statistics, and the most
popular method is the Bootstrap \citep{Efron:79,
  efron1994introduction}. 
The Bootstrap (nonparametric) is the uniform subsampling approach with
subsample size equaling the full data sample size. If $\pi_i=1/n$ and
$r=n$, then results in Theorems~\ref{thm:1} and \ref{thm:2} reduce to
the asymptotic results for the Bootstrap.  However, the Bootstrap and
the subsampling method in the paper have very distinct goals. The
Bootstrap focuses on approximating complicated distributions and are
used when explicit solutions are unavailable, while the subsampling
method considered here has a primary motivation to achieve feasible
computation and is used even closed-form solutions are available.

\section{Optimal Subsampling Strategies}
\label{sec:appr-optim-subs}
To implement Algorithm 1, one has to specify the subsampling
probability (SSP) $\boldsymbol{\pi}=\{\pi_i\}_{i=1}^{n}$ for the full
data. An easy choice is to use the uniform SSP
$\boldsymbol{\pi}^{\mathrm{UNI}}=\{\pi_i=n^{-1}\}_{i=1}^{n}$. However, an
algorithm with the uniform SSP may not be ``optimal'' and a nonuniform
SSP may have a better performance. In this section, we propose more
efficient subsampling procedures by choosing nonuniform $\pi_i$'s to
``minimize'' the asymptotic variance-covariance matrix ${\mathbf{V}}$ in
\eqref{varorder}. However, since ${\mathbf{V}}$ is a matrix, the meaning of
``minimize'' needs to be defined. We adopt the idea of the
$A$-optimality from optimal design of experiments and use the trace of
a matrix to induce a complete ordering of the variance-covariance
matrices \citep{kiefer1959}. It turns out that this approach is
equivalent to minimizing the asymptotic MSE of the resultant
estimator. Since this \textbf{o}ptimal \textbf{s}ubsampling procedure
is \textbf{m}otivated from the \textbf{A}-optimality
\textbf{c}riterion, we call our method the OSMAC.

\subsection{Minimum Asymptotic MSE of $\tilde{\boldsymbol{\beta}}$}\label{sec:mMSE}
From the result in Theorem~\ref{thm:2}, the asymptotic MSE of
$\tilde{{\boldsymbol{\beta}}}$ is equal to the trace of ${\mathbf{V}}$, namely,
\begin{equation} \label{eq:22}
  \mathrm{AMSE}(\tilde{{\boldsymbol{\beta}}})={\mathrm{E}}(\|{\mathbf{u}}\|^2|{\mathcal{F}_n})=\mathrm{tr}({\mathbf{V}}).
\end{equation}
From \eqref{varorder}, ${\mathbf{V}}$ depends on $\{\pi_i\}_{i=1}^n$, and clearly,
$\{\pi_i=n^{-1}\}_{i=1}^{n}$ may not produce the smallest value of
$\mathrm{tr}({\mathbf{V}})$. The key idea of optimal subsampling is to choose nonuniform
SSP such that the $\mathrm{AMSE}(\tilde{{\boldsymbol{\beta}}})$ in \eqref{eq:22} is
minimized.  Since minimizing the trace of the (asymptotic)
variance-covariance matrix is called the $A$-optimality criterion
\citep{kiefer1959}, the resultant SSP is $A$-optimal in the language
of optimal design. The following theorem gives the $A$-optimal SSP
that minimizes the asymptotic MSE of $\tilde{\boldsymbol{\beta}}$.
\begin{theorem}\label{thm:3}
  In Algorithm 1, if the SSP is chosen such that
  \begin{equation}\label{eq:19}
    \pi_i^{\mathrm{mMSE}}=
    \frac{|y_i-p_i(\hat{{\boldsymbol{\beta}}}_{{{\textnormal{\tiny MLE}}}})|\|{\mathbf{M}}_X^{-1}{\mathbf{x}}_i\|}
    {\sum_{j=1}^n|y_j-p_j(\hat{{\boldsymbol{\beta}}}_{{{\textnormal{\tiny MLE}}}})|\|{\mathbf{M}}_X^{-1}{\mathbf{x}}_j\|},\;
    i=1,2,...,n,
  \end{equation}
  then the asymptotic MSE of $\tilde{\boldsymbol{\beta}}$, $\mathrm{tr}({\mathbf{V}})$, attains its
  minimum.
\end{theorem}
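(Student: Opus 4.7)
The plan is to reduce the matrix optimization to a scalar optimization over the simplex and then apply the Cauchy--Schwarz inequality. Concretely, I would first rewrite $\tr(\V)$ in a form that isolates each $\pi_i$. Using the cyclic property of the trace,
\begin{equation*}
  \tr(\V)=\tr(\M_X^{-1}\V_c\M_X^{-1})=\tr(\M_X^{-2}\V_c).
\end{equation*}
Substituting the expression for $\V_c$ from \eqref{eq:vc} and using $\tr(\M_X^{-2}\x_i\x_i^T)=\x_i^T\M_X^{-2}\x_i=\|\M_X^{-1}\x_i\|^2$ (which holds because $\M_X$, hence $\M_X^{-1}$, is symmetric by Assumption~\ref{asp:1}), this gives
\begin{equation*}
  \tr(\V)=\frac{1}{rn^2}\sumn\frac{\{y_i-p_i(\hat{\bbeta}_{\mle})\}^2\|\M_X^{-1}\x_i\|^2}{\pi_i}.
\end{equation*}

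Next, I would set $a_i=|y_i-p_i(\hat{\bbeta}_{\mle})|\,\|\M_X^{-1}\x_i\|$, so minimizing $\tr(\V)$ subject to the constraints $\pi_i\ge 0$ and $\sumn\pi_i=1$ becomes the scalar problem of minimizing $\sumn a_i^2/\pi_i$. By the Cauchy--Schwarz inequality,
\begin{equation*}
  \Big(\sumn a_i\Big)^2=\Big(\sumn\frac{a_i}{\sqrt{\pi_i}}\sqrt{\pi_i}\Big)^2\le\Big(\sumn\frac{a_i^2}{\pi_i}\Big)\Big(\sumn\pi_i\Big)=\sumn\frac{a_i^2}{\pi_i},
\end{equation*}
with equality if and only if $\sqrt{\pi_i}\propto a_i/\sqrt{\pi_i}$, i.e.\ $\pi_i\propto a_i$. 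Combined with the constraint $\sumn\pi_i=1$ this forces $\pi_i=a_i/\sum_{j=1}^n a_j$, which is exactly $\pi_i^{\mathrm{mMSE}}$ as defined in \eqref{eq:19}.

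I do not anticipate a serious obstacle: the argument is a textbook application of Cauchy--Schwarz once the trace is simplified. The one technicality worth a brief remark is the boundary behavior: if $a_i=0$ for some $i$ (for instance, at a point where $y_i=p_i(\hat{\bbeta}_{\mle})$), then $\pi_i^{\mathrm{mMSE}}=0$, which is only permissible because the corresponding term $a_i^2/\pi_i$ should be interpreted as $0$ (the $i$th summand contributes nothing). Provided at least one $a_i>0$ the denominator $\sum_j a_j$ is positive, $\pi_i^{\mathrm{mMSE}}$ is a valid probability vector, and the minimum value $\tr(\V)=(\sumn a_i)^2/(rn^2)$ is attained.
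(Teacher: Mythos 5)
Your proposal is correct and follows essentially the same route as the paper: both reduce $\tr(\V)$ to the scalar sum $\frac{1}{rn^2}\sum_{i=1}^n a_i^2/\pi_i$ with $a_i=|y_i-p_i(\hat{\bbeta}_{\mle})|\,\|\M_X^{-1}\x_i\|$ and then apply the Cauchy--Schwarz inequality with the splitting $a_i/\sqrt{\pi_i}\cdot\sqrt{\pi_i}$, identifying the same equality condition $\pi_i\propto a_i$. Your added remark on the degenerate case $a_i=0$ is a harmless refinement the paper omits.
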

As observed in (\ref{eq:19}), the optimal SSP
$\boldsymbol{\pi}^{\mathrm{mMSE}}=\{\pi_{i}^{\mathrm{mMSE}}\}_{i=1}^{n}$
depends on data through both the covariates and the responses
directly.  For the covariates, the optimal SSP is larger for a larger
$\|{\mathbf{M}}_X^{-1}{\mathbf{x}}_i\|$, which is the square root of the $i$th diagonal
element of the matrix ${\mathbf{X}}{\mathbf{M}}_X^{-2}{\mathbf{X}}^T$.
{ The effect of the responses on the optimal SSP depends on
discrimination difficulties through the term
$|y_i-p_i(\hat{{\boldsymbol{\beta}}}_{{{\textnormal{\tiny MLE}}}})|$. Interestingly, if the full data MLE
$\hat{{\boldsymbol{\beta}}}_{{{\textnormal{\tiny MLE}}}}$ in $|y_i-p_i(\hat{{\boldsymbol{\beta}}}_{{{\textnormal{\tiny MLE}}}})|$ is replace
by a pilot estimate, then this term is exactly the same as the
probability in the local case-control (LCC) subsampling procedure in dealing with imbalanced data \citep{fithian2014local}. However, Poisson sampling and unweighted MLE were used in the LCC subsampling procedure.

To see the effect of the responses on the optimal SSP, let}
 $S_0=\{i:\ y_i=0\}$ and
$S_1=\{i:\ y_i=1\}$. The effect of $p_i(\hat{{\boldsymbol{\beta}}}_{{{\textnormal{\tiny MLE}}}})$ on
$\pi_i^{\mathrm{mMSE}}$ is positive for the $S_0$ set, i.e. a larger
$p_i(\hat{{\boldsymbol{\beta}}}_{{{\textnormal{\tiny MLE}}}})$ results in a larger
$\pi_i^{\mathrm{mMSE}}$, while the effect is negative for
the $S_1$ set, i.e. a larger $p_i(\hat{{\boldsymbol{\beta}}}_{{{\textnormal{\tiny MLE}}}})$ results in a
smaller $\pi_i^{\mathrm{mMSE}}$. The optimal subsampling approach is
more likely to select data points with smaller
$p_i(\hat{{\boldsymbol{\beta}}}_{{{\textnormal{\tiny MLE}}}})$'s when $y_i$'s are 1 and data points with
larger $p_i(\hat{{\boldsymbol{\beta}}}_{{{\textnormal{\tiny MLE}}}})$'s when $y_i$'s are 0. Intuitively,
it attempts to give preferences to data points that are more likely to
be mis-classified. This can also be seen in the expression of
$\mathrm{tr}({\mathbf{V}})$. From~\eqref{varorder} and \eqref{eq:vc},
\begin{align*}
  \mathrm{tr}({\mathbf{V}})=&\mathrm{tr}({\mathbf{M}}_X^{-1}{\mathbf{V}}_c{\mathbf{M}}_X^{-1})\\
  =&\frac{1}{rn^2}\mathrm{tr}\left[
     \sum_{i=1}^n\frac{\{y_i-p_i(\hat{{\boldsymbol{\beta}}}_{{{\textnormal{\tiny MLE}}}})\}^2
     {\mathbf{M}}_X^{-1}{\mathbf{x}}_i{\mathbf{x}}_i^T{\mathbf{M}}_X^{-1}}{\pi_i}\right]\\
  =&\frac{1}{rn^2}\sum_{i=1}^n\frac{\{y_i-p_i(\hat{{\boldsymbol{\beta}}}_{{{\textnormal{\tiny MLE}}}})\}^2
     \mathrm{tr}({\mathbf{M}}_X^{-1}{\mathbf{x}}_i{\mathbf{x}}_i^T{\mathbf{M}}_X^{-1})}{\pi_i}\\
  =&\frac{1}{rn^2}\sum_{i\in S_0}
     \frac{\{p_i(\hat{{\boldsymbol{\beta}}}_{{{\textnormal{\tiny MLE}}}})\}^2\|{\mathbf{M}}_X^{-1}{\mathbf{x}}_i\|^2}{\pi_i}
     +\frac{1}{rn^2}\sum_{i\in S_1}
     \frac{\{1-p_i(\hat{{\boldsymbol{\beta}}}_{{{\textnormal{\tiny MLE}}}})\}^2\|{\mathbf{M}}_X^{-1}{\mathbf{x}}_i\|^2}{\pi_i}.
\end{align*}

From the above equation, a larger value of $p_i(\hat{{\boldsymbol{\beta}}}_{{{\textnormal{\tiny MLE}}}})$
results in a larger value of the summation for the $S_0$ set, so a
larger value is assigned to $\pi_i$ to reduce this summation. On the
other hand for the $S_1$ set, a larger value of
$p_i(\hat{{\boldsymbol{\beta}}}_{\textnormal{\tiny{ MLE}}})$ results in a smaller
value of the summation, so a smaller value is assigned to
$\pi_i$.

The optimal subsampling approach also echos the result in
\cite{Silvapulle1981}, which gave a necessary and sufficient condition
for the existence of the MLE in logistic regression. To see this, let
\begin{equation*}
  F_0=\left\{\sum_{i\in S_0}k_i{\mathbf{x}}_i\Big|k_i>0\right\}
  \quad \text{ and }\quad
  F_1=\left\{\sum_{i\in S_1}k_i{\mathbf{x}}_i\Big|k_i>0\right\}.
\end{equation*}
Here, $F_0$ and $F_1$ are convex cones generated by covariates in the
$S_0$ and the $S_1$ sets, respectively.  \cite{Silvapulle1981} showed
that the MLE in logistic regression is uniquely defined if and only if
$F_0\cap F_1\neq\phi$, where $\phi$ is the empty set. From Theorem II
in \cite{Dines1926}, $F_0\cap F_1\neq\phi$ if and only if there does
{\it not} exist a ${\boldsymbol{\beta}}$ such that
\begin{equation}\label{eq:23}
  {\mathbf{x}}_i^T{\boldsymbol{\beta}}\le0 \text{ for all } i\in S_0, \quad
  {\mathbf{x}}_i^T{\boldsymbol{\beta}}\ge0 \text{ for all } i\in S_1,
\end{equation}
and at least one strict inequality holds.
The statement in \eqref{eq:23} is equivalent to the following statement in \eqref{eq:25} below.
\begin{equation}\label{eq:25}
  p_i({\boldsymbol{\beta}})\le0.5 \text{ for all } i\in S_0, \quad
  p_i({\boldsymbol{\beta}})\ge0.5 \text{ for all } i\in S_1.
\end{equation}

This means if there exist a ${\boldsymbol{\beta}}$ such that
$\{p_i({\boldsymbol{\beta}}),i\in S_0\}$ and $\{p_i({\boldsymbol{\beta}}),i\in S_1\}$ can be
separated, then the MLE does not exist. The optimal subsampling SSP
strives to increase the overlap of these two sets in the direction of
$p_i(\hat{{\boldsymbol{\beta}}}_{{{\textnormal{\tiny MLE}}}})$. Thus it decreases the probability of the
scenario that the MLE does not exist based on a resultant subsample.

\subsection{Minimum Asymptotic MSE of
  ${\mathbf{M}}_X\tilde{\boldsymbol{\beta}}$}\label{sec:mVc}
\label{sec:optimal-choice-b}
The optimal SSPs derived in the previous section require the
calculation of $\|{\mathbf{M}}_X^{-1}{\mathbf{x}}_i\|$ for $i=1,2,...,n$, which takes
$O(nd^2)$ time. In this section, we propose a modified optimality
criterion, under which calculating the optimal SSPs requires less
time.

To motivate the optimality criteria, we need to define the partial
ordering of positive definite matrices. For two positive definite
matrices $\mathbf{A}_1$ and $\mathbf{A}_2$,
$\mathbf{A}_1\ge \mathbf{A}_2$ if and only if
$\mathbf{A}_1-\mathbf{A}_2$ is a nonnegative definite matrix. This
definition is called the Loewner-ordering. Note that ${\mathbf{V}}={\mathbf{M}}_X^{-1}{\mathbf{V}}_c{\mathbf{M}}_X^{-1}$ in
\eqref{varorder} depends on $\boldsymbol{\pi}$ through ${\mathbf{V}}_c$ in
\eqref{eq:vc}, and ${\mathbf{M}}_X$ does not depend on $\boldsymbol{\pi}$. For two given
SSPs $\boldsymbol{\pi}^{(1)}$ and $\boldsymbol{\pi}^{(2)}$,
${\mathbf{V}}(\boldsymbol{\pi}^{(1)})\leq {\mathbf{V}}(\boldsymbol{\pi}^{(2)})$ if and only if
${\mathbf{V}}_c(\boldsymbol{\pi}^{(1)})\leq {\mathbf{V}}_c(\boldsymbol{\pi}^{(2)})$.
This gives us guidance to simplify the optimality criterion. Instead of focusing on the more complicated matrix ${\mathbf{V}}$, we define an alternative optimality criterion by focusing on ${\mathbf{V}}_c$. Specifically, instead of minimizing $\mathrm{tr}({\mathbf{V}})$ as in Section~\ref{sec:mMSE}, we choose to minimize $\mathrm{tr}({\mathbf{V}}_c)$. The primary goal of this alternative  optimality criterion is to further reduce the computing time. 

The following theorem gives the optimal SSP that minimizes the trace of ${\mathbf{V}}_c$.
\begin{theorem}\label{thm:4}
  In Algorithm 1, if the SSP is chosen such that
  \begin{equation}\label{optimalBpi}
    \pi_i^{\mathrm{mVc}}=\frac{|y_i-p_i(\hat{{\boldsymbol{\beta}}}_{{{\textnormal{\tiny MLE}}}})|\|{\mathbf{x}}_i\|}
    {\sum_{j=1}^n|y_j-p_j(\hat{{\boldsymbol{\beta}}}_{{{\textnormal{\tiny MLE}}}})|\|{\mathbf{x}}_j\|},\ i=1,2, ..., n,
  \end{equation}
  then $\mathrm{tr}({\mathbf{V}}_c)$, attains its minimum.
\end{theorem}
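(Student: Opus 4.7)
The plan is to reduce $\tr(\V_c)$ to a scalar weighted sum and then apply the Cauchy--Schwarz inequality to minimize it under the constraint $\sum_{i=1}^n \pi_i = 1$. By linearity of trace and the cyclic property, the expression in \eqref{eq:vc} gives
\begin{equation*}
\tr(\V_c) = \frac{1}{rn^2} \sum_{i=1}^n \frac{\{y_i - p_i(\hat{\bbeta}_{\mle})\}^2 \, \tr(\x_i \x_i^T)}{\pi_i} = \frac{1}{rn^2} \sum_{i=1}^n \frac{\{y_i - p_i(\hat{\bbeta}_{\mle})\}^2 \|\x_i\|^2}{\pi_i},
\end{equation*}
so the problem becomes a pure constrained optimization in $\sbf{\pi}$, with no matrix ordering subtleties once the $A$-criterion is applied to $\V_c$ rather than $\V$.

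Next I would set $a_i = |y_i - p_i(\hat{\bbeta}_{\mle})| \|\x_i\|$ and view the objective as $\sum_i a_i^2 / \pi_i$. By the Cauchy--Schwarz inequality,
\begin{equation*}
\left(\sum_{i=1}^n a_i\right)^2 = \left(\sum_{i=1}^n \frac{a_i}{\sqrt{\pi_i}} \cdot \sqrt{\pi_i}\right)^2 \leq \left(\sum_{i=1}^n \frac{a_i^2}{\pi_i}\right)\left(\sum_{i=1}^n \pi_i\right) = \sum_{i=1}^n \frac{a_i^2}{\pi_i},
\end{equation*}
using $\sum_i \pi_i = 1$. Equality in Cauchy--Schwarz holds if and only if $\sqrt{\pi_i}$ is proportional to $a_i/\sqrt{\pi_i}$, i.e., $\pi_i \propto a_i$. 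Normalizing yields exactly $\pi_i^{\mathrm{mVc}}$ as in \eqref{optimalBpi}, and plugging back shows the lower bound $(\sum_i a_i)^2$ is attained, so $\tr(\V_c)$ is minimized.

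There is essentially no hard step here; the main thing to be careful about is the degenerate case in which some $a_i = 0$, which occurs when $y_i = p_i(\hat{\bbeta}_{\mle})$ exactly or $\x_i = \mathbf{0}$. In that situation $\pi_i^{\mathrm{mVc}} = 0$ and the corresponding summand $a_i^2/\pi_i$ should be interpreted as $0/0 = 0$, which is consistent with the Cauchy--Schwarz bound and causes no issue because such observations contribute nothing to $\tr(\V_c)$ regardless of $\pi_i > 0$. One could alternatively derive the same formula by a Lagrange multiplier argument (differentiating $\sum_i a_i^2/\pi_i - \lambda(\sum_i \pi_i - 1)$ gives $\pi_i \propto a_i$), but the Cauchy--Schwarz route is cleaner and simultaneously certifies global optimality without a separate second-order check.
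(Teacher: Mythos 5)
Your proof is correct and follows essentially the same route as the paper: the paper reduces $\tr(\V_c)$ to the scalar sum $\sum_i a_i^2/\pi_i$ with $a_i=|y_i-p_i(\hat{\bbeta}_{\mle})|\|\x_i\|$ and applies the Cauchy--Schwarz inequality under $\sum_i\pi_i=1$, exactly as you do (the paper's Theorem~\ref{thm:4} proof is explicitly stated to mirror that of Theorem~\ref{thm:3}). Your added remark on the degenerate case $a_i=0$ is a sensible extra precaution but does not change the argument.
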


It turns out that the alternative optimality criterion indeed greatly reduces the computing time. From Theorem~\ref{thm:4}, the effect of the covariates on
$\boldsymbol{\pi}^{\mathrm{mVc}}=\{\pi_i^{\mathrm{mVc}}\}_{i=1}^{n}$ is presented by $\|{\mathbf{x}}_i\|$, instead of $\|{\mathbf{M}}_X^{-1}{\mathbf{x}}_i\|$ as in $\boldsymbol{\pi}^{\mathrm{mMSE}}$. The computational benefit is obvious: it requires $O(nd)$ time to calculate $\|{\mathbf{x}}_i\|$ for $i=1,2,...,n$, which is significantly less than the required $O(nd^2)$ time to calculate
$\|{\mathbf{M}}_X^{-1}{\mathbf{x}}_i\|$ for $i=1,2,...,n$. 

Besides the computational benefit, this alternative criterion also enjoys  nice interpretations from the following aspects. First, the term $|y_i-p_i(\hat{{\boldsymbol{\beta}}}_{{{\textnormal{\tiny MLE}}}})|$ functions the same as in the case of $\boldsymbol{\pi}^{\mathrm{mMSE}}$. Hence all the nice interpretations and properties related to this term for $\boldsymbol{\pi}^{\mathrm{mMSE}}$ in Section~\ref{sec:mMSE} are true for $\boldsymbol{\pi}^{\mathrm{mVc}}$ in Theorem~\ref{thm:4}. Second, from \eqref{asnormal},
\begin{equation*}
  {\mathbf{M}}_X(\tilde{{\boldsymbol{\beta}}}-\hat{{\boldsymbol{\beta}}}_{{{\textnormal{\tiny MLE}}}})\big|{\mathcal{F}_n}\overset{a}
  {\sim} {\mathbf{M}}_X{\mathbf{u}}, \text{ where } {\mathbf{M}}_X{\mathbf{u}}\sim N({\bf 0},{\mathbf{V}}_c) \text{ given }{\mathcal{F}_n}.
\end{equation*}
This shows that $\mathrm{tr}({\mathbf{V}}_c)={\mathrm{E}}(\|{\mathbf{M}}_X{\mathbf{u}}\|^2|{\mathcal{F}_n})$ is the AMSE of
${\mathbf{M}}_X\tilde{{\boldsymbol{\beta}}}$ in approximating
${\mathbf{M}}_X\hat{{\boldsymbol{\beta}}}_{{{\textnormal{\tiny MLE}}}}$. Therefore, the SSP
$\boldsymbol{\pi}^{\mathrm{mVc}}$ is optimal in terms of minimizing the AMSE
of ${\mathbf{M}}_X\tilde{{\boldsymbol{\beta}}}$. Third, the alternative criterion  also
corresponds to the commonly used linear optimality (L-optimality)
criterion in optimal experimental design \citep[c.f. Chapter 10
of][]{atkinson2007optimum}. The L-optimality criterion minimizes the
trace of the product of the asymptotic variance-covariance matrix and
a constant matrix. Its aim is to improve the quality of prediction in
linear regression. For our problem, note that
$\mathrm{tr}({\mathbf{V}}_c)=\mathrm{tr}({\mathbf{M}}_X{\mathbf{V}}{\mathbf{M}}_X)=\mathrm{tr}({\mathbf{V}}{\mathbf{M}}_X^2)$ and ${\mathbf{V}}$ is the asymptotic
variance-covariance matrix of $\tilde{{\boldsymbol{\beta}}}$, so the SSP $\boldsymbol{\pi}^{\mathrm{mVc}}$ is L-optimal in the language of optimal design.

\section{Two-Step Algorithm}
\label{sec:two-step}
The SSPs in \eqref{eq:19} and \eqref{optimalBpi} depend on
$\hat{{\boldsymbol{\beta}}}_{{{\textnormal{\tiny MLE}}}}$, which is the full data MLE to be approximated,
so an exact OSMAC is not applicable directly.  We propose a two-step
algorithm to approximate the OSMAC. { In the first step, a
  subsample of $r_0$ is taken to get a pilot estimate of
  $\hat{{\boldsymbol{\beta}}}_{{{\textnormal{\tiny MLE}}}}$, which is then used to approximate the optimal
  SSPs for drawing the more informative second step subsample.} The
two-step algorithm is presented in Algorithm~\ref{alg:2}.

\begin{algorithm}
  \caption{Two-step Algorithm}\label{alg:2}
  \begin{itemize}
  \item \textbf{Step 1:} Run Algorithm~\ref{alg:1} with subsample size $r_0$ to
    obtain an estimate $\tilde{{\boldsymbol{\beta}}}_0$, using either the uniform
    SSP $\boldsymbol{\pi}^{\mathrm{UNI}}=\{n^{-1}\}_{i=1}^{n}$ or SSP
    $\{\pi_i^{\mathrm{prop}}\}_{i=1}^{n}$, where
    $\pi_i^{\mathrm{prop}}=(2n_0)^{-1}$ if $i\in S_0$ and
    $\pi_i^{\mathrm{prop}}=(2n_1)^{-1}$ if $i\in S_1$. Here, $n_0$ and
    $n_1$ are the numbers of elements in sets $S_0$ and $S_1$,
    respectively. Replace $\hat{{\boldsymbol{\beta}}}_{{{\textnormal{\tiny MLE}}}}$ with
    $\tilde{{\boldsymbol{\beta}}}_0$ in \eqref{eq:19} or \eqref{optimalBpi} to get
    an approximate optimal SSP corresponding to a chosen optimality
    criterion.
  \item \textbf{Step 2:} Subsample with replacement for a subsample of
    size ${r}$ with the approximate optimal SSP calculated in Step 1.
    Combine the samples from the two steps and obtain the estimate
    ${\breve{{\boldsymbol{\beta}}}}$ based on the total subsample of size $r_0+r$
    according to the Estimation step in Algorithm~\ref{alg:1}.
  \end{itemize}
\end{algorithm}

\begin{remark}
  In Step 1, for the $S_0$ and $S_1$ sets, different subsampling
  probabilities can be specified, each of which is equal to half of
  the inverse of the set size. The purpose is to balance the numbers
  of 0's and 1's in the responses for the subsample. If the full data
  is very imbalanced, the probability that the MLE exists for a
  subsample obtained using this approach is higher than that for a
  subsample obtained using uniform subsampling. This procedure is
  called the case-control sampling
  \citep{scott1986fitting,fithian2014local}. If the proportion of 1's
  is close to 0.5, the uniform SSP is preferable in Step 1 due to its
  simplicity.
\end{remark}
\begin{remark}
  As shown in Theorem \ref{thm:1}, $\tilde{{\boldsymbol{\beta}}}_0$ from Step 1
  approximates $\hat{{\boldsymbol{\beta}}}_{{{\textnormal{\tiny MLE}}}}$ accurately as long as $r_0$ is
  not too small. On the other hand, the efficiency of the two-step
  algorithm would decrease, if $r_0$ gets close to the total subsample size $r_0+r$ and ${r}$ is
  relatively small. We will need $r_0$ to be a small term compared with $r^{1/2}$, i.e., $r_0=o(r^{-1/2})$, in order to prove the consistency and asymptotically optimality of the two-step algorithm in Section~\ref{sec:asympt-prop}. 
\end{remark}

Algorithm~2 greatly reduces the computational cost compared to using
the full data. The major computing time is to approximate the optimal
SSPs which does not require iterative calculations on the full data.
Once the approximately optimal SSPs are available, the time to obtain
$\breve{{\boldsymbol{\beta}}}$ in the second step is $O({\zeta}rd^2)$ where
${\zeta}$ is the number of iterations of the iterative procedure in
the second step.  If the $S_0$ and $S_1$ sets are not separated, the
time to obtain $\tilde{{\boldsymbol{\beta}}}_0$ in the first step is
$O(n+\zeta_0r_0d^2)$ where $\zeta_0$ is the number of iterations of
the iterative procedure in the first step. To calculate the estimated
optimal SSPs, the required times are different for different optimal
SSPs. For ${\pi}_i^{\mathrm{mVc}}$, $i=1,...n$,
the required time is $O(nd)$. For ${\pi}_i^{\mathrm{mMSE}}$, $i=1,...,n$,
the required time is longer because they involve
${\mathbf{M}}_X=n^{-1}\sum_{i=1}^n w_i(\hat{{\boldsymbol{\beta}}}_{{{\textnormal{\tiny MLE}}}}){\mathbf{x}}_i{\mathbf{x}}_i^T$. If
$\hat{{\boldsymbol{\beta}}}_{{{\textnormal{\tiny MLE}}}}$ is replaced by $\tilde{{\boldsymbol{\beta}}}_0$ in
$w_i(\hat{{\boldsymbol{\beta}}}_{{{\textnormal{\tiny MLE}}}})$ and then the full data is used to calculate
an estimate of ${\mathbf{M}}_X$, the required time is $O(nd^2)$. Note that
${\mathbf{M}}_X$ can be estimated by
$\tilde{{\mathbf{M}}}_X^0 =(nr_0)^{-1}\sum_{i=1}^{r_0}(\pi_i^*)^{-1} w_i^*(\tilde{\boldsymbol{\beta}}_0)
{\mathbf{x}}_i^*({\mathbf{x}}_i^*)^T$
based on the selected subsample, for which the calculation only
requires $O(r_0d^2)$ time. However, we still need $O(nd^2)$ time to
approximate ${\pi}_i^{\mathrm{mMSE}}$ because they depend on $\|{\mathbf{M}}_X^{-1}{\mathbf{x}}_i\|$ for $i=1,2,...,n$.  Based
on aforementioned discussions, the time complexity of Algorithm~2 with
$\boldsymbol{\pi}^{\mathrm{mVc}}$ is $O(nd+\zeta_0r_0d^2+{\zeta}rd^2)$, and the time complexity of Algorithm~2 with $\boldsymbol{\pi}^{\mathrm{mMSE}}$ is $O(nd^2+\zeta_0r_0d^2+{\zeta}rd^2)$. Considering the case of a very
large $n$ such that $d$, $\zeta_0$, ${\zeta}$, $r_0$ and ${r}$ are all
much smaller than $n$, these time complexities are $O(nd)$ and
$O(nd^2)$, respectively.

  \subsection{Asymptotic properties}
  \label{sec:asympt-prop}
 For the estimator obtained from Algorithm 2 based on the SSPs
  $\boldsymbol{\pi}^{\mathrm{mVc}}$, we derive its asymptotic properties
  under the following assumption.
  \begin{assumption}\label{asp:4}
    The covariate distribution satisfies that ${\mathrm{E}}({\mathbf{x}}{\mathbf{x}}^T)$ is positive definite 
    and ${\mathrm{E}}(e^{\mathbf{a}^T{\mathbf{x}}})<\infty$ for any
    $\mathbf{a}\in\mathbb{R}^d$.
  \end{assumption}
  Assumption~\ref{asp:4} imposes two conditions on covariate
  distribution. The first condition ensures that the asymptotic
  covariance matrix is full rank. 
  The second condition requires that covariate distributions have
  light tails. Clearly, the class of sub-Gaussian distributions
  \citep{buldygin1980sub} satisfy this condition. The main result in
  \cite{owen2007infinitely} also requires this condition.

  We establish the consistency and asymptotic normality of
  ${\breve{{\boldsymbol{\beta}}}}$ based on $\boldsymbol{\pi}^{\mathrm{mVc}}$. The results are
  presented in the following two theorems.
  \begin{theorem}\label{thm:5}
    Let $r_0r^{-1/2}\rightarrow0$. Under Assumption~\ref{asp:4}, if the estimate
    $\tilde{{\boldsymbol{\beta}}}_0$ based on the first step sample exists, then,
    as ${r}\rightarrow\infty$ and $n\rightarrow\infty$, with
    probability approaching one, for any $\epsilon>0$, there exists a
    finite $\Delta_\epsilon$ and $r_\epsilon$ such that
  \begin{equation*}
  P(\|{\breve{{\boldsymbol{\beta}}}}-\hat{{\boldsymbol{\beta}}}_{{{\textnormal{\tiny MLE}}}}\|\ge
  {r}^{-1/2}\Delta_\epsilon|{\mathcal{F}_n})<\epsilon
\end{equation*}
for all ${r}>r_\epsilon$.
\end{theorem}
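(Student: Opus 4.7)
The plan is to recycle Theorem~1 twice: once to show that the first-step pilot $\tilde{\bbeta}_0$ converges to $\hat{\bbeta}_{\mle}$, and once to control the second-step estimator $\breve{\bbeta}$ under the data-dependent SSP $\tilde{\pi}_i^{\mathrm{mVc}}=|y_i-p_i(\tilde{\bbeta}_0)|\|\x_i\|/\sum_j|y_j-p_j(\tilde{\bbeta}_0)|\|\x_j\|$. For Step~1, Assumption~4 implies Assumptions~1--3 for both the uniform SSP and the proportional SSP (the latter provided $n_0/n$ and $n_1/n$ converge to positive limits, which holds with probability one under a non-degenerate response distribution), so Theorem~1 gives $\|\tilde{\bbeta}_0-\hat{\bbeta}_{\mle}\|=O_{P|\Fn}(r_0^{-1/2})$. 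For Step~2, I would argue conditionally on $(\Fn,\tilde{\bbeta}_0)$: given this information $\tilde{\pi}^{\mathrm{mVc}}$ becomes a fixed SSP, so if its moment conditions hold then an analogue of Theorem~1 applies to the combined $(r_0+r)$-sample. The hypothesis $r_0r^{-1/2}\to0$ ensures that the first-step contributions to the weighted score and Hessian are negligible at scale $r^{-1/2}$, so $\breve{\bbeta}$ inherits the $r^{-1/2}$-rate from the second step.

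The most delicate point is verifying Assumption~2 (and Assumption~3) for the random SSP $\tilde{\pi}^{\mathrm{mVc}}$. Substituting the definition gives
\[
\frac{1}{n^2}\sum_{i=1}^n\frac{\|\x_i\|^k}{\tilde{\pi}_i^{\mathrm{mVc}}}
=\Bigl(\frac{1}{n}\sum_{j=1}^n|y_j-p_j(\tilde{\bbeta}_0)|\|\x_j\|\Bigr)
 \Bigl(\frac{1}{n}\sum_{i=1}^n\frac{\|\x_i\|^{k-1}}{|y_i-p_i(\tilde{\bbeta}_0)|}\Bigr).
\]
The first factor is dominated by $n^{-1}\sum_j\|\x_j\|=O_P(1)$. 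The second factor is the problematic one: $1/|y_i-p_i(\bbeta)|$ equals $1+e^{-\x_i^T\bbeta}$ on $S_0$ and $1+e^{\x_i^T\bbeta}$ on $S_1$, both unbounded in $\x_i$. Here the exponential moment condition in Assumption~4 is essential: on the event that $\tilde{\bbeta}_0$ lies in a fixed compact neighborhood $\mathcal{N}$ of $\hat{\bbeta}_{\mle}$, each such term is dominated by $\|\x_i\|^{k-1}(1+e^{C\|\x_i\|})$ for a constant $C$ depending only on $\mathcal{N}$, and Assumption~4 yields $\e(\|\x\|^{k-1}e^{C\|\x\|})<\infty$, giving a uniform $O_P(1)$ bound. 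Assumption~3 is handled identically with $k$ replaced by $2+\delta$.

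With the SSP moment conditions in force, the argument mirrors that of Theorem~1: expand the weighted score at $\hat{\bbeta}_{\mle}$ as a sum of mean-zero (conditional on $\Fn$ and the Step~1 sample) independent terms, bound its conditional variance by $\tr(\V)=O_P(r^{-1})$ through Theorem~2 applied with $\tilde{\pi}^{\mathrm{mVc}}$, show that the weighted Hessian at any $\bbeta$ in an $O(r^{-1/2})$-neighborhood of $\hat{\bbeta}_{\mle}$ converges in conditional probability to $\M_X$, and conclude via convexity that $\|\breve{\bbeta}-\hat{\bbeta}_{\mle}\|=O_{P|\Fn}(r^{-1/2})$. The main obstacle will be the coupling between the Step~2 draws and $\tilde{\bbeta}_0$: since the second-step SSP depends on the first-step sample, the natural device is to condition on $(\Fn,\tilde{\bbeta}_0)$, establish the bound on the high-probability event that $\tilde{\bbeta}_0\in\mathcal{N}$ and the uniform moment bounds hold, and then remove the conditioning using $P(\tilde{\bbeta}_0\in\mathcal{N}|\Fn)\to1$ from the Step~1 analysis together with $r_0r^{-1/2}\to0$ to absorb any cross-term contributions from the first-step summands.
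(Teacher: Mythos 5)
Your overall strategy coincides with the paper's: condition on $(\Fn,\tilde{\bbeta}_0)$ so the plug-in SSP becomes fixed, verify the moment conditions for $\tilde{\pi}^{\mathrm{mVc}}$ by factoring $n^{-2}\sum_i\|\x_i\|^k/\tilde{\pi}_i$ into the two averages you display and absorbing $1/|y_i-p_i(\tilde{\bbeta}_0)|\le 1+e^{\x_i^T\tilde{\bbeta}_0}+e^{-\x_i^T\tilde{\bbeta}_0}$ with the exponential moment condition of Assumption~\ref{asp:4}, use $r_0r^{-1/2}\to0$ to make the first-step contribution to the weighted score and Hessian negligible at scale $r^{-1/2}$, and then rerun the Theorem~\ref{thm:1} machinery (mean-zero score with $O_P(r^{-1})$ conditional variance, Hessian consistent for $\M_X$, convexity plus a Taylor expansion). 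This is exactly the content of the paper's Lemmas~\ref{lem2} and \ref{lem3} and the ensuing argument.

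There is, however, a genuine gap in how you control $1/|y_i-p_i(\tilde{\bbeta}_0)|$. You establish the exponential domination only on the event $\{\tilde{\bbeta}_0\in\mathcal{N}\}$ for a fixed compact neighborhood $\mathcal{N}$ of $\hat{\bbeta}_{\mle}$, and you remove this conditioning by invoking $P(\tilde{\bbeta}_0\in\mathcal{N}\mid\Fn)\to1$, which you obtain by applying Theorem~\ref{thm:1} to the Step~1 subsample. That consistency argument needs $r_0\to\infty$, which Theorem~\ref{thm:5} deliberately does not assume: its only hypothesis on the pilot is that $\tilde{\bbeta}_0$ exists, and the remark following the theorem emphasizes that $r_0\to\infty$ is not required. (Your Step~1 use of Theorem~\ref{thm:1} also smuggles in an extra side condition on $n_0/n$ and $n_1/n$ for the proportional SSP that is not part of the theorem.) The repair, and what the paper actually does, is to use compactness of the parameter space directly: with $\lambda=\sup_{\bbeta\in\Theta}\|\bbeta\|$ one has $\|\tilde{\bbeta}_0\|\le\lambda$ deterministically, hence $1/|y_i-p_i(\tilde{\bbeta}_0)|\le 1+2e^{\lambda\|\x_i\|}\le 3e^{\lambda\|\x_i\|}$ for every realization of the pilot, so the $O_P(1)$ moment bounds (the paper's Lemma~\ref{lem2}) hold uniformly over all possible $\tilde{\bbeta}_0$, with no event to remove and no appeal to pilot consistency. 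With that substitution your argument closes and reproduces the paper's proof; as written, it proves a weaker statement that additionally presupposes a consistent first-step estimator.
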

  In Theorem~\ref{thm:5}, as long as the first step sample estimate $\tilde{{\boldsymbol{\beta}}}_0$ exist, the two step algorithm produces a consistent estimator. We do not even require that $r_0\rightarrow\infty$. If the first step subsample $r_0\rightarrow\infty$, then from Theorem~\ref{thm:1}, $\tilde{{\boldsymbol{\beta}}}_0$ exists with probability approaching one. Under this scenario, the resultant two-step estimator is optimal in the sense of Theorem~\ref{thm:4}. We present this result in the following theorem.

\begin{theorem}\label{thm:6}
  Assume that $r_0r^{-1/2}\rightarrow0$. Under Assumption \ref{asp:4}, as $r_0\rightarrow\infty$,
  ${r}\rightarrow\infty$, and $n\rightarrow\infty$, conditional on
  ${\mathcal{F}_n}$ and $\tilde{{\boldsymbol{\beta}}}_0$,
  \begin{equation*}
    {\mathbf{V}}^{-1/2}({\breve{{\boldsymbol{\beta}}}}-\hat{{\boldsymbol{\beta}}}_{{{\textnormal{\tiny MLE}}}})
    \longrightarrow N(0,\mathbf{I})
  \end{equation*}
  in distribution, in which ${\mathbf{V}}={\mathbf{M}}_X^{-1}{\mathbf{V}}_c{\mathbf{M}}_X^{-1}$ with ${\mathbf{V}}_c$ having the expression of
  \begin{equation}\label{eq:52}
    {\mathbf{V}}_c=\frac{1}{rn^2}
    \left\{\sum_{i=1}^n|y_i-p_i(\hat{{\boldsymbol{\beta}}}_{{{\textnormal{\tiny MLE}}}})|\|{\mathbf{x}}_i\|\right\}
    \left\{
\sum_{i=1}^n\frac{|y_i-p_i(\hat{{\boldsymbol{\beta}}}_{{{\textnormal{\tiny MLE}}}})|{\mathbf{x}}_i{\mathbf{x}}_i^T}{\|{\mathbf{x}}_i\|}\right\}.
\end{equation}
\end{theorem}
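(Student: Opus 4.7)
The plan is to condition on both $\Fn$ and $\tilde\bbeta_0$; under this conditioning, the approximate optimal SSP $\tilde\pi^{\mathrm{mVc}}_i = |y_i-p_i(\tilde\bbeta_0)|\|\x_i\|/\Phi_0$ with $\Phi_0 = \sum_j |y_j-p_j(\tilde\bbeta_0)|\|\x_j\|$ becomes deterministic, so the second-step subsample of size $r$ is an ordinary Algorithm~1 draw with a fixed SSP and is independent of the first-step subsample. Let $S_1(\bbeta)$ and $S_2(\bbeta)$ denote the weighted scores based on the first and second steps, respectively. The combined estimator $\breve\bbeta$ solves $S_1(\bbeta)+S_2(\bbeta)=0$, and exactly as in the proof of Theorem~2 a Taylor expansion around $\hat\bbeta_{\mle}$ gives $\breve\bbeta-\hat\bbeta_{\mle}\approx -\{n(r_0+r)\M_X\}^{-1}\{S_1(\hat\bbeta_{\mle})+S_2(\hat\bbeta_{\mle})\}$. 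It thus suffices to (i) derive the asymptotic distribution of the second-step score, (ii) show the first-step score is asymptotically negligible, and (iii) identify the resulting covariance with the $\V_c$ in~\eqref{eq:52}.

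For (i), I would first verify Assumptions~\ref{asp:1}--\ref{asp:3} with $\pi_i$ replaced by $\tilde\pi^{\mathrm{mVc}}_i$, so that Theorem~\ref{thm:2} can be invoked conditionally on $\Fn$ and $\tilde\bbeta_0$. Assumption~\ref{asp:1} is unaffected. For Assumptions~\ref{asp:2}--\ref{asp:3}, the delicate quantity is $1/\tilde\pi^{\mathrm{mVc}}_i \propto 1/\{|y_i-p_i(\tilde\bbeta_0)|\|\x_i\|\}$, which may blow up when $|\x_i^T\tilde\bbeta_0|$ is large. Using the elementary inequality $|y_i-p_i(\bbeta)|^{-1}\le 1+e^{|\x_i^T\bbeta|}$, Assumption~\ref{asp:4}'s exponential moment condition, and compactness of the parameter space, I would control $n^{-2}\sum_i\|\x_i\|^k/\tilde\pi^{\mathrm{mVc}}_i$ (and its $(2+\delta)$ analogue) as $O_P(1)$, since $\Phi_0/n=O_P(1)$ and $n^{-1}\sum_i\|\x_i\|^{k-1}/|y_i-p_i(\tilde\bbeta_0)|$ is bounded in probability by a finite exponential moment of $\x$. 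Theorem~\ref{thm:2} then yields the asymptotic normality of the second-step-only estimator with covariance $\V^{(2)}=\M_X^{-1}\V_c^{(2)}\M_X^{-1}$ and $\V_c^{(2)}=(rn^2)^{-1}\sum_i\{y_i-p_i(\hat\bbeta_{\mle})\}^2\x_i\x_i^T/\tilde\pi^{\mathrm{mVc}}_i$.

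For (iii), I exploit $r_0\to\infty$ together with Theorem~\ref{thm:1} to conclude $\tilde\bbeta_0-\hat\bbeta_{\mle}=o_P(1)$ conditional on $\Fn$. Checking the two cases $y_i=0$ and $y_i=1$ separately gives the pointwise identity $\{y_i-p_i(\hat\bbeta_{\mle})\}^2/|y_i-p_i(\tilde\bbeta_0)|\to |y_i-p_i(\hat\bbeta_{\mle})|$, and the moment bounds from Assumption~\ref{asp:4} enable a dominated-convergence argument inside the empirical averages. This yields $rn^2\V_c^{(2)}\to \Phi\sum_i |y_i-p_i(\hat\bbeta_{\mle})|\x_i\x_i^T/\|\x_i\|$ in probability, matching \eqref{eq:52}. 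For (ii), verifying the same style of moment conditions on the step-1 SSPs gives $\var(S_1(\hat\bbeta_{\mle})|\Fn)=O_P(r_0 n^2)$ and hence $\|S_1(\hat\bbeta_{\mle})\|=O_P(n\sqrt{r_0})$. Dividing by $(r_0+r)n\sim rn$ gives a contribution of order $\sqrt{r_0}/r$, and under $r_0 r^{-1/2}\to 0$ one has $r_0/r\to 0$, so $\sqrt{r_0}/r=(r_0/r)^{1/2}r^{-1/2}=o(r^{-1/2})$. Thus the first-step piece is negligible relative to the $r^{-1/2}$-rate of the second step, and Slutsky's theorem completes the proof.

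The main obstacle is the moment verification in step (i): unlike the fixed SSPs of Theorem~\ref{thm:2}, $\tilde\pi^{\mathrm{mVc}}$ is data-driven and can be arbitrarily small at points with extreme $|\x_i^T\tilde\bbeta_0|$. This is precisely why Assumption~\ref{asp:4} strengthens the polynomial moment conditions of Theorems~\ref{thm:1}--\ref{thm:2} to an exponential one; the bound $1/|y_i-p_i(\tilde\bbeta_0)|\le 1+e^{|\x_i^T\tilde\bbeta_0|}$ forces one to control exponential functionals of $\x$, and compactness of the parameter space ensures the control is uniform over $\tilde\bbeta_0$ in a neighborhood of $\hat\bbeta_{\mle}$.
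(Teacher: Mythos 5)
Your proposal is correct and follows essentially the same route as the paper's proof: isolate the second-step score, use the exponential moment condition of Assumption~\ref{asp:4} together with compactness of the parameter space to control $1/\pi_i(\tilde{\bbeta}_0)$ and verify the Lindeberg--Feller conditions conditionally on $\Fn$ and $\tilde{\bbeta}_0$, then replace the $\tilde{\bbeta}_0$-based covariance by the $\hat{\bbeta}_{\mle}$-based one in \eqref{eq:52}, and dispose of the first-step contribution using $r_0r^{-1/2}\rightarrow0$. The one place where the paper is more explicit than your ``dominated convergence'' step is the covariance replacement: it proves $\|\V_c-\V_c^{\tilde{\bbeta}_0}\|=O_{P|\Fn}(r^{-1}r_0^{-1/2})$ via an explicit Lipschitz bound on $1/\pi_i(\cdot)$ as a function of $\bbeta$ combined with $\|\tilde{\bbeta}_0-\hat{\bbeta}_{\mle}\|=O_{P|\Fn}(r_0^{-1/2})$ from Theorem~\ref{thm:1}, which is the rigorous form of the uniformity over $i$ that your sketch implicitly requires.
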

\begin{remark}
  In Theorem~\ref{thm:6}, we require that $r_0\rightarrow\infty$ to
  get a consistent pilot estimate which is used to identify the more
  informative data points in the second step, but $r_0$ should be much
  smaller than $r$ so that the more informative second step subsample
  dominates the likelihood function.
\end{remark}
  Theorem~\ref{thm:6} shows that the two-step algorithm is asymptotically more efficient than the uniform subsampling or the case-control subsampling in the sense of Theorem~\ref{thm:4}. From Theorem~\ref{thm:2}, as $r_0\rightarrow\infty$, $\tilde{{\boldsymbol{\beta}}}_0$ is also asymptotic normal, but from Theorem~\ref{thm:4}, the value of $\mathrm{tr}({\mathbf{V}}_c)$ for its asymptotic variance is larger than that for \eqref{eq:52} with the same total subsample sizes.

\subsection{Standard error formula}
  As pointed out by a referee, the standard error of an estimator is also important and needs to be estimated. It is crucial for statistical inferences such as hypothesis testing and confidence interval construction. The asymptotic normality in Theorems \ref{thm:2} and \ref{thm:6} can be used to construct formulas to estimate the standard error. A simple way is to replace $\hat{{\boldsymbol{\beta}}}_{{{\textnormal{\tiny MLE}}}}$ with $\breve{{\boldsymbol{\beta}}}$ in the asymptotic variance-covariance matrix in Theorem \ref{thm:2} or \ref{thm:6} to get the estimated version. This approach, however, requires calculations on the full data. We give a formula that involves only the selected subsample to estimate the variance-covariance matrix.

We propose to estimate the variance-covariance matrix of $\breve{{\boldsymbol{\beta}}}$ using
 \begin{equation}\label{eq:24}
   \breve{{\mathbf{V}}}=\breve{{\mathbf{M}}}_X^{-1}\breve{{\mathbf{V}}}_c\breve{{\mathbf{M}}}_X^{-1},
 \end{equation}
where
\begin{equation*}
\breve{{\mathbf{M}}}_X=\frac{1}{n(r_0+r)}\sum_{i=1}^{r_0+r} \frac{w_i^*(\breve{{\boldsymbol{\beta}}}){\mathbf{x}}_i^*({\mathbf{x}}_i)^{*T}}{\pi_i^*},
\end{equation*}
and
  \begin{equation*}
    \breve{{\mathbf{V}}}_c=\frac{1}{n^2(r_0+r)^2}\sum_{i=1}^{r_0+r}
    \frac{\{y_i^*-p_i^*(\breve{{\boldsymbol{\beta}}})\}^2{\mathbf{x}}_i^*({\mathbf{x}}_i^*)^T}{(\pi_i^*)^2}.
  \end{equation*}
  In the above formula, $\breve{{\mathbf{M}}}_X$ and $\breve{{\mathbf{V}}}_c$ are motivated by the method of moments. If $\breve{{\boldsymbol{\beta}}}$ is replace by $\hat{{\boldsymbol{\beta}}}_{{{\textnormal{\tiny MLE}}}}$, then $\breve{{\mathbf{M}}}_X$ and $\breve{{\mathbf{V}}}_c$ are unbiased estimators of ${\mathbf{M}}_X$ and ${\mathbf{V}}_c$, respectively. Standard errors of components of $\breve{{\boldsymbol{\beta}}}$ can be estimated by the square roots of the diagonal elements of $\breve{{\mathbf{V}}}$. We will evaluate the performance of the formula in \eqref{eq:24} using numerical experiments in Section~\ref{sec:numerical-examples}.

\section{Numerical examples}\label{sec:numerical-examples}
We evaluate the performance of the OSMAC approach using synthetic and
real data sets in this section. We have some additional numerical results in Section~\ref{sec:addit-numer-results} of the Supplementary Material, in which Section~\ref{sec:furth-numer-eval} presents additional results of the OSMAC approach on rare event data and Section~\ref{sec:numer-results-uncond} gives unconditional results. As shown in Theorem \ref{thm:1},
  the approximation error can be arbitrarily small when the subsample
  size gets large enough, so any level of accuracy can be achieved
  even using uniform subsampling as long as the subsample size is
  sufficiently large. In order to make fair comparisons with uniform
  subsampling, we set the total subsample sizes for a two-step procedure the
  same as that for the uniform subsampling approach. In the second step
  of all two-step procedures, except the local case-control (LCC) procedure, we combine the two-step subsamples in
  estimation. This is valid for the OSMAC approach. However, for the LCC procedure, the first step subsample cannot be combined and only the second step subsample can be used. Otherwise, the resultant estimator will be biased  \citep{fithian2014local}.

\subsection{Simulation experiments}\label{sec:simul-exper}
In this section, we use numerical experiments based on simulated data
sets to evaluate the OSMAC approach proposed in previous
sections. Data of size $n=10,000$ are generated from model~\eqref{eq:1}
with the true value of ${\boldsymbol{\beta}}$, ${\boldsymbol{\beta}}_0$, being a $7\times1$ vector
of 0.5. We consider the following 6 simulated data sets using
different distributions of ${\mathbf{x}}$ (detailed definitions of these
distributions can be found in Appendix A of
\cite{gelman2014bayesian}).
\begin{enumerate}[ 1)]\itemsep-0em
\item {\bf mzNormal}. ${\mathbf{x}}$ follows a multivariate normal distribution
  with mean $\boldsymbol{0}$, $N(\boldsymbol{0},\boldsymbol{\Sigma})$, where
  $\boldsymbol{\Sigma}_{ij}=0.5^{I(i\neq j)}$ and $I()$ is the indicator
  function. For this data set, the number of 1's and the number of 0's
  in the responses are roughly equal. This data set is referred to as
  mzNormal data.
\item {\bf nzNormal}. ${\mathbf{x}}$ follows a multivariate normal distribution
  with nonzero mean, $N(\boldsymbol{1.5},\boldsymbol{\Sigma})$. About 95\% of the
  responses are 1's, so this data set is an example of imbalanced data
  and it is referred to as nzNormal data. 
\item {\bf ueNormal}. ${\mathbf{x}}$ follows a multivariate normal distribution
  with zero mean but its components have unequal variances. To be
  specific, let ${\mathbf{x}}=(x_1,...x_7)^T$, in which $x_i$ follows a normal
  distribution with mean 0 and variance $1/i^2$ and the correlation
  between $x_i$ and $x_j$ is $0.5^{I(i\neq j)}$, $i,j=1,...,7$. For this
  data set, the number of 1's and the number of 0's in the responses
  are roughly equal. This data set is referred to as ueNormal data.
\item {\bf mixNormal}. ${\mathbf{x}}$ is a mixture of two multivariate normal
  distributions with different means, i.e.,
  ${\mathbf{x}}\sim 0.5N(\boldsymbol{1},\boldsymbol{\Sigma})+0.5N(-\boldsymbol{1},\boldsymbol{\Sigma})$. For
  this case, the distribution of ${\mathbf{x}}$ is bimodal, and the number of
  1's and the number of 0's in the responses are roughly equal. This
  data set is referred to as mixNormal data.
\item {$\mathbf{T_3}$}. ${\mathbf{x}}$ follows a multivariate $t$ distribution
  with degrees of freedom 3, $t_3(\boldsymbol{0},\boldsymbol{\Sigma})/10$. For this case, the distribution of ${\mathbf{x}}$ has heavy tails and it does not satisfy the conditions in Sections~\ref{sec:random-subsampling} and \ref{sec:two-step}. We use this case to exam how sensitive the OSMAC approach is to the required assumptions. The number
  of 1's and the number of 0's in the responses are roughly
  equal for this data set. It is referred to as $T_3$ data.
\item {\bf EXP}. Components of ${\mathbf{x}}$ are independent and each has an
  exponential distribution with a rate parameter of 2. For this case,
  the distribution of ${\mathbf{x}}$ is skewed and has a heavier tail on the
  right, and the proportion of 1's in the responses is about
  0.84. This data set is referred to as EXP data.
\end{enumerate}

In order to clearly show the effects of different distributions of
${\mathbf{x}}$ on the SSP, we create boxplots of SSPs, shown in
Figure~\ref{fig:1} for the six data sets. It is seen that
distributions of covariates have high influence on optimal
SSPs. Comparing the figures for the mzNormal and nzNormal data sets,
we see that a change in the mean makes the distributions of SSPs
dramatically different. Another evident pattern is that using ${\mathbf{V}}_c$
instead of ${\mathbf{V}}$ to define an optimality criterion makes the SSP
different, especially for the case of unNormal data set which has
unequal variances for different components of the covariate. For the
mzNormal and $T_3$ data sets, the difference in the SSPs are not
evident. For the EXP data set, there are more points in the two
tails of the distributions.

\begin{figure}[htp]
  \centering
  \begin{subfigure}{0.3\textwidth}
    \includegraphics[width=\textwidth]{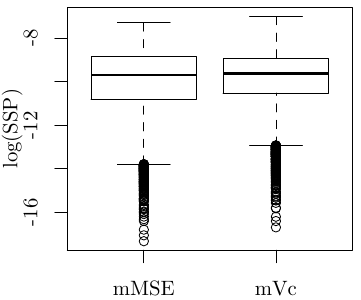}
    \caption{mzNormal}
  \end{subfigure}
  \begin{subfigure}{0.3\textwidth}
    \includegraphics[width=\textwidth]{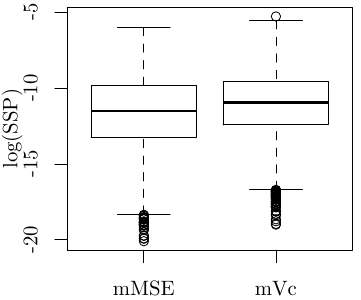}
    \caption{nzNormal}
  \end{subfigure}
  \begin{subfigure}{0.3\textwidth}
    \includegraphics[width=\textwidth]{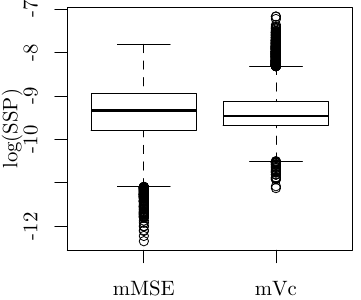}
    \caption{ueNormal}
  \end{subfigure}
  \begin{subfigure}{0.3\textwidth}
    \includegraphics[width=\textwidth]{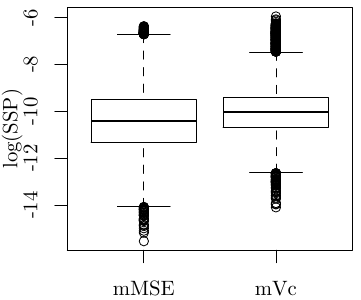}
    \caption{mixNormal}
  \end{subfigure}
  \begin{subfigure}{0.3\textwidth}
    \includegraphics[width=\textwidth]{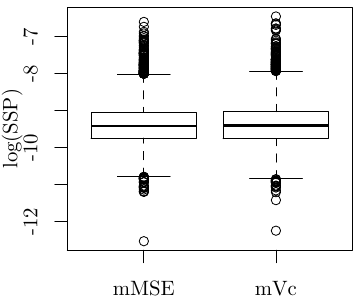}
    \caption{$T_3$.}
  \end{subfigure}
  \begin{subfigure}{0.3\textwidth}
    \includegraphics[width=\textwidth]{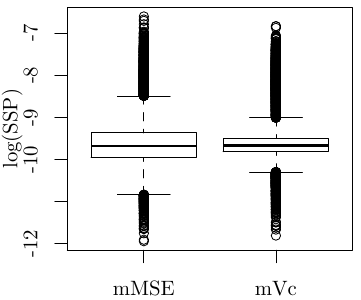}
    \caption{EXP}
  \end{subfigure}\\[-3mm]
  \caption{Boxplots of SSPs for different data sets. Logarithm is
    taken on SSPs for better presentation of the figures.}
  \label{fig:1}
\end{figure}

Now we evaluate the performance of Algorithm 2 based on different
choices of SSPs.  We calculate MSEs of $\breve{{\boldsymbol{\beta}}}$ from $S=1000$
subsamples using
$\textrm{MSE}=S^{-1}\sum_{s=1}^S\|\breve{{\boldsymbol{\beta}}}^{(s)}-\hat{{\boldsymbol{\beta}}}_{{{\textnormal{\tiny MLE}}}}\|^2$,
where $\breve{{\boldsymbol{\beta}}}^{(s)}$ is the estimate from the $s$th subsample.
Figure~\ref{fig:2} presents the MSEs of $\breve{{\boldsymbol{\beta}}}$ from
Algorithm 2 based on different SSPs, where the first step sample size
$r_0$ is fixed at 200. For comparison, we provide the results
of uniform subsampling and the LCC subsampling. We also calculate the full data MLE using 1000 Bootstrap samples.

For all the six data sets, SSPs $\boldsymbol{\pi}^{\mathrm{mMSE}}$ and $\boldsymbol{\pi}^{\mathrm{mVc}}$ always
result in smaller MSE than the uniform SSP, which agrees with the theoretical result
that they aim to minimize the asymptotic MSEs of the resultant estimator.
If components of ${\mathbf{x}}$ have equal variances, the OSMAC with
$\boldsymbol{\pi}^{\mathrm{mMSE}}$ and $\boldsymbol{\pi}^{\mathrm{mVc}}$ have
similar performances; for the ueNormal data set this is not true, and
the OSMAC with $\boldsymbol{\pi}^{\mathrm{mMSE}}$ dominates the OSMAC with
$\boldsymbol{\pi}^{\mathrm{mVc}}$. The uniform SSP never yields the smallest
MSE. It is worth noting that both the two OSMAC methods outperforms
the uniform subsampling method for the $T_3$ and EXP data sets. This
indicates that the OSMAC approach has advantage over the uniform subsampling even when data
do not satisfy the assumptions imposed in
Sections~\ref{sec:appr-optim-subs} and \ref{sec:two-step}. For the LCC subsampling, it can be less efficient than the OSMAC procedure if the data set is not very imbalanced. It performs well  for the nzNormal data which is imbalanced. This agree with the goal of the method in dealing with imbalanced data. The LCC subsampling does not perform well for small $r$. The main reason is that this method cannot use the first step sample so the effective sample size is smaller than other methods. 

\begin{figure}[htp]
  \centering
  \begin{subfigure}{0.49\textwidth}
    \includegraphics[width=\textwidth]{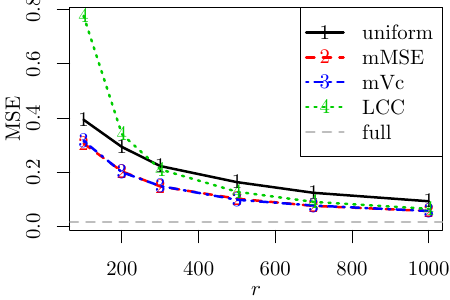}
    \caption{mzNormal}
  \end{subfigure}
  \begin{subfigure}{0.49\textwidth}
    \includegraphics[width=\textwidth]{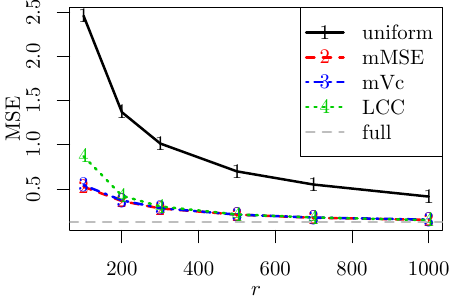}
    \caption{nzNormal}
  \end{subfigure}\\[5mm]
  \begin{subfigure}{0.49\textwidth}
    \includegraphics[width=\textwidth]{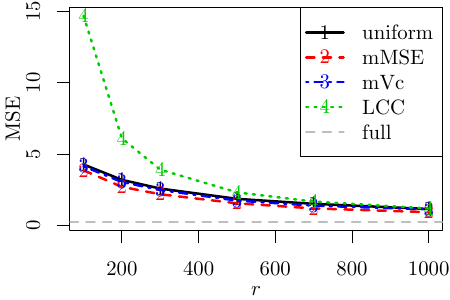}
    \caption{ueNormal}
  \end{subfigure}
  \begin{subfigure}{0.49\textwidth}
    \includegraphics[width=\textwidth]{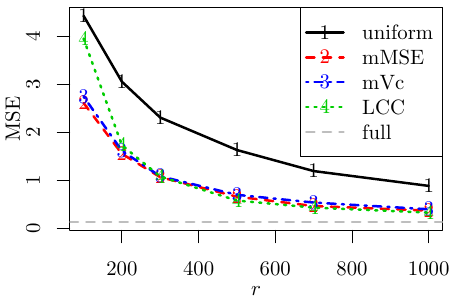}
    \caption{mixNormal}
  \end{subfigure}\\[5mm]
  \begin{subfigure}{0.49\textwidth}
    \includegraphics[width=\textwidth]{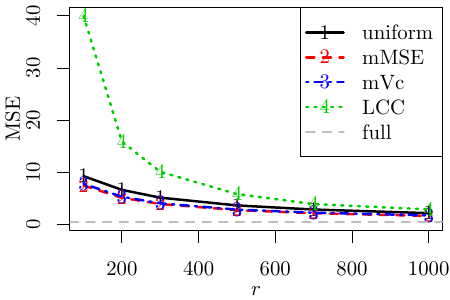}
    \caption{$T_3$}
  \end{subfigure}
  \begin{subfigure}{0.49\textwidth}
    \includegraphics[width=\textwidth]{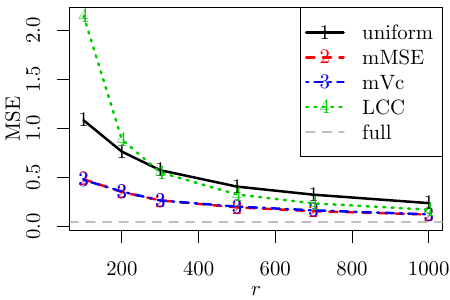}
    \caption{EXP}
  \end{subfigure}\\[5mm]
  \caption{MSEs for different second step subsample size ${r}$ with
    the first step subsample size being fixed at $r_0=200$.}
  \label{fig:2}
\end{figure}

To investigate the effect of different sample size allocations between
the two steps, we calculate MSEs for various
proportions of first step samples with fixed total subsample sizes.
Results are given in Figure~\ref{fig:3} with total subsample size
$r_0+r=800$ and $1200$ for the mzNormal data set. It shows that, the
performance of a two-step algorithm improves at first by increasing
$r_0$, but then it becomes less efficient after a certain point as
$r_0$ gets larger. This is because if $r_0$ is too small, the first
step estimate is not accurate; if $r_0$ is too close to $r$, then the
more informative second step subsample would be small. These
observations indicate that, empirically, a value around 0.2 is a good
choice for $r_0/(r_0+r)$ in order to have an efficient two-step
algorithm. However, finding a systematic way of determining the
optimal sample sizes allocation between two steps needs further
study. Results for the other five data sets are similar so they are
omitted to save space.

\begin{figure}[htp]
  \centering
  \begin{subfigure}{0.45\textwidth}
    \includegraphics[width=\textwidth]{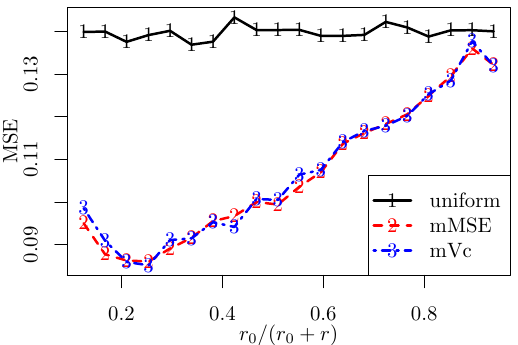}
    \caption{$r_0+r=800$}
  \end{subfigure}
  \begin{subfigure}{0.45\textwidth}
    \includegraphics[width=\textwidth]{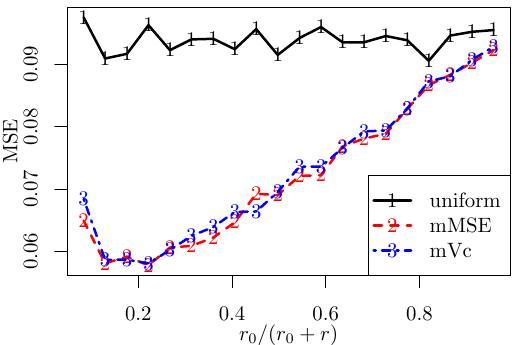}
    \caption{$r_0+r=1200$}
  \end{subfigure}
  \caption{MSEs vs proportions of the first step subsample with fixed
    total subsample sizes for the mzNormal data set.}
  \label{fig:3}
\end{figure}

 Figure~\ref{fig:4} gives proportions of correct classifications on the
  responses using different methods. To avoid producing
  over-optimistic results, we generate two full data sets
  corresponding to each of the six scenarios, use one of them to
  obtain estimates with different methods, and then perform
  classification on the other full data. The
classification rule is to classify the response to be 1 if
$p_i(\breve{{\boldsymbol{\beta}}})$ is larger than 0.5, and 0 otherwise. For
comparisons, we also use the full data MLE to classify the
full data. As shown in Figure~\ref{fig:4}, all the methods, except LCC with small $r$, produce
proportions close to that from using the full data MLE, showing the
comparable performance of the OSMAC algorithms to that of the full
data approach in classification. 

\begin{figure}[htp]
  \centering
  \begin{subfigure}{0.49\textwidth}
    \includegraphics[width=\textwidth]{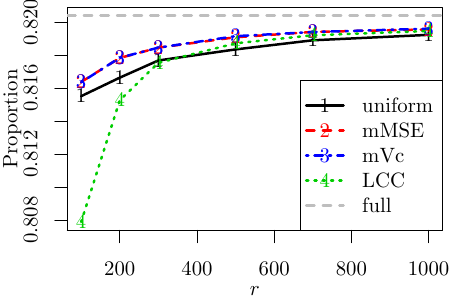}
    \caption{mzNormal}
  \end{subfigure}
  \begin{subfigure}{0.49\textwidth}
    \includegraphics[width=\textwidth]{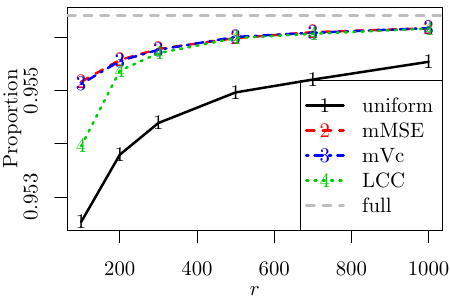}
    \caption{nzNormal}
  \end{subfigure}\\[5mm]
  \begin{subfigure}{0.49\textwidth}
    \includegraphics[width=\textwidth]{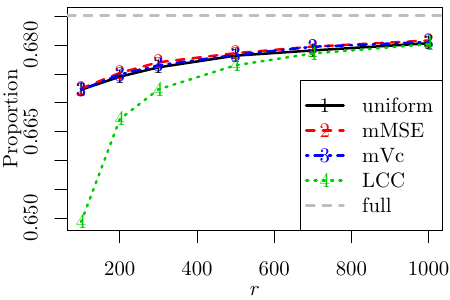}
    \caption{ueNormal}
  \end{subfigure}
  \begin{subfigure}{0.49\textwidth}
    \includegraphics[width=\textwidth]{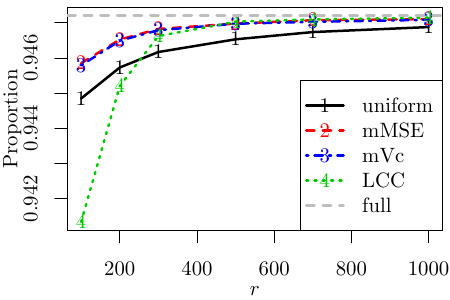}
    \caption{mixNormal}
  \end{subfigure}\\[5mm]
  \begin{subfigure}{0.49\textwidth}
    \includegraphics[width=\textwidth]{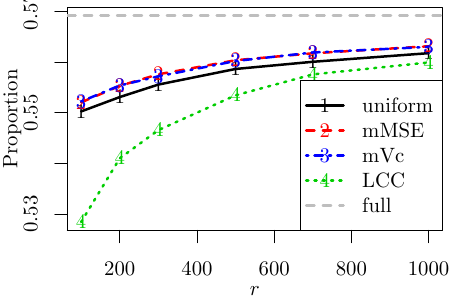}
    \caption{$T_3$}
  \end{subfigure}
  \begin{subfigure}{0.49\textwidth}
    \includegraphics[width=\textwidth]{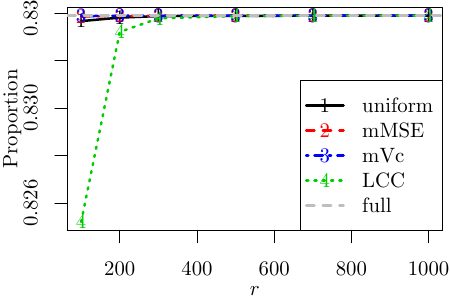}
    \caption{EXP}
  \end{subfigure}\\[5mm]
  \caption{Proportions of correct classifications for different second
    step subsample size ${r}$ with the first step subsample size
    being fixed at $r_0=200$. The gray horizontal dashed lines are those using the true parameter.}
  \label{fig:4}
\end{figure}

 To assess the performance of the formula in \eqref{eq:24}, we
  use it to calculate the estimated MSE, i.e., $\mathrm{tr}(\tilde{{\mathbf{V}}})$, and
  compare the average estimated MSE with the empirical 
  MSE. Figure~\ref{fig:a} presents the results for OSMAC with
  $\boldsymbol{\pi}^{\mathrm{mMSE}}$. It is seen that the estimated MSEs are
  very close to the empirical MSEs, except for the case of nzNormal
  data which is imbalanced. This indicates that the proposed formula
  works well if the data is not very imbalanced. According to our 
  simulation experiments, it works well if the proportion of 1's in
  the responses is between 0.15 and 0.85. For more imbalanced data or
  rare events data, the formula may not be accurate because the
  properties of the MLE are different from these for the regular cases
  \citep{owen2007infinitely, king2001logistic}. The performance of the formula in \eqref{eq:24}
  for OSMAC with $\boldsymbol{\pi}^{\mathrm{mVc}}$ is
  similar to that for OSMAC with $\boldsymbol{\pi}^{\mathrm{mMSE}}$, so
  results are omitted for clear presentation of the plot.

  To further evaluate the performance of the proposed method in
  statistical inference, we consider confidence interval construction
  using the asymptotic normality and the estimated variance-covariance matrix
  in \eqref{eq:24}. For illustration, we take the parameter of
  interest as $\beta_1$, the first element of ${\boldsymbol{\beta}}$. The corresponding 95\% confidence interval is
  constructed using
  $\breve{\beta}_1\pm Z_{0.975}SE_{\breve{\beta}_1}$, where
  $SE_{\breve{\beta}_1}=\sqrt{\breve{V}_{11}}$ is the standard error
  of $\breve{\beta}_1$, and $Z_{0.975}$ is the 97.5th percentile of
  the standard normal distribution. We repeat the simulation 3000
  times and estimate the coverage probability of the confidence
  interval by the proportion that it covers the true vale of
  $\beta_1$. Figure~\ref{fig:b} gives the results. The confidence
  interval works perfectly for the mxNormal, ueNormal and $T_3$
  data. For mixNormal and EXP data sets, the empirical coverage
  probabilities are slightly smaller than the intended confidence
  level, but the results are acceptable. For the imbalanced nzNormal
  data, the coverage probabilities are lower than the nominal coverage probabilities. This agrees with
  the fact in Figure~\ref{fig:a} that the formula in \eqref{eq:24}
  does not approximate the asymptotic variance-covariance matrix well for imbalance
  data.

\begin{figure}[htp]
  \centering
  \begin{subfigure}{0.49\textwidth}
    \includegraphics[width=\textwidth]{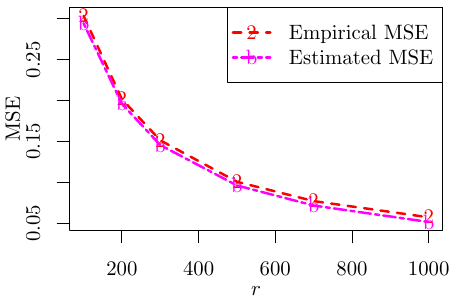}
    \caption{mzNormal}
  \end{subfigure}
  \begin{subfigure}{0.49\textwidth}
    \includegraphics[width=\textwidth]{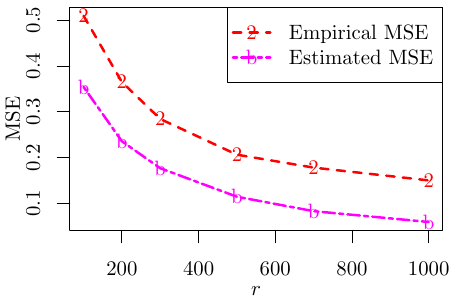}
    \caption{nzNormal}
  \end{subfigure}\\[5mm]
  \begin{subfigure}{0.49\textwidth}
    \includegraphics[width=\textwidth]{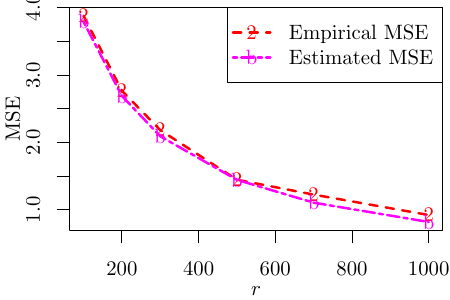}
    \caption{ueNormal}
  \end{subfigure}
  \begin{subfigure}{0.49\textwidth}
    \includegraphics[width=\textwidth]{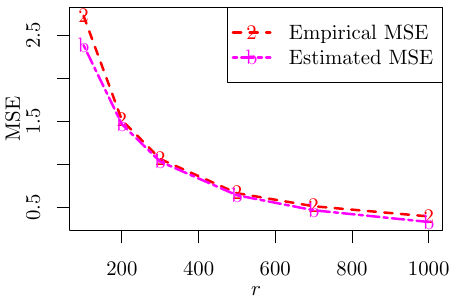}
    \caption{mixNormal}
  \end{subfigure}\\[5mm]
  \begin{subfigure}{0.49\textwidth}
    \includegraphics[width=\textwidth]{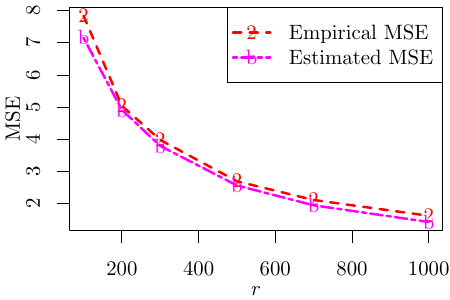}
    \caption{$T_3$}
  \end{subfigure}
  \begin{subfigure}{0.49\textwidth}
    \includegraphics[width=\textwidth]{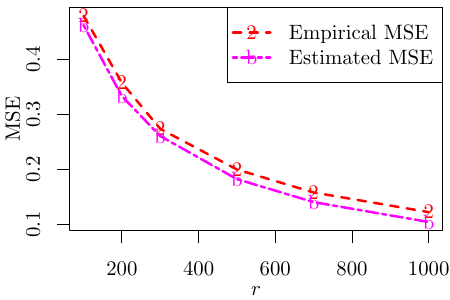}
    \caption{EXP}
  \end{subfigure}\\[5mm]
  \caption{Estimated and empirical MSEs for the OSMAC with $\boldsymbol{\pi}^{\mathrm{mMSE}}$. The first step subsample size is fixed at $r_0=200$ and the second step subsample size ${r}$ changes.}
  \label{fig:a}
\end{figure}

\begin{figure}[htp]
  \centering
  \begin{subfigure}{0.49\textwidth}
    \includegraphics[width=\textwidth]{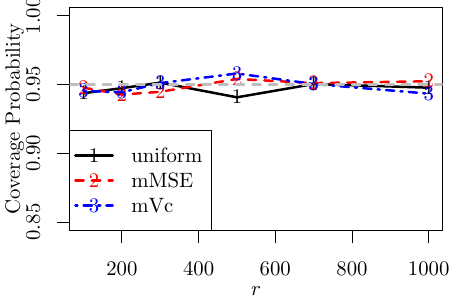}
    \caption{mzNormal}
  \end{subfigure}
  \begin{subfigure}{0.49\textwidth}
    \includegraphics[width=\textwidth]{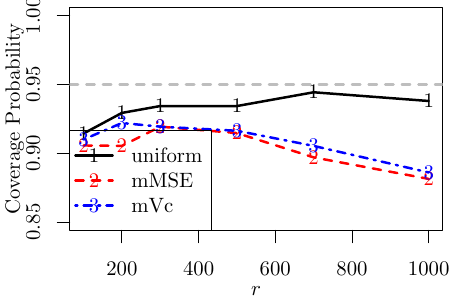}
    \caption{nzNormal}
  \end{subfigure}\\[5mm]
  \begin{subfigure}{0.49\textwidth}
    \includegraphics[width=\textwidth]{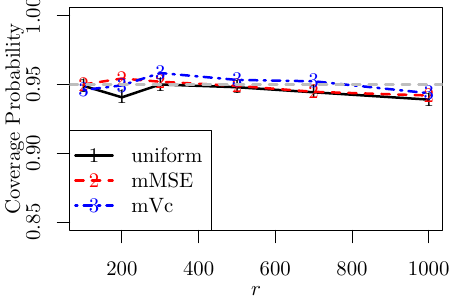}
    \caption{ueNormal}
  \end{subfigure}
  \begin{subfigure}{0.49\textwidth}
    \includegraphics[width=\textwidth]{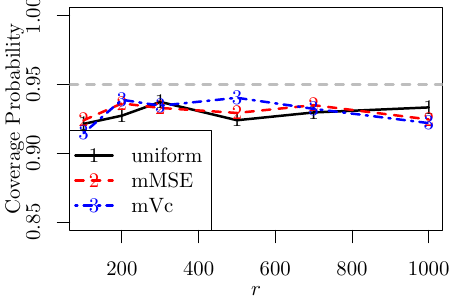}
    \caption{mixNormal}
  \end{subfigure}\\[5mm]
  \begin{subfigure}{0.49\textwidth}
    \includegraphics[width=\textwidth]{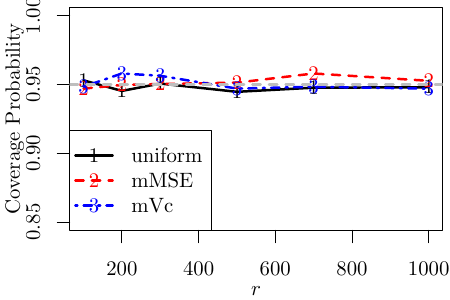}
    \caption{$T_3$}
  \end{subfigure}
  \begin{subfigure}{0.49\textwidth}
    \includegraphics[width=\textwidth]{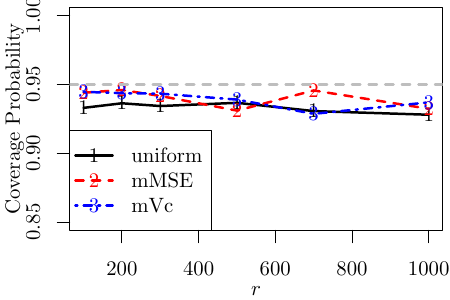}
    \caption{EXP}
  \end{subfigure}\\[5mm]
  \caption{Empirical coverage probabilities for different second step subsample size r with the first step subsample
size being fixed at $r_0=200$.}
  \label{fig:b}
\end{figure}

To evaluate the computational efficiency of the subsampling
algorithms, we record the computing time and numbers of iterations of
Algorithm 2 and the uniform subsampling implemented in the R
programming language \citep{R}. Computations were carried out on a
desktop running Window 10 with an Intel I7 processor and 16GB
memory.  For fair comparison, we counted only the CPU time used by
1000 repetitions of each method. Table~\ref{tab:1} gives the results
for the mzNormal data set for algorithms based on
$\boldsymbol{\pi}^{\mathrm{mMSE}}$, $\boldsymbol{\pi}^{\mathrm{mVc}}$, and
$\boldsymbol{\pi}^{\mathrm{UNI}}$. The computing time for using the full data
is also given in the last row of Table~\ref{tab:1} for comparisons. It
is not surprising to observe  that the uniform subsampling algorithm requires the
least computing time because it does not require an additional step to
calculate the SSP. The algorithm based on $\boldsymbol{\pi}^{\mathrm{mMSE}}$
requires longer computing time than the algorithm based on
$\boldsymbol{\pi}^{\mathrm{mVc}}$, which agrees with the theoretical analysis
in Section~\ref{sec:two-step}. All the subsampling algorithms take
significantly less computing time compared to using the full data
approach. Table~\ref{tab:2} presents the average numbers of iterations
in Newton's method. It shows that for Algorithm 2, the first step may
require additional iterations compared to the second step, but
overall, the required numbers of iterations for all methods are close
to 7, the number of iterations used by the full data. This shows that
using a smaller subsample does not increase the required number of
iterations much for Newton's method.

\begin{table}[htp]
  \caption{CPU seconds for the mzNormal data set with $r_0=200$ and different ${r}$. The CPU seconds for using the full data is given in the last row.}
\centering
  \begin{tabular}{ccccccc}\hline
    Method & \multicolumn{6}{c}{$r$} \\
    \cline{2-7}
           & 100 & 200 & 300 & 500 & 700 & 1000 \\ \hline
    mMSE & 3.340 & 3.510 & 3.720 & 4.100 & 4.420 & 4.900 \\
    mVc & 3.000 & 3.130 & 3.330 & 3.680 & 4.080 & 4.580 \\
    Uniform & 0.690 & 0.810 & 0.940 & 1.190 & 1.470 & 1.860 \\  \hline
    \multicolumn{7}{l}{Full data CPU seconds: 13.480}\\\hline
  \end{tabular}
  \label{tab:1}
\end{table}

\begin{table}[htp]
  \centering
  \caption{Average numbers of iterations used in Newton's method~\eqref{eq:21} for the mzNormal data set with $r_0=200$ and different ${r}$. For the full data, the number of iterations is 7.}
  \begin{tabular}{lccccccc}\hline
    ${r}$ & \multicolumn{2}{c}{mMSE} &  & \multicolumn{2}{c}{mVc} &  & uniform \\
    \cline{2-3}\cline{5-6}\cline{8-8}
          & First step & Second step && First step & Second step && \\
    \hline
    100 & 7.479 & 7.288 &  & 7.479 & 7.296 &  & 7.378 \\
    200 & 7.479 & 7.244 &  & 7.479 & 7.241 &  & 7.305 \\
    300 & 7.482 & 7.230 &  & 7.482 & 7.214 &  & 7.259 \\
    500 & 7.458 & 7.200 &  & 7.458 & 7.185 &  & 7.174 \\
    700 & 7.472 & 7.190 &  & 7.472 & 7.180 &  & 7.136 \\
    1000 & 7.471 & 7.181 &  & 7.471 & 7.158 &  & 7.091 \\
    \hline
  \end{tabular}\\
  \label{tab:2}
\end{table}

To further investigate the computational gain of the subsampling approach for massive data volume, we increase the value of $d$ to $d=50$ and increase the values of $n$ to be $n=10^4, 10^5, 10^6$ and $10^7$. We record the computing time for the case when ${\mathbf{x}}$ is multivariate normal. Table~\ref{tab:6} presents the result based on one iteration of calculation. It is seen that as $n$ increases, the computational efficiency for a subsampling method relative to the full data approach is getting more and more significant.

\begin{table}[htp]
  \caption{CPU seconds with $r_0=200$, ${r}=1000$ and different full data size $n$ when the covariates are from a $d=50$ dimensional normal distribution.}
\centering
\begin{tabular}{ccccc}\hline
  Method & \multicolumn{ 4}{c}{$n$} \\
  \cline{2-5}
      & $10^4$ & $10^5$ & $10^6$ & $10^7$ \\ \hline
  mMSE & 0.050 & 0.270 & 3.290 & 37.730 \\
  mVc & 0.030 & 0.070 & 0.520 & 6.640 \\
  Uniform & 0.010 & 0.030 & 0.020 & 0.830 \\
  Full & 0.150 & 1.710 & 16.530 & 310.450 \\ \hline
\end{tabular}
\label{tab:6}
\end{table}

\subsubsection{Numerical evaluations for rare events data}
\label{sec:numer-eval-rare}

To investigate the performance of the proposed method for the case of rare events, we generate rare events data using the same configurations that are used to generate the nzNormal data, except that we change the mean of ${\mathbf{x}}$ to {\bf -2.14} or {\bf -2.9}. With these values, 1.01\% and 0.14\% of responses are 1 in the full data of size $n=10000$.

Figure~\ref{fig:s1} presents the results for these two scenarios. It is seen that both mMSE and mVc work well for these two scenarios and their performances are similar. The uniform subsampling is neither stable nor efficient. When the event rate is 0.14\%, corresponding to the subsample sizes of 300, 400, 500, 700, 900, and 1200, there are 903, 848, 801, 711, 615, and 491 cases out of 1000 repetitions of the simulation that the MLE are not found. For the cases that the MLE are found, the MSEs are 78.27907, 23.28546, 34.16891, 42.43081, 26.38999, and 19.25178, respectively. These MSEs are much larger than those from the OSMAC and thus are omitted in Figure~\ref{fig:s1} for better presentation. For the OSMAC, there are 8 cases out of 1000 that the MLE are not found only when $r_0=200$ and $r=100$.

For comparison, we also calculate the MSE of the full data approach
using 1000 Bootstrap samples (the gray dashed line). Note that the Bootstrap is the uniform subsampling with the subsample size being equal to the full data sample size. Interestingly, it is seen from Figure~\ref{fig:s1} that OSMAC methods can produce MSEs that are much smaller than the Bootstrap MSEs. To further investigate this interesting case, we carry out another simulation using the exact same setup. A full data is generated in each repetition and hence the resultant MSEs are the unconditional MSEs. Results are presented in Figure~\ref{fig:s2}. Although the unconditional MSEs of the OSMAC methods are larger than that of the full data approach, they are very close when $r$ gets large, especially when the rare event rate is $0.11\%$. Here, $0.11\%$ is the average percentage of 1's in the responses of all 1000 simulated full data. 
Note that the true value of ${\boldsymbol{\beta}}$ is used in calculating both the
conditional MSEs and the unconditional MSEs. Comparing
Figure~\ref{fig:s1} (b) and Figure~\ref{fig:s2} (b), conditional
inference of OSMAC can indeed be more efficient than the full data
approach for rare events data. These two figures also indicate that
the original Bootstrap method does not work perfectly for the case of
rare events data. For additional results on more extreme rare events data, please read Section~\ref{sec:furth-numer-eval} in the Supplementary Material.
\begin{figure}[htp]
  \centering
  \begin{subfigure}{0.49\textwidth}
    \includegraphics[width=\textwidth]{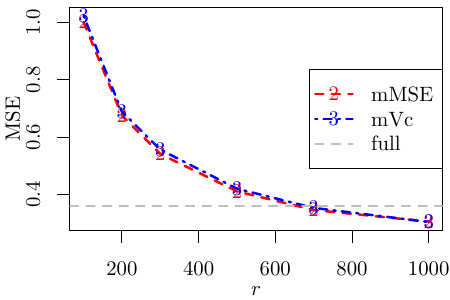}
    \caption{1.01\% of $y_i$'s are 1}
  \end{subfigure}
  \begin{subfigure}{0.49\textwidth}
    \includegraphics[width=\textwidth]{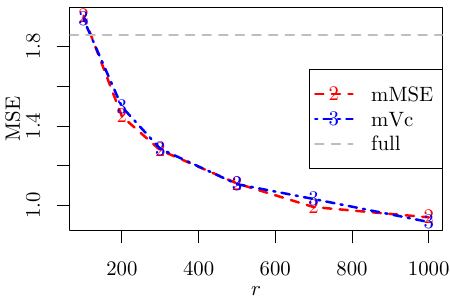}
    \caption{0.14\% of $y_i$'s are 1}
  \end{subfigure}
  \caption{MSEs for rare event data with different second step subsample size $r$ and a fixed first step subsample size $r_0=200$, where the covariates follow multivariate normal distributions.}
  \label{fig:s1}
\end{figure}

\begin{figure}[htp]
  \centering
  \begin{subfigure}{0.49\textwidth}
    \includegraphics[width=\textwidth]{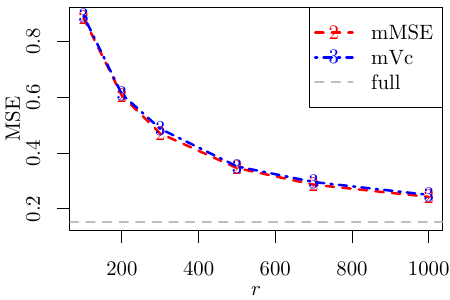}
    \caption{1.04\% of $y_i$'s are 1}
  \end{subfigure}
  \begin{subfigure}{0.49\textwidth}
    \includegraphics[width=\textwidth]{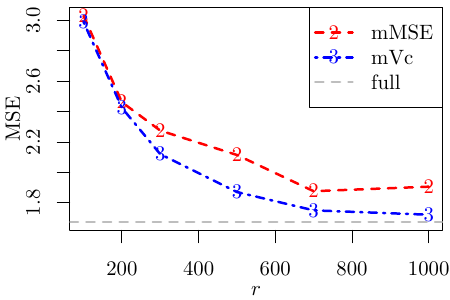}
    \caption{0.11\% of $y_i$'s are 1}
  \end{subfigure}
  \caption{Unconditional MSEs for rare event data with different second step subsample size $r$ and a fixed first step subsample size $r_0=200$, where the covariates follow multivariate normal distributions.}
  \label{fig:s2}
\end{figure}

\subsection{Census income data set}
In this section, we apply the proposed methods to a census income data
set \citep{kohavi-nbtree}, which was extracted from the 1994 Census
database. There are totally $48,842$ observations in this data set, and the response variable is whether a person's income exceeds \$50K a
year. There are 11,687 individuals (23.93\%) in the data whose income exceed \$50K a year. Inferential task is to estimate the effect on income from the
following covariates: $x_1$, age; $x_2$, final weight (Fnlwgt); $x_3$, highest level of education in
numerical form; $x_4$, capital loss (LosCap); $x_5$, hours worked per
week. 
 The variable final weight ($x_2$) is the number of people the observation represents. The values were assigned by Population Division at the Census Bureau, and they are related to the
  socio-economic characteristic, i.e., people with similar socio-economic characteristics have similar weights. Capital loss ($x_5$) is the loss in income due to bad investments; it is the difference between lower selling prices of investments and higher purchasing prices of investments made by the individual. 
  
 The parameter corresponding to $x_i$ is denoted as $\beta_i$ for
$i=1, ..., 5$. An intercept parameter, say $\beta_0$, is also include
in the model.  Another interest is to determine whether a person's
income exceeds \$50K a year using the covariates. We obtained the data from the Machine Learning Repository \citep{Lichman:2013}, where it is partitioned into a training set of $n=32,561$ observations and a validation set of $16,281$ observations. Thus we apply the proposed method on the train set and use the validation set to evaluate the performance of classification.

For this
data set, the full data estimates using all the observation in the
training set are: $\hat\beta_0=-8.637$ (0.116),
$\hat\beta_1=0.637$ (0.016), $\hat\beta_2=0.065$ (0.015),
$\hat\beta_3=0.878$ (0.017),
$\hat\beta_4=0.234$ (0.013) and $\hat\beta_5=0.525$ (0.016), where the
numbers in the parentheses are the associated standard errors. Table~\ref{tab:3} gives the average of parameter estimates along with
the empirical and estimated standard errors from different methods based on
1000 subsamples of $r_0+r=1200$ with $r_0=200$ and ${r}=1000$. It is seen that all
subsampling method produce estimates close to those from the full
data approach. In general, OSMAC with $\boldsymbol{\pi}^{\mathrm{mMSE}}$ and OSMAC
with $\boldsymbol{\pi}^{\mathrm{mVc}}$ produce the smallest standard
errors. The estimated standard errors are very close to the empirical
standard errors, showing that the proposed asymptotic variance-covariance formula in
\eqref{eq:24} works well for the read data.
The standard errors for the subsample estimates are larger than
those for the full data estimates. However, they are quite good in view
of the relatively small subsample size.  
 All methods show that the effect of each variable on income is
 positive. However, the effect of final weight is not significant at significance level 0.05 according to any subsample-based method, while this variable is significant at the same significance level according to the full data analysis. The reason is that the subsample inference is not as powerful as the full data approach due to its relatively smaller sample size. Actually, for statistical inference in large sample, no matter how small the true parameter is, as long as it is a nonzero constant, the corresponding variable can always be detected as significant with large enough sample size. This is also true for conditional inference based on a subsample if the subsample size is large enough. It is interesting that capital loss has a significantly positive effect on income, this is because people with low income seldom have investments.

\begin{table}[htp]
  \caption{Average estimates for the Adult income data set based on
    1000 subsamples. The numbers in the parentheses are the associated
    empirical and average estimated standard errors, respectively. In the table, $\beta_1$ is for age, $\beta_2$ is for final weight, $\beta_3$ is for highest level of education in numerical form, $\beta_4$ is for capital loss, and $\beta_5$ is for hours worked per week.}
\begin{center}
\begin{tabular}{cccc}\hline
 & uniform        & mMSE           & mVc            \\ \hline
 Intercept & -8.686  (0.629, 0.609) & -8.660  (0.430, 0.428) & -8.639  (0.513, 0.510) \\
 $\beta_1$       & 0.638  (0.079, 0.078) & 0.640  (0.068, 0.071) & 0.640  (0.068, 0.067) \\
 $\beta_2$    & 0.061  (0.076, 0.077) & 0.065  (0.067, 0.068) & 0.063  (0.061, 0.062) \\
 $\beta_3$ & 0.882  (0.090, 0.090) & 0.881  (0.079, 0.075) & 0.878  (0.072, 0.072) \\
 $\beta_4$    & 0.232  (0.070, 0.071) & 0.231  (0.058, 0.059) & 0.232  (0.060, 0.057) \\
 $\beta_5$     & 0.533  (0.085, 0.087) & 0.526  (0.068, 0.070) & 0.526  (0.071, 0.070) \\
 \hline
  \end{tabular}
\end{center}
\label{tab:3}
\end{table}

Figure~\ref{fig:5-6} (a) shows the MSEs that were calculated from $S=1000$
subsamples of size $r_0+r$ with a fixed $r_0=200$.  In this figure, all
MSEs are small and go to 0 as the subsample size gets large, showing
the estimation consistency of the subsampling methods. The OSMAC with
$\boldsymbol{\pi}^{\mathrm{mMSE}}$ always has the smallest MSE.
Figure~\ref{fig:5-6} (b) gives the proportions of correct classifications on
the responses in the validation set for different second step subsample sizes with a fixed
$r_0=200$ when the classification threshold is 0.5.  For comparison,
we also obtained the results of classification using the full data estimate which is the gray horizontal dashed line. Indeed, using all
the $n=32,561$ observations in the training set yields better results than using
subsamples of much smaller sizes, but the difference is really
small. One point worth to mention is that although the OSMAC with
$\boldsymbol{\pi}^{\mathrm{mMSE}}$ always yields a smaller MSE compared to
the OSMAC with $\boldsymbol{\pi}^{\mathrm{mVc}}$, its performance in
classification is inferior to the OSMAC with
$\boldsymbol{\pi}^{\mathrm{mVc}}$. This is because
$\boldsymbol{\pi}^{\mathrm{mMSE}}$ aims to minimize the asymptotic MSE and
may not minimize the misclassification rate, although the two goals
are highly related.

\begin{figure}[htp]
  \centering
  \begin{subfigure}{0.48\textwidth}
    \includegraphics[width=\textwidth]{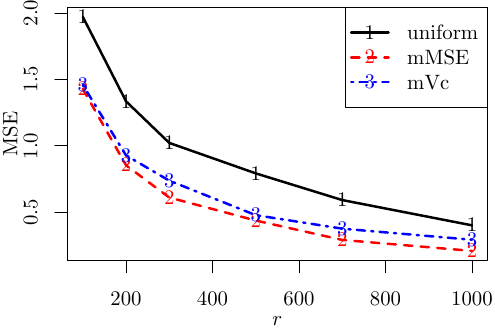}
    \caption{MSEs vs  ${r}$}
  \end{subfigure}
  \begin{subfigure}{0.51\textwidth}
    \includegraphics[width=\textwidth]{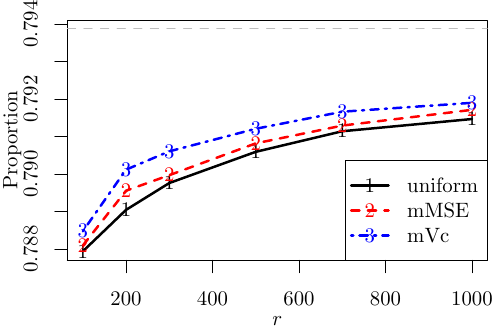}
    \caption{Proportions of correct classifications vs $r$}
  \end{subfigure}  
  \caption{MSEs and proportions of correct classifications for the adult income data set with $r_0=200$ and different second step subsample size ${r}$. The gray horizontal dashed line in figure (b) is the result using the full data MLE.}
  \label{fig:5-6}
\end{figure}

\subsection{Supersymmetric benchmark data set}
We apply the subsampling methods to a supersymmetric (SUSY) benchmark
data set \citep{Baldi2014} in this section. The data set is available from the Machine Learning Repository \citep{Lichman:2013} at this link: \url{https://archive.ics.uci.edu/ml/datasets/SUSY}. The goal is to distinguish
between a process where new supersymmetric particles are produced and
a background process, utilizing the 18 kinematic features in the data set.  The
full sample size is $5,000,000$ and the data file is about 2.4
gigabytes. About 54.24\% of the responses in the full data are from the background process. We use the first $n=4,500,000$ observation as the training set and use the last $500,000$ observations as the validation set. 

Figures~\ref{fig:7-8} gives the MSEs and proportions of correct classification when the classification probability threshold is
0.5. It is seen that the OSMAC with $\boldsymbol{\pi}^{\mathrm{mMSE}}$ always results in the smallest MSEs. For classifications, the result from the full data is better than the subsampling methods, but the difference is not significant. Among the three subsampling methods, the OSMAC with $\boldsymbol{\pi}^{\mathrm{mVc}}$ has the best performance in classification.

\begin{figure}[htp]
  \centering
  \begin{subfigure}{0.495\textwidth}
    \includegraphics[width=\textwidth]{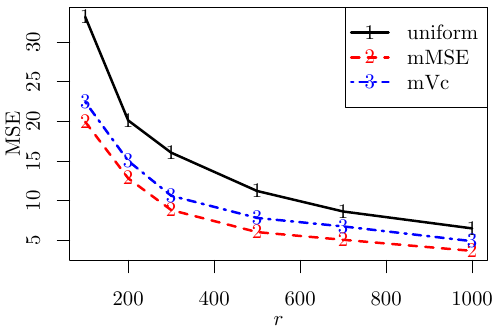}
    \caption{MSEs vs  ${r}$}
  \end{subfigure}
  \begin{subfigure}{0.495\textwidth}
    \includegraphics[width=\textwidth]{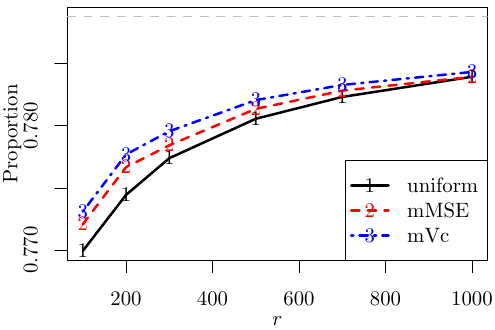}
    \caption{Proportions of correct classifications vs $r$}
  \end{subfigure}  
  \caption{MSEs and proportions of correct classifications for for the SUSY data set with $r_0=200$ and different second step subsample size ${r}$. The gray horizontal dashed line in figure (b) is the result using the full data MLE.}
  \label{fig:7-8}
\end{figure}

To further evaluate the performance of the OSMAC methods as
classifiers, we create receiver operating characteristic (ROC) curves
using classification probability thresholds between 0 and 1, and then
calculate the areas under the ROC curves (AUC). As pointed out
  by an Associate Editor, the theoretical investigation of this paper
  focus on parameter estimation, and classification is not
  theoretically studied. These two goals, although being different,
  are highly connected. Since logistic regression models are commonly
  used for classification, the relevant performance is also important for practical application.
 Table~\ref{tab:4}
presents the results based on 1000 subsamples of size $r_0+r=1000$ from
the full data. For the two step algorithm, $r_0=200$ and ${r}=800$.
All the AUCs are around 0.85, meaning that the classifiers all have
good performance.

For the same data set considered here, the deep learning method (DL)
in \cite{Baldi2014} produced an AUC of 0.88 while the AUCs from the
OSMAC approach with different SSPs are around 0.85. However the DL used the full data set and had to optimize a much more complicated model
(``{\it a five-layer neural network with 300 hidden units in each
  layer}'' \citep{Baldi2014}), while OSMAC just used $r=1000$
observations and the target function to optimize is the log-likelihood
function of a logistic regression model. Due to computational costs,
the optimization in \cite{Baldi2014} included ``{\it combinations of
  the pre-training methods, network architectures, initial learning
  rates, and regularization methods.}'' For the OSMAC, the
optimization was done by using a standard Newton's method
directly. The computations of \cite{Baldi2014} ``{\it were performed
  using machines with 16 Intel Xeon cores, an NVIDIA Tesla C2070
  graphics processor, and 64 GB memory. All neural networks were
  trained using the GPU-accelerated Theano and Pylearn2 software
  libraries}''. Our analysis was just carried out on a normal PC with
an Intel I7 processor and 16GB memory. Clearly, \cite{Baldi2014}'s
method requires special computing resources and coding skills, but
anyone with basic programming ability is able to implement the OSMAC.  Due to the special requirements of \cite{Baldi2014}'s
method, we are not able to replicate their results and thus cannot report the computing time. For our OSMAC with $\boldsymbol{\pi}^{\mathrm{mMSE}}$ and $\boldsymbol{\pi}^{\mathrm{mVc}}$, the average CPU seconds to obtain parameter estimates are 3.400 and 1.079 seconds, respectively. The full data MLE takes an average of 24.060 seconds to run.

\begin{table}[htp]
  \centering
  \caption{Average AUC (as percentage) for the SUSY data set based on
    1000 subsamples. A number in the parentheses is the associated
    standard error (as percentage) of the 1000 AUCs.}
  \begin{tabular}{ccc} \hline
    Method      && AUC \%  (SE)     \\ \hline
    uniform     && 85.06  (0.29) \\
    mMSE        && 85.08  (0.30) \\
    mVc         && 85.17  (0.25) \\
    Full  && 85.75         \\
    \hline
  \end{tabular}
  \label{tab:4}
\end{table}

\section{Discussion}\label{sec:discussion}
In this paper, we proposed the OSMAC approach for logistic regression
in order to overcome the computation bottleneck resulted from the
explosive growth of data volume. Not only were theoretical statistical
asymptotic results derived, but also optimal subsampling methods
were given. Furthermore, we developeded a two-step subsampling
algorithm to approximate optimal subsampling strategies and proved that the resultant estimator is consistent and asymptotically normal with the optimal variance-covariance matrix.  As shown in
our numerical experiments, the OSMAC approach for logistic regression
is a computationally feasible method for super-large samples, and it
yields a good approximation to the results based on full data. There are
important issues in this paper that we will investigate in the
future. 

\begin{enumerate}
\item In our numerical experiments, the formula in \eqref{eq:24}
    underestimates the asymptotic variance-covariance matrix and thus
    does not produce an accurate approximation for rare events
    data. It is unclear whether the technique in \cite{king2001logistic} can be
    applied to develop an improved estimator of the asymptotic
    variance-covariance matrix in the case of rare event. It is an interesting question worth further investigations. 
\item We have chosen to minimize the trace of ${\mathbf{V}}$ or ${\mathbf{V}}_c$ to define
  optimal subsampling algorithms. The idea is from the $A$-optimality
  criterion in the theory of optimal experimental designs
  \citep{kiefer1959}. There are other optimality criteria emphasizing
  different inferential purposes, such as the $C$-optimality and the
  $D$-optimality. How to use these optimality criteria to develop high
  quality algorithms is undoubtedly a topic worthy of future study.
\end{enumerate}

{

}

\pagebreak
\begin{center}
  {\Large Supplementary Material\\ for ``Optimal Subsampling for Large Sample Logistic Regression''}
\end{center}

\setcounter{equation}{0}
\setcounter{section}{0}
\setcounter{subsection}{0}
\renewcommand{\theequation}{S.\arabic{equation}}
\renewcommand{\thesection}{S.\arabic{section}}
\setcounter{figure}{0}
\renewcommand{\thefigure}{S.\arabic{figure}}
\setcounter{assumption}{0}
\renewcommand{\theassumption}{S.\arabic{assumption}}

\section{Proofs}
In this section we prove the theorems in the paper. 
\subsection{Proof of Theorem \ref{thm:1}}
We begin by establishing a lemma that will be used in the proof of Theorems~\ref{thm:1} and
\ref{thm:2}.

\begin{lemma}\label{lem1}
If Assumptions~\ref{asp:1} and \ref{asp:2} hold, then conditionally on ${\mathcal{F}_n}$ in probability,
\begin{align}
  \tilde{{\mathbf{M}}}_X-{\mathbf{M}}_X&=O_{P|{\mathcal{F}_n}}(r^{-1/2}),\label{eq:8}\\ \frac{1}{n}\frac{\partial\ell^*(\hat{{\boldsymbol{\beta}}}_{{{\textnormal{\tiny MLE}}}})}{\partial{\boldsymbol{\beta}}} &=O_{P|{\mathcal{F}_n}}(r^{-1/2}),\label{eq:9}
\end{align}
where
\begin{equation*}
  \tilde{{\mathbf{M}}}_X=
  \frac{1}{n}\frac{\partial^2\ell^*(\hat{{\boldsymbol{\beta}}}_{{{\textnormal{\tiny MLE}}}})}
  {\partial{\boldsymbol{\beta}}\partial{\boldsymbol{\beta}}^T}
  =\frac{1}{nr}\sum_{i=1}^r\frac{w_i^*(\hat{{\boldsymbol{\beta}}}_{{{\textnormal{\tiny MLE}}}}){\mathbf{x}}_i^*({\mathbf{x}}_i^*)^T}{\pi^*_i}.
\end{equation*}
\end{lemma}

\begin{proof}
Direct calculation yields
\begin{equation}\label{eq:5}
    {\mathrm{E}}(\tilde{{\mathbf{M}}}_X|{\mathcal{F}_n})={\mathbf{M}}_X.
\end{equation}
For any component $\tilde{{\mathbf{M}}}_X^{j_1j_2}$ of $\tilde{{\mathbf{M}}}_X$ where $1\le j_1,j_2\le d$,
\begin{align*}
  {\mathrm{Var}}\left(\frac{1}{n}\tilde{{\mathbf{M}}}_X^{j_1j_2}\Big|{\mathcal{F}_n}\right)
  =&\frac{1}{r}\sum_{i=1}^n\pi_i
     \left\{\frac{w_i(\hat{{\boldsymbol{\beta}}}_{{{\textnormal{\tiny MLE}}}})x_{ij_1}x_{ij_2}}{n\pi_i}
     -{\mathbf{M}}_X^{j_1j_2}\right\}^2\\
  =&\frac{1}{rn^2}\sum_{i=1}^n
     \frac{w_i(\hat{{\boldsymbol{\beta}}}_{{{\textnormal{\tiny MLE}}}})^2(x_{ij_1}x_{ij_2}^T)^2}{\pi_i}
     -\frac{1}{r}({\mathbf{M}}_X^{j_1j_2})^2\\
  \le&\frac{1}{16rn^2}\sum_{i=1}^n\frac{\|{\mathbf{x}}_i\|^4}{\pi_i}
       -\frac{1}{r}({\mathbf{M}}_X^{j_1j_2})^2\\
  =&O_P(r^{-1}),
\end{align*}
where the second last inequality holds by the fact that $0<w_i(\hat{{\boldsymbol{\beta}}}_{{{\textnormal{\tiny MLE}}}})\leq1/4$ and the last equality is from Assumption \ref{asp:2}. Using Markov's inequality, this result and \eqref{eq:5}, implies \eqref{eq:8}.

To prove \eqref{eq:9},  direct calculation yields,
\begin{equation}\label{eq:10}
  {\mathrm{E}}\left\{\frac{1}{n}\frac{\partial\ell^*(\hat{{\boldsymbol{\beta}}}_{{{\textnormal{\tiny MLE}}}})}
    {\partial{\boldsymbol{\beta}}}\bigg|{\mathcal{F}_n}\right\}
  =\frac{1}{nr}\frac{\partial\ell^*(\hat{{\boldsymbol{\beta}}}_{{{\textnormal{\tiny MLE}}}})}
      {\partial{\boldsymbol{\beta}}}= 0.
\end{equation}
From Assumption \ref{asp:2},
\begin{equation}\label{eq:11}
{\mathrm{Var}}\left\{\frac{1}{n}\frac{\partial\ell^*(\hat{{\boldsymbol{\beta}}}_{{{\textnormal{\tiny MLE}}}})}
  {\partial{\boldsymbol{\beta}}}\bigg|{\mathcal{F}_n}\right\}
=\frac{1}{n^2r}\sum_{i=1}^n\frac{\{y_i-p_i(\hat{{\boldsymbol{\beta}}}_{{{\textnormal{\tiny MLE}}}})\}^2{\mathbf{x}}_i{\mathbf{x}}_i^T}{\pi_i}
\le\frac{1}{n^2r}\sum_{i=1}^n\frac{{\mathbf{x}}_i{\mathbf{x}}_i^T}{\pi_i}
=O_P(r^{-1}).
\end{equation}
From \eqref{eq:10}, \eqref{eq:11} and Markov's inequality, \eqref{eq:9} follows.
\end{proof}

Now we prove Theorem \ref{thm:1}. 
Note that $t_i({\boldsymbol{\beta}})=y_i\log p_i({\boldsymbol{\beta}})+(1-y_i)\log\{1-
p_i({\boldsymbol{\beta}})\}$, $t_i^*({\boldsymbol{\beta}})=y_i^*\log p_i^*({\boldsymbol{\beta}})+(1-y_i^*)\log\{1-
p_i^*({\boldsymbol{\beta}})\}$,
\begin{equation*}
\ell^*({\boldsymbol{\beta}})=\frac{1}{r}\sum_{i=1}^r \frac{t_i^*({\boldsymbol{\beta}})}{\pi_i^*}, \ \ \text{and} \ \ \ell({\boldsymbol{\beta}})=\sum_{i=1}^n t_i({\boldsymbol{\beta}}).
\end{equation*}
By direct calculation under the conditional distribution of subsample given ${\mathcal{F}_n}$,
\begin{equation}\label{eq:2}
{\mathrm{E}}\left\{\frac{\ell^*({\boldsymbol{\beta}})}{n}-\frac{\ell({\boldsymbol{\beta}})}{n}\bigg|{\mathcal{F}_n}\right\}^2
  =\frac{1}{r}\left[\frac{1}{n^2}\sum_{i=1}^n
  \frac{t_i^2({\boldsymbol{\beta}})}{\pi_i}-\left(\frac{1}{n}\sum_{i=1}^n t_i({\boldsymbol{\beta}})\right)^2\right].
\end{equation}
Note that $|t_i({\boldsymbol{\beta}})| \le\log4+2\|{\mathbf{x}}_i\|\|{\boldsymbol{\beta}}\|$. Therefore, from Assumption~\ref{asp:1},
\begin{align}\label{moment-condition}
  \frac{1}{n^2}\sum_{i=1}^n
  \frac{t_i^2({\boldsymbol{\beta}})}{\pi_i}-\left(\frac{1}{n}\sum_{i=1}^n t_i({\boldsymbol{\beta}})\right)^2
  \le&\frac{1}{n^2}\sum_{i=1}^n\frac{t_i^2({\boldsymbol{\beta}})}{\pi_i}
       +\left(\frac{1}{n}\sum_{i=1}^n|t_i({\boldsymbol{\beta}})|\right)^2=O_P(1).
\end{align}
Therefore combing \eqref{eq:2} and \eqref{moment-condition}, $n^{-1}\ell^*({\boldsymbol{\beta}})-n^{-1}\ell({\boldsymbol{\beta}})\rightarrow0$ in conditional probability given ${\mathcal{F}_n}$. Note that the parameter space is compact and $\hat{{\boldsymbol{\beta}}}_{{{\textnormal{\tiny MLE}}}}$ is the unique global maximum of the continuous convex function $\ell({\boldsymbol{\beta}})$. Thus, from Theorem 5.9 and its remark of \cite{Vaart:98}, conditionally on ${\mathcal{F}_n}$ in probability,
\begin{equation}\label{eq:12}
  \|\tilde{{\boldsymbol{\beta}}}-\hat{{\boldsymbol{\beta}}}_{{{\textnormal{\tiny MLE}}}}\|=o_{P|{\mathcal{F}_n}}(1)
\end{equation}

The consistency proved above ensures that $\tilde{{\boldsymbol{\beta}}}$ is close to $\hat{{\boldsymbol{\beta}}}_{{{\textnormal{\tiny MLE}}}}$ as long as $r$ is not small. Using Taylor's theorem \citep[c.f. Chapter 4 of][]{Ferguson1996},

\begin{align}
0=\frac{\dot\ell^*_j(\tilde{{\boldsymbol{\beta}}})}{n}
=&\frac{\dot\ell^*_j(\hat{{\boldsymbol{\beta}}}_{{{\textnormal{\tiny MLE}}}})}{n}
   +\frac{1}{n} \frac{\partial\dot\ell^*_j(\hat{{\boldsymbol{\beta}}}_{{{\textnormal{\tiny MLE}}}})}{\partial {\boldsymbol{\beta}}^T}(\tilde{{\boldsymbol{\beta}}}-\hat{{\boldsymbol{\beta}}}_{{{\textnormal{\tiny MLE}}}})
   +\frac{1}{n} R_j
   \label{taylor-expansion}
\end{align}
where $\dot\ell^*_j({{\boldsymbol{\beta}}})$ is the partial derivative of $\ell^*({{\boldsymbol{\beta}}})$ with respect to $\beta_j$, and
\begin{equation*}
  R_j=(\tilde{{\boldsymbol{\beta}}}-\hat{{\boldsymbol{\beta}}}_{{{\textnormal{\tiny MLE}}}})^T
   \int_0^1\int_0^1\frac{\partial^2\dot\ell^*_j\{\hat{{\boldsymbol{\beta}}}_{{{\textnormal{\tiny MLE}}}}
   +uv(\tilde{{\boldsymbol{\beta}}}-\hat{{\boldsymbol{\beta}}}_{{{\textnormal{\tiny MLE}}}})\}}{\partial {\boldsymbol{\beta}}\partial
   {\boldsymbol{\beta}}^T}v{\mathrm{d}} u{\mathrm{d}} v\
(\tilde{{\boldsymbol{\beta}}}-\hat{{\boldsymbol{\beta}}}_{{{\textnormal{\tiny MLE}}}}).
\end{equation*}
Note that
\begin{align*}
  \left\|\frac{\partial^2\dot\ell^*_j({\boldsymbol{\beta}})}
  {\partial{\boldsymbol{\beta}}\partial{\boldsymbol{\beta}}^T}\right\|=
  &\frac{1}{r}\left\|\sum_{i=1}^r \frac{p^*_i({\boldsymbol{\beta}})\{1-p^*_i({\boldsymbol{\beta}})\}
    \{1-2p^*_i({\boldsymbol{\beta}})\}}{\pi^*_i}x^*_{ij}{\mathbf{x}}^*_i{{\mathbf{x}}^*_i}^T\right\|
    \le\frac{1}{r}\sum_{i=1}^r\frac{\|{\mathbf{x}}^*_i\|^3}{\pi^*_i}
\end{align*}
for all ${\boldsymbol{\beta}}$. Thus
\begin{align}\label{eq:3}
  \left\|\int_0^1\int_0^1\frac{\partial^2\dot\ell^*_j\{\hat{{\boldsymbol{\beta}}}_{{{\textnormal{\tiny MLE}}}}
   +uv(\tilde{{\boldsymbol{\beta}}}-\hat{{\boldsymbol{\beta}}}_{{{\textnormal{\tiny MLE}}}})\}}{\partial {\boldsymbol{\beta}}\partial
  {\boldsymbol{\beta}}^T}v{\mathrm{d}} u{\mathrm{d}} v\ \right\|
  \le\frac{1}{2r}\sum_{i=1}^r\frac{\|{\mathbf{x}}^*_i\|^3}{\pi^*_i}=O_{P|{\mathcal{F}_n}}(n),
\end{align}
where the last equality is from the fact that
\begin{align}\label{eq:53}
  P\left(\frac{1}{nr}\sum_{i=1}^r\frac{\|{\mathbf{x}}^*_i\|^3}{\pi^*_i}\ge\tau
  \Bigg|{\mathcal{F}_n}\right) \le\frac{1}{nr\tau}\sum_{i=1}^r
  {\mathrm{E}}\left(\frac{\|{\mathbf{x}}^*_i\|^3}{\pi^*_i}\Bigg|{\mathcal{F}_n}\right)
  =\frac{1}{n\tau}\sum_{i=1}^n\|{\mathbf{x}}_i\|^3\rightarrow0,
\end{align}
in probability as $\tau\rightarrow\infty$ by Assumption~\ref{asp:2}.
From \eqref{taylor-expansion} and \eqref{eq:3},
\begin{equation}\label{eq:13}
  \tilde{{\boldsymbol{\beta}}}-\hat{{\boldsymbol{\beta}}}_{{{\textnormal{\tiny MLE}}}}=
  -\tilde{{\mathbf{M}}}_X^{-1}\left\{\frac{\dot \ell^*(\hat{{\boldsymbol{\beta}}}_{{{\textnormal{\tiny MLE}}}})}{n}
    +O_{P|{\mathcal{F}_n}}(\|\tilde{{\boldsymbol{\beta}}}-\hat{{\boldsymbol{\beta}}}_{{{\textnormal{\tiny MLE}}}}\|^2)\right\}.
\end{equation}
From \eqref{eq:8} of Lemma~\ref{lem1}, $\tilde{{\mathbf{M}}}_X^{-1} =O_{P|{\mathcal{F}_n}}(1)$. Combining this with \eqref{eq:9}, \eqref{eq:12} and \eqref{eq:13}
\begin{equation*}
  \tilde{{\boldsymbol{\beta}}}-\hat{{\boldsymbol{\beta}}}_{{{\textnormal{\tiny MLE}}}}
  =O_{P|{\mathcal{F}_n}}(r^{-1/2})+o_{P|{\mathcal{F}_n}}(\|\tilde{{\boldsymbol{\beta}}}-\hat{{\boldsymbol{\beta}}}_{{{\textnormal{\tiny MLE}}}}\|),
\end{equation*}
which implies that
\begin{equation}\label{eq:6}
  \tilde{{\boldsymbol{\beta}}}-\hat{{\boldsymbol{\beta}}}_{{{\textnormal{\tiny MLE}}}}=O_{P|{\mathcal{F}_n}}(r^{-1/2}).
\end{equation}

\subsection{Proof of Theorem \ref{thm:2}}
  Note that
\begin{equation}\label{eq:14}
  \frac{\dot \ell^*(\hat{{\boldsymbol{\beta}}}_{{{\textnormal{\tiny MLE}}}})}{n}
  =\frac{1}{r}\sum_{i=1}^r\frac{\{y^*_i-p^*_i(\hat{{\boldsymbol{\beta}}}_{{{\textnormal{\tiny MLE}}}})\}{\mathbf{x}}^*_i}{n\pi^*_i}
  \equiv\frac{1}{r}\sum_{i=1}^r{\boldsymbol{\eta}}_i
\end{equation}
Given ${\mathcal{F}_n}$, ${\boldsymbol{\eta}}_1, ..., {\boldsymbol{\eta}}_r$ are i.i.d, with mean $\mathbf{0}$ and variance,
\begin{align}
  &{\mathrm{Var}}({\boldsymbol{\eta}}_i|{\mathcal{F}_n})=r{\mathbf{V}}_c=
 \frac{1}{n^2}\sum_{i=1}^n\frac{\{y_i-p_i(\hat{{\boldsymbol{\beta}}}_{{{\textnormal{\tiny MLE}}}})\}^2
 {\mathbf{x}}_i{\mathbf{x}}_i^T}{\pi_i} =O_P(1).\label{condition-moments}
\end{align}
Meanwhile, for every $\varepsilon>0$ and some $\delta>0$,
\begin{align*}
  & \sum_{i=1}^r {\mathrm{E}}\{\|r^{-1/2}{\boldsymbol{\eta}}_i\|^2
    I(\|{\boldsymbol{\eta}}_i\|>r^{1/2}\varepsilon)|{\mathcal{F}_n}\}\\
  &\le\frac{1}{r^{1+\delta/2}\varepsilon^{\delta}}
    \sum_{i=1}^r{\mathrm{E}}\{\|{\boldsymbol{\eta}}_i\|^{2+\delta}
    I(\|{\boldsymbol{\eta}}_i\|>r^{1/2}\varepsilon)|{\mathcal{F}_n}\}\\
  &\le\frac{1}{r^{1+\delta/2}\varepsilon^{\delta}}
    \sum_{i=1}^r {\mathrm{E}}(\|{\boldsymbol{\eta}}_i\|^{2+\delta}|{\mathcal{F}_n})\\
  &=\frac{1}{r^{\delta/2}}\frac{1}{n^{2+\delta}}\frac{1}{\varepsilon^{\delta}}
    \sum_{i=1}^n\frac{\{y_i-p_i(\hat{{\boldsymbol{\beta}}}_{{{\textnormal{\tiny MLE}}}})\}^{2+\delta}
    \|{\mathbf{x}}_i\|^{2+\delta}}{\pi_i^{1+\delta}}\\
  &\le\frac{1}{r^{\delta/2}}\frac{1}{n^{2+\delta}}\frac{1}{\varepsilon^{\delta}}
    \sum_{i=1}^n\frac{\|{\mathbf{x}}_i\|^{2+\delta}}{\pi_i^{1+\delta}}=o_P(1)
\end{align*}
where the last equality is from Assumption \ref{asp:3}. This and \eqref{condition-moments} show that the Lindeberg-Feller conditions are satisfied in probability.
From \eqref{eq:14} and \eqref{condition-moments}, by the Lindeberg-Feller central limit theorem \citep[Proposition 2.27 of][]{Vaart:98}, conditionally on ${\mathcal{F}_n}$,
\begin{equation*}
  \frac{1}{n}{\mathbf{V}}_c^{-1/2}\dot\ell^*(\hat{{\boldsymbol{\beta}}}_{{{\textnormal{\tiny MLE}}}})=
  \frac{1}{r^{1/2}}\{{\mathrm{Var}}({\boldsymbol{\eta}}_i|{\mathcal{F}_n})\}^{-1/2}\sum_{i=1}^r{\boldsymbol{\eta}}_i
  \rightarrow N(0,\mathbf{I}),
\end{equation*}
in distribution.
From Lemma~\ref{lem1}, \eqref{eq:13} and~\eqref{eq:6},
\begin{equation}\label{eq:15}
  \tilde{{\boldsymbol{\beta}}}-\hat{{\boldsymbol{\beta}}}_{{{\textnormal{\tiny MLE}}}}=
  -\frac{1}{n}\tilde{{\mathbf{M}}}_X^{-1}\dot\ell^*(\hat{{\boldsymbol{\beta}}}_{{{\textnormal{\tiny MLE}}}})+O_{P|{\mathcal{F}_n}}(r^{-1})
\end{equation}
From \eqref{eq:8} of Lemma~\ref{lem1},
\begin{align}\label{eq:7}
  \tilde{{\mathbf{M}}}_X^{-1}-{\mathbf{M}}_X^{-1}
  &=-{\mathbf{M}}_X^{-1}(\tilde{{\mathbf{M}}}_X-{\mathbf{M}}_X)\tilde{{\mathbf{M}}}_X^{-1}
  =O_{P|{\mathcal{F}_n}}(r^{-1/2}).
\end{align}
Based on Assumption~\ref{asp:1} and \eqref{condition-moments}, it is verified that,
\begin{equation}\label{eq:16}
  {\mathbf{V}}=\boldsymbol{{\mathbf{M}}}_X^{-1}{\mathbf{V}}_c\boldsymbol{{\mathbf{M}}}_X^{-1}
  =\frac{1}{r}\boldsymbol{{\mathbf{M}}}_X^{-1}\left(r{\mathbf{V}}_c\right)
  \boldsymbol{{\mathbf{M}}}_X^{-1}=O_{P}(r^{-1}).
\end{equation}
Thus, \eqref{eq:15}, \eqref{eq:7} and \eqref{eq:16} yield,
\begin{align*}
  {\mathbf{V}}^{-1/2}(\tilde{{\boldsymbol{\beta}}}-\hat{{\boldsymbol{\beta}}}_{{{\textnormal{\tiny MLE}}}})
  &=-{\mathbf{V}}^{-1/2}n^{-1}\tilde{{\mathbf{M}}}_X^{-1}\dot\ell^*(\hat{{\boldsymbol{\beta}}}_{{{\textnormal{\tiny MLE}}}})
    +O_{P|{\mathcal{F}_n}}(r^{-1/2})\\
  &=-{\mathbf{V}}^{-1/2}{\mathbf{M}}_X^{-1}n^{-1}\dot\ell^*(\hat{{\boldsymbol{\beta}}}_{{{\textnormal{\tiny MLE}}}})
    -{\mathbf{V}}^{-1/2}(\tilde{{\mathbf{M}}}_X^{-1}-{\mathbf{M}}_X^{-1})n^{-1}\dot\ell^*(\hat{{\boldsymbol{\beta}}}_{{{\textnormal{\tiny MLE}}}})
    +O_{P|{\mathcal{F}_n}}(r^{-1/2})\\
  &=-{\mathbf{V}}^{-1/2}{\mathbf{M}}_X^{-1}{\mathbf{V}}_c^{1/2}{\mathbf{V}}_c^{-1/2}n^{-1}\dot\ell^*(\hat{{\boldsymbol{\beta}}}_{{{\textnormal{\tiny MLE}}}})
    +O_{P|{\mathcal{F}_n}}(r^{-1/2}).
\end{align*}
The result in \eqref{normal} of Theorem~\ref{thm:1} follows from Slutsky's Theorem\citep[Theorem 6 of][]{Ferguson1996} and the fact that
\begin{equation*}
  {\mathbf{V}}^{-1/2}{\mathbf{M}}_X^{-1}{\mathbf{V}}_c^{1/2}({\mathbf{V}}^{-1/2}{\mathbf{M}}_X^{-1}{\mathbf{V}}_c^{1/2})^T
  ={\mathbf{V}}^{-1/2}{\mathbf{M}}_X^{-1}{\mathbf{V}}_c^{1/2}{\mathbf{V}}_c^{1/2}{\mathbf{M}}_X^{-1}{\mathbf{V}}^{-1/2}=\mathbf{I}.
\end{equation*}


\subsection{Proof of Theorems~\ref{thm:3} and \ref{thm:4}}
For Theorem~\ref{thm:3},
\begin{align*}
  \mathrm{tr}({\mathbf{V}})=\mathrm{tr}({\mathbf{M}}_X^{-1}{\mathbf{V}}_c{\mathbf{M}}_X^{-1})
  &=\frac{1}{r}\sum_{i=1}^n\mathrm{tr}\left[\frac{1}{\pi_i}
    \{y_i-p_i(\hat{{\boldsymbol{\beta}}}_{{{\textnormal{\tiny MLE}}}})\}^2{\mathbf{M}}_X^{-1}{\mathbf{x}}_i{\mathbf{x}}_i^T{\mathbf{M}}_X^{-1}\right]\\
  &=\frac{1}{r}\sum_{i=1}^n\left[\frac{1}{\pi_i}\{y_i-p_i(\hat{{\boldsymbol{\beta}}}_{{{\textnormal{\tiny MLE}}}})\}^2
    \|{\mathbf{M}}_X^{-1}{\mathbf{x}}_i\|^2\right]\\
  &=\frac{1}{r}\sum_{i=1}^n\pi_i\sum_{i=1}^n\left[\pi_i^{-1}\{y_i-p_i(\hat{{\boldsymbol{\beta}}}_{{{\textnormal{\tiny MLE}}}})\}^2
    \|{\mathbf{M}}_X^{-1}{\mathbf{x}}_i\|^2\right]\\
  &\ge\frac{1}{r}\left[\sum_{i=1}^n|y_i-p_i(\hat{{\boldsymbol{\beta}}}_{{{\textnormal{\tiny MLE}}}})
    |\|{\mathbf{M}}_X^{-1}{\mathbf{x}}_i\|\right]^2,
\end{align*}
where the last step is from the Cauchy-Schwarz inequality and the equality in it holds if and only if when $\pi_i\propto|y_i-p_i(\hat{{\boldsymbol{\beta}}}_{{{\textnormal{\tiny MLE}}}})|\|{\mathbf{M}}_X^{-1}{\mathbf{x}}_i\|$.

The proof of Theorem~\ref{thm:4} is similar to the proof of Theorem~\ref{thm:3} and thus is omit it to save space.

{
  \subsection{Proof of Theorems~\ref{thm:5}}
  \label{sec:proof-theor-refthm:5}
Since $r_0r^{-1/2}\rightarrow0$, the contribution of the first step subsample to the likelihood function is a small term with an order $o_{P|{\mathcal{F}_n}}(r^{-1/2})$ relative the likelihood function. Thus, we can focus on the second step subsample only. Denote
\begin{equation*}
\ell^*_{\tilde{{\boldsymbol{\beta}}}_0}({\boldsymbol{\beta}})=\frac{1}{{r}}\sum_{i=1}^{{r}} \frac{t_i^*({\boldsymbol{\beta}})}{\pi_i^*(\tilde{{\boldsymbol{\beta}}}_0)},
\end{equation*}
where $\pi_i^*(\tilde{{\boldsymbol{\beta}}}_0)$ has the same expression as $\pi_i^{\mathrm{mVc}}$ except that $\hat{{\boldsymbol{\beta}}}_{{{\textnormal{\tiny MLE}}}}$ is replaced by $\tilde{{\boldsymbol{\beta}}}_0$. We first establish two lemmas that will be used in the proof of Theorems~\ref{thm:5} and \ref{thm:6}.

\begin{lemma}\label{lem2}
  Let the compact parameter space be $\Theta$ and $\lambda=\sup_{{\boldsymbol{\beta}}\in\Theta}\|{\boldsymbol{\beta}}\|$. Under Assumption~\ref{asp:4}, for $k_1\ge k_2\ge0$,
\begin{align}
  \frac{1}{n^2}\sum_{i=1}^n\frac{\|{\mathbf{x}}_i\|^{k_1}}{\pi_i^{k_2}(\tilde{{\boldsymbol{\beta}}}_0)}
  \le&\frac{3^{k_2}}{n}\sum_{i=1}^n\|{\mathbf{x}}_i\|^{k_1-k_2}
    e^{\lambda k_2\|{\mathbf{x}}_i\|}
     \frac{1}{n}\sum_{i=1}^n\|{\mathbf{x}}_i\|^{k_2}=O_P(1).\label{eq:34}
\end{align}
\end{lemma}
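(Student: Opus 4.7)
The plan is to substitute the explicit definition of $\pi_i(\tilde{\bbeta}_0)$, lower-bound the denominator $|y_i-p_i(\tilde{\bbeta}_0)|$ by exploiting the compactness of the parameter space $\Theta$, upper-bound the normalizing sum via the crude estimate $|y_j-p_j|\le 1$, and finally invoke Assumption~\ref{asp:4} with the law of large numbers to conclude that the right-hand side is $O_P(1)$.

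The first step is purely algebraic. Writing $S_n:=\sum_{j=1}^n|y_j-p_j(\tilde{\bbeta}_0)|\|\x_j\|$ so that $\pi_i(\tilde{\bbeta}_0)=|y_i-p_i(\tilde{\bbeta}_0)|\|\x_i\|/S_n$, one obtains
$$\frac{1}{n^2}\sumn\frac{\|\x_i\|^{k_1}}{\pi_i^{k_2}(\tilde{\bbeta}_0)}=\frac{S_n^{k_2}}{n^2}\sumn\frac{\|\x_i\|^{k_1-k_2}}{|y_i-p_i(\tilde{\bbeta}_0)|^{k_2}}.$$
From $|y_j-p_j(\tilde{\bbeta}_0)|\le 1$ one has $S_n\le\sum_j\|\x_j\|$, and a Jensen-type inequality of the form $\bigl(n^{-1}\sum a_j\bigr)^{k_2}\le n^{-1}\sum a_j^{k_2}$ then bounds $S_n^{k_2}$ by $n^{k_2-1}\sum_j\|\x_j\|^{k_2}$, producing the $n^{-1}\sum_j\|\x_j\|^{k_2}$ average that appears in the target bound.

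The key lower bound on the denominator comes from combining $y_i\in\{0,1\}$ with the compactness $\|\tilde{\bbeta}_0\|\le\lambda$. Since $|\x_i^T\tilde{\bbeta}_0|\le\lambda\|\x_i\|$ by the Cauchy--Schwarz inequality, a short case analysis in $y_i=0$ versus $y_i=1$ yields
$$|y_i-p_i(\tilde{\bbeta}_0)|\ge\frac{1}{1+e^{\lambda\|\x_i\|}},$$
using $p_i(\bbeta)=1/(1+e^{-\x_i^T\bbeta})$ and $1-p_i(\bbeta)=1/(1+e^{\x_i^T\bbeta})$. Raising to the $k_2$ power and applying the elementary bound $(1+e^{\lambda\|\x_i\|})^{k_2}\le 3^{k_2}e^{\lambda k_2\|\x_i\|}$ (valid because $\lambda\|\x_i\|\ge 0$) supplies both the exponential factor $e^{\lambda k_2\|\x_i\|}$ and the constant $3^{k_2}$ on the right-hand side. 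Assembling the three estimates produces the displayed inequality.

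For the $O_P(1)$ conclusion the right-hand side is the product of two sample averages, $n^{-1}\sum_i\|\x_i\|^{k_1-k_2}e^{\lambda k_2\|\x_i\|}$ and $n^{-1}\sum_i\|\x_i\|^{k_2}$. By the weak law of large numbers this reduces to the finiteness of $\e\|\x\|^{k_2}$, which is immediate under Assumption~\ref{asp:4}, and $\e(\|\x\|^{k_1-k_2}e^{\lambda k_2\|\x\|})$, which requires more care: using $\|\x\|\le\|\x\|_1=\sum_\ell|x_\ell|$, one has $e^{c\|\x\|}\le\prod_\ell\bigl(e^{cx_\ell}+e^{-cx_\ell}\bigr)$, and expanding the product writes the expectation as a sum of $2^d$ evaluations of the MGF $\e(e^{\mathbf{a}^T\x})$ at vectors $\mathbf{a}$ with entries $\pm c$, each finite by Assumption~\ref{asp:4}. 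The polynomial prefactor $\|\x\|^{k_1-k_2}$ is then absorbed by the bound $x^m\le C_m e^{\varepsilon x}$ with $\varepsilon$ small.

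The main obstacle I foresee is arranging the Jensen step on $S_n^{k_2}$ and the exponential bound on $|y_i-p_i|^{-k_2}$ so that the neat product structure on the right-hand side (an average of $\|\x\|^{k_1-k_2}e^{\lambda k_2\|\x\|}$ times an average of $\|\x\|^{k_2}$) emerges cleanly with the declared constant $3^{k_2}$, rather than producing an intermediate entangled sum $n^{-1}\sum_i\|\x_i\|^{k_1}e^{\lambda k_2\|\x_i\|}$. Tracking the powers of $n$ so that the $1/n^2$ in front absorbs the $n^{k_2-1}$ coming from $S_n^{k_2}$ correctly is the delicate bookkeeping step.
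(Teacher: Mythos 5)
Your route is essentially the paper's: substitute the definition of $\pi_i(\tilde{\bbeta}_0)$, use compactness of $\Theta$ to get $|y_i-p_i(\tilde{\bbeta}_0)|^{-1}\le 3e^{\lambda\|\x_i\|}$ (the paper reaches the same constant via $1+e^{\x_i^T\tilde{\bbeta}_0}+e^{-\x_i^T\tilde{\bbeta}_0}\le 3e^{\lambda\|\x_i\|}$), bound the normalizing sum crudely, and close with the law of large numbers plus the moment condition of Assumption~\ref{asp:4}. Your verification that $\e(\|\x\|^{k_1-k_2}e^{\lambda k_2\|\x\|})<\infty$ via the $2^d$-term expansion of the moment generating function is in fact more complete than the paper's, which applies Cauchy--Schwarz and then asserts $\e(e^{2\lambda k_2\|\x\|})<\infty$ without deriving it from the stated componentwise condition.

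The step you flag as delicate is, however, a genuine problem, and your worry is well founded. Writing $S_n=\sum_j|y_j-p_j(\tilde{\bbeta}_0)|\|\x_j\|$, your three estimates assemble to
\begin{equation*}
  \frac{1}{n^2}\sumn\frac{\|\x_i\|^{k_1}}{\pi_i^{k_2}(\tilde{\bbeta}_0)}
  \le n^{k_2-1}\cdot\frac{3^{k_2}}{n}\sumn\|\x_i\|^{k_1-k_2}e^{\lambda k_2\|\x_i\|}
  \cdot\onen\sumn\|\x_i\|^{k_2},
\end{equation*}
and the factor $n^{k_2-1}$ cannot be removed when $k_2>1$: since the optimal $\pi_i$ are generically of order $n^{-1}$, the left-hand side is genuinely of order $n^{k_2-1}$, so the displayed inequality (with the $n^{-2}$ normalization) can only hold for $0\le k_2\le 1$. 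The paper's own proof conceals this by writing \eqref{eq:35} as an equality, which is valid only for $k_2=1$. Two adjustments make your argument close on the range where the statement is true: for $0\le k_2\le1$, replace the Jensen step --- which in any case points the wrong way when $k_2<1$ --- by the subadditivity $(\sum_j a_j)^{k_2}\le\sum_j a_j^{k_2}$, giving $S_n^{k_2}\le n\cdot\onen\sum_j\|\x_j\|^{k_2}$ and hence exactly the claimed bound with the constant $3^{k_2}$; and observe that every later invocation of the lemma either has $k_2\le1$, or (the Lindeberg step in the proof of Theorem~\ref{thm:6}, where $k_2=2$) carries an extra factor $n^{-1}$ that absorbs the leftover power of $n$, so the downstream results are unaffected.
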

\begin{proof}
From the expression of $\pi_i(\tilde{{\boldsymbol{\beta}}}_0)$,
\begin{align}
  \frac{1}{n^2}\sum_{i=1}^n\frac{\|{\mathbf{x}}_i\|^{k_1}}{\pi_i^{k_2}(\tilde{{\boldsymbol{\beta}}}_0)}
  =&\frac{1}{n}\sum_{i=1}^n\frac{\|{\mathbf{x}}_i\|^{k_1-k_2}}
     {|y_i-p_i(\tilde{{\boldsymbol{\beta}}}_0)|^{k_2}}\;
     \frac{1}{n}\sum_{j=1}^n|y_j-p_j(\tilde{{\boldsymbol{\beta}}}_0)|^{k_2}\|{\mathbf{x}}_j\|^{k_2}.
     \label{eq:35}
\end{align}
For the first term on the right hand side of \eqref{eq:35},
  \begin{align}
  \frac{1}{n}\sum_{i=1}^n\frac{\|{\mathbf{x}}_i\|^{k_1-k_2}}{|y_i-p_i(\tilde{{\boldsymbol{\beta}}}_0)|^{k_2}}
  &\le\frac{1}{n}\sum_{i=1}^n\|{\mathbf{x}}_i\|^{k_1-k_2}
  (1+e^{{\mathbf{x}}_i^T\tilde{{\boldsymbol{\beta}}}_0}+e^{-{\mathbf{x}}_i^T\tilde{{\boldsymbol{\beta}}}_0})^{k_2}\notag\\
  &\le\frac{1}{n}\sum_{i=1}^n\|{\mathbf{x}}_i\|^{k_1-k_2}
    (1+2e^{\|{\mathbf{x}}_i\|\|\tilde{{\boldsymbol{\beta}}}_0\|})^{k_2}\notag\\
  &\le\frac{3^{k_2}}{n}\sum_{i=1}^n\|{\mathbf{x}}_i\|^{k_1-k_2}
    e^{\lambda k_2\|{\mathbf{x}}_i\|}.\label{eq:36}
\end{align}
Note that
\begin{align}
  {\mathrm{E}}\{\|{\mathbf{x}}_i\|^{k_1-k_2}e^{\lambda k_2\|{\mathbf{x}}_i\|}\}
  \le\{{\mathrm{E}}(\|{\mathbf{x}}_i\|^{2(k_1-k_2)})
  {\mathrm{E}}(e^{2\lambda k_2\|{\mathbf{x}}_i\|})\}^{1/2}\le\infty.\label{eq:37}
\end{align}
Combining \eqref{eq:35}, \eqref{eq:36} and \eqref{eq:37}, and using the Law of Large Numbers, \eqref{eq:34} follows.
\end{proof}

The following lemma is similar to Lemma~\ref{lem1}.
\begin{lemma}\label{lem3}
If Assumption~\ref{asp:4} holds, then conditionally on ${\mathcal{F}_n}$ in probability,
\begin{align}
  \tilde{{\mathbf{M}}}_X^{\tilde{{\boldsymbol{\beta}}}_0}-{\mathbf{M}}_X&=O_{P|{\mathcal{F}_n}}({r}^{-1/2}),\label{eq:4}\\ \frac{1}{n}\frac{\partial\ell^*_{\tilde{{\boldsymbol{\beta}}}_0}(\hat{{\boldsymbol{\beta}}}_{{{\textnormal{\tiny MLE}}}})}{\partial{\boldsymbol{\beta}}} &=O_{P|{\mathcal{F}_n}}({r}^{-1/2}),\label{eq:28}
\end{align}
where
\begin{equation*}
  \tilde{{\mathbf{M}}}_X^{\tilde{{\boldsymbol{\beta}}}_0}=
  \frac{1}{n}\frac{\partial^2\ell^*_{\tilde{{\boldsymbol{\beta}}}_0}(\hat{{\boldsymbol{\beta}}}_{{{\textnormal{\tiny MLE}}}})}
  {\partial{\boldsymbol{\beta}}\partial{\boldsymbol{\beta}}^T}
  =\frac{1}{n{r}}\sum_{i=1}^{{r}}\frac{w_i^*(\hat{{\boldsymbol{\beta}}}_{{{\textnormal{\tiny MLE}}}})
    {\mathbf{x}}_i^*({\mathbf{x}}_i^*)^T}{\pi^*_i(\tilde{{\boldsymbol{\beta}}}_0)}.
\end{equation*}
\end{lemma}

\begin{proof}
Direct calculation yields,
\begin{equation}
  {\mathrm{E}}(\tilde{{\mathbf{M}}}_X|{\mathcal{F}_n})
  ={\mathrm{E}}_{\tilde{{\boldsymbol{\beta}}}_0}\{{\mathrm{E}}(\tilde{{\mathbf{M}}}_X|{\mathcal{F}_n},\tilde{{\boldsymbol{\beta}}}_0)\}
  ={\mathrm{E}}_{\tilde{{\boldsymbol{\beta}}}_0}({\mathbf{M}}_X|{\mathcal{F}_n})
  ={\mathbf{M}}_X,\label{eq:29}
\end{equation}
where ${\mathrm{E}}_{\tilde{{\boldsymbol{\beta}}}_0}$ means the expectation is taken with respect to the distribution of $\tilde{{\boldsymbol{\beta}}}_0$ given ${\mathcal{F}_n}$.
For any component $\tilde{{\mathbf{M}}}_X^{j_1j_2}(\tilde{{\boldsymbol{\beta}}}_0)$ of $\tilde{{\mathbf{M}}}_X^{\tilde{{\boldsymbol{\beta}}}_0}$ where $1\le j_1,j_2\le d$,
\begin{align}
  {\mathrm{Var}}&\left(\frac{1}{n}\tilde{{\mathbf{M}}}_X^{j_1j_2}\Big|
    {\mathcal{F}_n},\tilde{{\boldsymbol{\beta}}}_0\right)\notag\\
  =&\frac{1}{{r}n^2}\sum_{i=1}^n
     \frac{w_i(\hat{{\boldsymbol{\beta}}}_{{{\textnormal{\tiny MLE}}}})^2(x_{ij_1}x_{ij_2}^T)^2}
     {\pi_i(\tilde{{\boldsymbol{\beta}}}_0)}-\frac{1}{{r}}({\mathbf{M}}_X^{j_1j_2})^2\notag\\
  &\le\frac{1}{16{r}n^2}\sum_{i=1}^n\frac{\|{\mathbf{x}}_i\|^4}{\pi_i(\tilde{{\boldsymbol{\beta}}}_0)}
       -\frac{1}{{r}}({\mathbf{M}}_X^{j_1j_2})^2
     \label{eq:30}
\end{align}
From Lemma~\ref{lem2}, and \eqref{eq:30},
\begin{align}
  {\mathrm{Var}}\left(\frac{1}{n}\tilde{{\mathbf{M}}}_X^{j_1j_2}\Big|{\mathcal{F}_n}\right)
  &={\mathrm{E}}_{\tilde{{\boldsymbol{\beta}}}_0}\left\{{\mathrm{Var}}\left(\frac{1}{n}\tilde{{\mathbf{M}}}_X^{j_1j_2}\Big|
    {\mathcal{F}_n},\tilde{{\boldsymbol{\beta}}}_0\right)\right\}\notag\\
  &\le\frac{3}{16{r}}\frac{1}{n}\sum_{j=1}^n\|{\mathbf{x}}_j\|
    \frac{1}{n}\sum_{i=1}^n\|{\mathbf{x}}_i\|^3e^{\lambda\|{\mathbf{x}}_i\|}=O_P({r}^{-1}),
    \label{eq:33}
\end{align}
Using Markov's inequality, \eqref{eq:4} follows from \eqref{eq:29} and \eqref{eq:33}.

Analogously, we obtain that
\begin{equation}
  {\mathrm{E}}\left\{\frac{1}{n}\frac{\partial\ell^*_{\tilde{{\boldsymbol{\beta}}}_0}
      (\hat{{\boldsymbol{\beta}}}_{{{\textnormal{\tiny MLE}}}})}{\partial{\boldsymbol{\beta}}}\bigg|{\mathcal{F}_n}\right\}=0,
  \label{eq:38}
\end{equation}
and
\begin{equation}
{\mathrm{Var}}\left\{\frac{1}{n}\frac{\partial\ell^*(\hat{{\boldsymbol{\beta}}}_{{{\textnormal{\tiny MLE}}}})}
  {\partial{\boldsymbol{\beta}}}\bigg|{\mathcal{F}_n}\right\}
=O_P({r}^{-1}).\label{eq:39}
\end{equation}
From \eqref{eq:38}, \eqref{eq:39} and Markov's inequality,
\eqref{eq:28} follows.
\end{proof}

Now we prove Theorem~\ref{thm:5}.
By direct calculation,
\begin{align}
  &{\mathrm{E}}\left\{\frac{\ell^*_{\tilde{{\boldsymbol{\beta}}}_0}({\boldsymbol{\beta}})}{n}
    -\frac{\ell({\boldsymbol{\beta}})}{n}\bigg|
    {\mathcal{F}_n},\tilde{{\boldsymbol{\beta}}}_0\right\}^2\notag\\
  &=\frac{1}{{r}}\left[\frac{1}{n^2}\sum_{i=1}^n
  \frac{t_i^2({\boldsymbol{\beta}})}{\pi_i(\tilde{{\boldsymbol{\beta}}}_0)}-\left(\frac{1}{n}\sum_{i=1}^n t_i({\boldsymbol{\beta}})\right)^2\right]\notag\\
  &\le\frac{1}{{r}}\left[\frac{1}{n^2}\sum_{i=1}^n
    \frac{(\log4+2\|{\mathbf{x}}_i\|\|{\boldsymbol{\beta}}\|)^2}
    {\pi_i(\tilde{{\boldsymbol{\beta}}}_0)}-\left(\frac{1}{n}\sum_{i=1}^n t_i({\boldsymbol{\beta}})\right)^2\right].\label{eq:40}
\end{align}
Therefore, from Lemma~\ref{lem2} and \eqref{eq:40},
\begin{align}
  &{\mathrm{E}}\left\{\frac{\ell^*_{\tilde{{\boldsymbol{\beta}}}_0}({\boldsymbol{\beta}})}{n}-\frac{\ell({\boldsymbol{\beta}})}{n}\bigg|
    {\mathcal{F}_n}\right\}^2=O_P({r}^{-1}).\label{eq:41}
\end{align}
Therefore combing \eqref{eq:41} and the fact that ${\mathrm{E}}\{\ell^*_{\tilde{{\boldsymbol{\beta}}}_0}({\boldsymbol{\beta}})|{\mathcal{F}_n}\}=\ell({\boldsymbol{\beta}})$, we have  $n^{-1}\ell^*_{\tilde{{\boldsymbol{\beta}}}_0}({\boldsymbol{\beta}})-n^{-1}\ell({\boldsymbol{\beta}})\rightarrow0$ in conditional probability given ${\mathcal{F}_n}$.
Thus, conditionally on ${\mathcal{F}_n}$,
\begin{equation}\label{eq:42}
  \|{\breve{{\boldsymbol{\beta}}}}-\hat{{\boldsymbol{\beta}}}_{{{\textnormal{\tiny MLE}}}}\|=o_{P|{\mathcal{F}_n}}(1)
\end{equation}
The consistency proved above ensures that ${\breve{{\boldsymbol{\beta}}}}$ is close to $\hat{{\boldsymbol{\beta}}}_{{{\textnormal{\tiny MLE}}}}$ as long as ${r}$ is large enough. Using Taylor's theorem \citep[c.f. Chapter 4 of][]{Ferguson1996},
\begin{align}
0=\frac{\dot\ell^*_{\tilde{{\boldsymbol{\beta}}}_0,j}({\breve{{\boldsymbol{\beta}}}})}{n}
=&\frac{\dot\ell^*_{\tilde{{\boldsymbol{\beta}}}_0,j}(\hat{{\boldsymbol{\beta}}}_{{{\textnormal{\tiny MLE}}}})}{n} +\frac{1}{n}
   \frac{\partial\dot\ell^*_{\tilde{{\boldsymbol{\beta}}}_0,j}(\hat{{\boldsymbol{\beta}}}_{{{\textnormal{\tiny MLE}}}})}
   {\partial {\boldsymbol{\beta}}^T}({\breve{{\boldsymbol{\beta}}}}-\hat{{\boldsymbol{\beta}}}_{{{\textnormal{\tiny MLE}}}})
   +\frac{1}{n} R_{\tilde{{\boldsymbol{\beta}}}_0,j}\label{eq:43}
\end{align}
where
$$R_{\tilde{{\boldsymbol{\beta}}}_0,j}=({\breve{{\boldsymbol{\beta}}}}-\hat{{\boldsymbol{\beta}}}_{{{\textnormal{\tiny MLE}}}})^T
\int_0^1\int_0^1\frac{\partial^2\dot\ell^*_{\tilde{{\boldsymbol{\beta}}}_0,j}
  \{\hat{{\boldsymbol{\beta}}}_{{{\textnormal{\tiny MLE}}}}
  +uv(\breve{{\boldsymbol{\beta}}}-\hat{{\boldsymbol{\beta}}}_{{{\textnormal{\tiny MLE}}}})\}}
{\partial {\boldsymbol{\beta}}\partial{\boldsymbol{\beta}}^T}v{\mathrm{d}} u{\mathrm{d}} v\
({\breve{{\boldsymbol{\beta}}}}-\hat{{\boldsymbol{\beta}}}_{{{\textnormal{\tiny MLE}}}}).$$
Note that
\begin{align*}
  \left\|\frac{\partial^2\dot\ell^*_{\tilde{{\boldsymbol{\beta}}}_0,j}({\boldsymbol{\beta}})}
  {\partial{\boldsymbol{\beta}}\partial{\boldsymbol{\beta}}^T}\right\|
  \le\frac{1}{{r}}\sum_{i=1}^{{r}}\frac{\|{\mathbf{x}}^*_i\|^3}
  {\pi^*_i(\tilde{{\boldsymbol{\beta}}}_0)}
\end{align*}
for all ${\boldsymbol{\beta}}$. Thus
\begin{align}\label{eq:44}
  \left\|\int_0^1\int_0^1
  \frac{\partial^2\dot\ell^*_{\tilde{{\boldsymbol{\beta}}}_0,j}\{\hat{{\boldsymbol{\beta}}}_{{{\textnormal{\tiny MLE}}}}
  +uv({\breve{{\boldsymbol{\beta}}}}-\hat{{\boldsymbol{\beta}}}_{{{\textnormal{\tiny MLE}}}})\}}
  {\partial {\boldsymbol{\beta}}\partial{\boldsymbol{\beta}}^T}v{\mathrm{d}} u{\mathrm{d}} v\ \right\|
  \le\frac{1}{2r}\sum_{i=1}^{{r}}\frac{\|{\mathbf{x}}^*_i\|^3}
  {\pi^*_i(\tilde{{\boldsymbol{\beta}}}_0)}=O_{P|{\mathcal{F}_n}}(n),
\end{align}
where the last equality is from the fact that
\begin{align}\label{eq:54}
  P\left(\frac{1}{nr}\sum_{i=1}^{{r}}
  \frac{\|{\mathbf{x}}^*_i\|^3}{\pi^*_i(\tilde{{\boldsymbol{\beta}}}_0)}\ge\tau\Bigg|{\mathcal{F}_n}\right)
  \le\frac{1}{nr\tau}\sum_{i=1}^{{r}}
  {\mathrm{E}}\left(\frac{\|{\mathbf{x}}^*_i\|^3}{\pi^*_i(\tilde{{\boldsymbol{\beta}}}_0)}\Bigg|{\mathcal{F}_n}\right)
  =\frac{1}{n\tau}\sum_{i=1}^n\|{\mathbf{x}}_i\|^3\rightarrow0,
\end{align}
in probability as $\tau\rightarrow\infty$. From \eqref{eq:43} and \eqref{eq:44},
\begin{equation}\label{eq:55}
  {\breve{{\boldsymbol{\beta}}}}-\hat{{\boldsymbol{\beta}}}_{{{\textnormal{\tiny MLE}}}}=
  -\big(\tilde{{\mathbf{M}}}_X^{\tilde{{\boldsymbol{\beta}}}_0})^{-1}\big)
  \left\{\frac{\dot\ell^*_{\tilde{{\boldsymbol{\beta}}}_0}(\hat{{\boldsymbol{\beta}}}_{{{\textnormal{\tiny MLE}}}})}{n}
    +O_{P|{\mathcal{F}_n}}(\|{\breve{{\boldsymbol{\beta}}}}-\hat{{\boldsymbol{\beta}}}_{{{\textnormal{\tiny MLE}}}}\|^2)\right\}.
\end{equation}
From \eqref{eq:4} of Lemma~\ref{lem2}, $(\tilde{{\mathbf{M}}}_X^{\tilde{{\boldsymbol{\beta}}}_0})^{-1} =O_{P|{\mathcal{F}_n}}(1)$. Combining this with \eqref{eq:29}, \eqref{eq:42} and \eqref{eq:55}
\begin{equation*}
  {\breve{{\boldsymbol{\beta}}}}-\hat{{\boldsymbol{\beta}}}_{{{\textnormal{\tiny MLE}}}}
  =O_{P|{\mathcal{F}_n}}({r}^{-1/2})+o_{P|{\mathcal{F}_n}}(\|{\breve{{\boldsymbol{\beta}}}}-\hat{{\boldsymbol{\beta}}}_{{{\textnormal{\tiny MLE}}}}\|),
\end{equation*}
which implies that
\begin{equation}\label{eq:45}
  {\breve{{\boldsymbol{\beta}}}}-\hat{{\boldsymbol{\beta}}}_{{{\textnormal{\tiny MLE}}}}=O_{P|{\mathcal{F}_n}}({r}^{-1/2}).
\end{equation}

\subsection{Proof of Theorem \ref{thm:6}}
\label{sec:proof-theor-refthm:6}
Denote
\begin{equation}\label{eq:56}
  \frac{\dot\ell^*_{\tilde{{\boldsymbol{\beta}}}_0}(\hat{{\boldsymbol{\beta}}}_{{{\textnormal{\tiny MLE}}}})}{n}
  =\frac{1}{{r}}\sum_{i=1}^{{r}}
  \frac{\{y^*_i-p^*_i(\hat{{\boldsymbol{\beta}}}_{{{\textnormal{\tiny MLE}}}})\}{\mathbf{x}}^*_i}
  {n\pi^*_i(\tilde{{\boldsymbol{\beta}}}_0)}
  \equiv\frac{1}{{r}}\sum_{i=1}^{{r}}{\boldsymbol{\eta}}_i^{\tilde{{\boldsymbol{\beta}}}_0}
\end{equation}
Given ${\mathcal{F}_n}$ and ${\tilde{{\boldsymbol{\beta}}}_0}$, ${\boldsymbol{\eta}}_1^{\tilde{{\boldsymbol{\beta}}}_0}, ..., {\boldsymbol{\eta}}_{{r}}^{\tilde{{\boldsymbol{\beta}}}_0}$ are i.i.d, with mean $\mathbf{0}$ and variance
\begin{align}
  &{\mathrm{Var}}({\boldsymbol{\eta}}_i|{\mathcal{F}_n},\tilde{{\boldsymbol{\beta}}}_0)
    ={r}{\mathbf{V}}_c^{\tilde{{\boldsymbol{\beta}}}_0}
    =\frac{1}{n^2}\sum_{i=1}^n\frac{\{y_i-p_i(\hat{{\boldsymbol{\beta}}}_{{{\textnormal{\tiny MLE}}}})\}^2
    {\mathbf{x}}_i{\mathbf{x}}_i^T}{\pi_i(\tilde{{\boldsymbol{\beta}}}_0)}.\label{eq:46}
\end{align}
Meanwhile, for every $\varepsilon>0$,
\begin{align*}
  & \sum_{i=1}^{{r}} {\mathrm{E}}\{\|{r}^{-1/2}{\boldsymbol{\eta}}_i^{\tilde{{\boldsymbol{\beta}}}_0}\|^2
    I(\|{\boldsymbol{\eta}}_i^{\tilde{{\boldsymbol{\beta}}}_0}\|>{r}^{1/2}\varepsilon)
    |{\mathcal{F}_n},\tilde{{\boldsymbol{\beta}}}_0\}\\
  &\le\frac{1}{{r}^{3/2}\varepsilon}
    \sum_{i=1}^{{r}} {\mathrm{E}}\{\|{\boldsymbol{\eta}}_i^{\tilde{{\boldsymbol{\beta}}}_0}\|^{3}
    I(\|{\boldsymbol{\eta}}_i^{\tilde{{\boldsymbol{\beta}}}_0}\|>{r}^{1/2}\varepsilon)|
    {\mathcal{F}_n},\tilde{{\boldsymbol{\beta}}}_0\}
    \le\frac{1}{{r}^{3/2}\varepsilon}
    \sum_{i=1}^{{r}} {\mathrm{E}}(\|{\boldsymbol{\eta}}_i^{\tilde{{\boldsymbol{\beta}}}_0}\|^{3}|
    {\mathcal{F}_n},\tilde{{\boldsymbol{\beta}}}_0)\\
  &=\frac{1}{{r}^{1/2}}\frac{1}{n^{3}}
    \sum_{i=1}^n\frac{\{y_i-p_i(\hat{{\boldsymbol{\beta}}}_{{{\textnormal{\tiny MLE}}}})\}^{3}
    \|{\mathbf{x}}_i\|^{3}}{\pi_i^{2}(\tilde{{\boldsymbol{\beta}}}_0)}
    \le\frac{1}{{r}^{1/2}}\frac{1}{n^{3}}
    \sum_{i=1}^n\frac{\|{\mathbf{x}}_i\|^{3}}
    {\pi_i^{2}(\tilde{{\boldsymbol{\beta}}}_0)}=o_P(1)
\end{align*}
where the last equality is from Lemma~\ref{lem2}. This and \eqref{eq:46} show that the Lindeberg-Feller conditions are satisfied in probability.
From \eqref{eq:56} and \eqref{eq:46}, by the Lindeberg-Feller central limit theorem \citep[Proposition 2.27 of][]{Vaart:98}, conditionally on ${\mathcal{F}_n}$ and $\tilde{{\boldsymbol{\beta}}}_0$,
\begin{equation*}
  \frac{1}{n}({\mathbf{V}}_c^{\tilde{{\boldsymbol{\beta}}}_0})^{-1/2}\dot\ell^*(\hat{{\boldsymbol{\beta}}}_{{{\textnormal{\tiny MLE}}}})=
  \frac{1}{{r}^{1/2}}\{{\mathrm{Var}}({\boldsymbol{\eta}}_i|{\mathcal{F}_n})\}^{-1/2}\sum_{i=1}^{{r}}{\boldsymbol{\eta}}_i
  \rightarrow N(0,I),
\end{equation*}
in distribution.

Now we exam the distance between ${\mathbf{V}}_c^{\tilde{{\boldsymbol{\beta}}}_0}$ and ${\mathbf{V}}_c$. First,
\begin{align}
  \|{\mathbf{V}}_c-{\mathbf{V}}_c^{\tilde{{\boldsymbol{\beta}}}_0}\|\le\frac{1}{{r}n^2}\sum_{i=1}^n\|{\mathbf{x}}_i\|^2
  \left|\frac{1}{\pi_i}-\frac{1}{\pi_i(\tilde{{\boldsymbol{\beta}}}_0)}\right|
  \label{eq:47}
\end{align}
For the last term in the above equation,
\begin{align}
  &\left|\frac{1}{\pi_i}-\frac{1}{\pi_i(\tilde{{\boldsymbol{\beta}}}_0)}\right|\notag\\
  &\le\left|\frac{\sum_{j=1}^n|y_j-p_j(\hat{{\boldsymbol{\beta}}}_{{{\textnormal{\tiny MLE}}}})|\|{\mathbf{x}}_j\|}
    {|y_i-p_i(\hat{{\boldsymbol{\beta}}}_{{{\textnormal{\tiny MLE}}}})|\|{\mathbf{x}}_i\|}
    -\frac{\sum_{j=1}^n|y_j-p_j(\tilde{{\boldsymbol{\beta}}}_0)|\|{\mathbf{x}}_j\|}
    {|y_i-p_i(\hat{{\boldsymbol{\beta}}}_{{{\textnormal{\tiny MLE}}}})|\|{\mathbf{x}}_i\|}
    \right|\notag\\
  &\quad+\left|\frac{\sum_{j=1}^n|y_j-p_j(\tilde{{\boldsymbol{\beta}}}_0)|\|{\mathbf{x}}_j\|}
    {|y_i-p_i(\hat{{\boldsymbol{\beta}}}_{{{\textnormal{\tiny MLE}}}})|\|{\mathbf{x}}_i\|}
    -\frac{\sum_{j=1}^n|y_j-p_j(\tilde{{\boldsymbol{\beta}}}_0)|\|{\mathbf{x}}_j\|}
    {|y_i-p_i(\tilde{{\boldsymbol{\beta}}}_0)|\|{\mathbf{x}}_i\|}
    \right|\notag\\
  &\le\frac{\sum_{j=1}^n
    |p_j(\tilde{{\boldsymbol{\beta}}}_0)-p_j(\hat{{\boldsymbol{\beta}}}_{{{\textnormal{\tiny MLE}}}})|\|{\mathbf{x}}_j\|}
    {|y_i-p_i(\hat{{\boldsymbol{\beta}}}_{{{\textnormal{\tiny MLE}}}})|\|{\mathbf{x}}_i\|}
    +\left|\frac{1}{|y_i-p_i(\hat{{\boldsymbol{\beta}}}_{{{\textnormal{\tiny MLE}}}})|}
    -\frac{1}{|y_i-p_i(\tilde{{\boldsymbol{\beta}}}_0)|}
    \right|\frac{\sum_{j=1}^n\|{\mathbf{x}}_j\|}{\|{\mathbf{x}}_i\|}\label{eq:48}
\end{align}
Note that
\begin{align}
  |p_j(\tilde{{\boldsymbol{\beta}}}_0)-p_i(\hat{{\boldsymbol{\beta}}}_{{{\textnormal{\tiny MLE}}}})|
  \le\|{\mathbf{x}}_i\|\|\tilde{{\boldsymbol{\beta}}}_0-\hat{{\boldsymbol{\beta}}}_{{{\textnormal{\tiny MLE}}}}\|,\label{eq:49}
\end{align}
and
\begin{align}
  \left|\frac{1}{|y_i-p_i(\hat{{\boldsymbol{\beta}}}_{{{\textnormal{\tiny MLE}}}})|}
    -\frac{1}{|y_i-p_i(\tilde{{\boldsymbol{\beta}}}_0)|}\right|
  &=\left|e^{(2y_i-1){\mathbf{x}}_i^T\hat{{\boldsymbol{\beta}}}_{{{\textnormal{\tiny MLE}}}}}
    -e^{(2y_i-1){\mathbf{x}}_i^T\tilde{{\boldsymbol{\beta}}}_0}\right|\notag\\
  &\le e^{\lambda\|{\mathbf{x}}_i\|}\|{\mathbf{x}}_i\|\|\tilde{{\boldsymbol{\beta}}}_0-\hat{{\boldsymbol{\beta}}}_{{{\textnormal{\tiny MLE}}}}\|.
    \label{eq:50}
\end{align}
From \eqref{eq:47}, \eqref{eq:48}, \eqref{eq:49} and \eqref{eq:50},
\begin{align}\label{eq:51}
  \|{\mathbf{V}}_c-{\mathbf{V}}_c^{\tilde{{\boldsymbol{\beta}}}_0}\|
  \le\frac{\|\tilde{{\boldsymbol{\beta}}}_0-\hat{{\boldsymbol{\beta}}}_{{{\textnormal{\tiny MLE}}}}\|}{{r}}C_1
  =O_{P|{\mathcal{F}_n}}({r}^{-1}r_0^{-1/2}),
\end{align}
where
\begin{align*}
  C_1=\frac{1}{n}\sum_{i=1}^n\frac{\|{\mathbf{x}}_i\|}{|y_i-p_i(\hat{{\boldsymbol{\beta}}}_{{{\textnormal{\tiny MLE}}}})|}
  \frac{1}{n}\sum_{i=1}^n\|{\mathbf{x}}_i\|^2
  +\frac{1}{n}\sum_{i=1}^n\|{\mathbf{x}}_i\|e^{\lambda\|{\mathbf{x}}_i\|}\frac{1}{n}\sum_{i=1}^n\|{\mathbf{x}}_i\|
  =O_P(1).
\end{align*}
From Lemma~\ref{lem3}, \eqref{eq:55} and~\eqref{eq:45},
\begin{equation}\label{eq:57}
  {\breve{{\boldsymbol{\beta}}}}-\hat{{\boldsymbol{\beta}}}_{{{\textnormal{\tiny MLE}}}}=
  -\frac{1}{n}(\tilde{{\mathbf{M}}}_X^{\tilde{{\boldsymbol{\beta}}}_0})^{-1}
  \dot\ell^*_{\tilde{{\boldsymbol{\beta}}}_0}(\hat{{\boldsymbol{\beta}}}_{{{\textnormal{\tiny MLE}}}})+O_{P|{\mathcal{F}_n}}({r}^{-1})
\end{equation}
From \eqref{eq:4} of Lemma~\ref{lem3},
\begin{align}\label{eq:58}
  (\tilde{{\mathbf{M}}}_X^{\tilde{{\boldsymbol{\beta}}}_0})^{-1}-{\mathbf{M}}_X^{-1}
  &=-{\mathbf{M}}_X^{-1}(\tilde{{\mathbf{M}}}_X^{\tilde{{\boldsymbol{\beta}}}_0}-{\mathbf{M}}_X)
    (\tilde{{\mathbf{M}}}_X^{\tilde{{\boldsymbol{\beta}}}_0})^{-1}
  =O_{P|{\mathcal{F}_n}}({r}^{-1/2}).
\end{align}
From \eqref{eq:16}, \eqref{eq:57}, \eqref{eq:51} and \eqref{eq:58},
\begin{align*}
  &{\mathbf{V}}^{-1/2}({\breve{{\boldsymbol{\beta}}}}-\hat{{\boldsymbol{\beta}}}_{{{\textnormal{\tiny MLE}}}})\\
  &=-{\mathbf{V}}^{-1/2}n^{-1}(\tilde{{\mathbf{M}}}_X^{\tilde{{\boldsymbol{\beta}}}_0})^{-1}
    \dot\ell^*(\hat{{\boldsymbol{\beta}}}_{{{\textnormal{\tiny MLE}}}})
    +O_{P|{\mathcal{F}_n}}({r}^{-1/2})\\
  &=-{\mathbf{V}}^{-1/2}{\mathbf{M}}_X^{-1}n^{-1}\dot\ell^*(\hat{{\boldsymbol{\beta}}}_{{{\textnormal{\tiny MLE}}}})
    -{\mathbf{V}}^{-1/2}\{(\tilde{{\mathbf{M}}}_X^{\tilde{{\boldsymbol{\beta}}}_0})^{-1}-{\mathbf{M}}_X^{-1}\}
    n^{-1}\dot\ell^*(\hat{{\boldsymbol{\beta}}}_{{{\textnormal{\tiny MLE}}}})
    +O_{P|{\mathcal{F}_n}}({r}^{-1/2})\\
  &=-{\mathbf{V}}^{-1/2}{\mathbf{M}}_X^{-1}({\mathbf{V}}_c^{\tilde{{\boldsymbol{\beta}}}_0})^{1/2}
    ({\mathbf{V}}_c^{\tilde{{\boldsymbol{\beta}}}_0})^{-1/2}n^{-1}\dot\ell^*(\hat{{\boldsymbol{\beta}}}_{{{\textnormal{\tiny MLE}}}})
    +O_{P|{\mathcal{F}_n}}({r}^{-1/2}).
\end{align*}
The result in Theorem~\ref{thm:1} follows from Slutsky's Theorem\citep[Theorem 6 of][]{Ferguson1996} and the fact that
\begin{align*}
  {\mathbf{V}}^{-1/2}{\mathbf{M}}_X^{-1}({\mathbf{V}}_c^{\tilde{{\boldsymbol{\beta}}}_0})^{1/2}
  ({\mathbf{V}}^{-1/2}{\mathbf{M}}_X^{-1}({\mathbf{V}}_c^{\tilde{{\boldsymbol{\beta}}}_0})^{1/2})^T
  =&{\mathbf{V}}^{-1/2}{\mathbf{M}}_X^{-1}{\mathbf{V}}_c^{\tilde{{\boldsymbol{\beta}}}_0}{\mathbf{M}}_X^{-1}{\mathbf{V}}^{-1/2}\\
  =&{\mathbf{V}}^{-1/2}{\mathbf{M}}_X^{-1}{\mathbf{V}}_c{\mathbf{M}}_X^{-1}{\mathbf{V}}^{-1/2}+O_{P|{\mathcal{F}_n}}(r_0^{-1/2}{r}^{-1/2})\\
  =&\mathbf{I}+O_{P|{\mathcal{F}_n}}(r_0^{-1/2}{r}^{-1/2}),
\end{align*}
which is obtained using \eqref{eq:51}.

\subsection{Proofs for nonrandom covariates}
To prove the theorems for the case of
nonrandom covariates, we need to use the following two assumptions to replace Assumptions~\ref{asp:1} and \ref{asp:4}, respectively.

\begin{assumption}\label{asp:1n}
  As $n\rightarrow\infty$,
  ${\mathbf{M}}_X=n^{-1}\sum_{i=1}^n w_i(\hat{{\boldsymbol{\beta}}}_{{{\textnormal{\tiny MLE}}}}){\mathbf{x}}_i{\mathbf{x}}_i^T$
  goes to a positive-definite matrix in probability and
  $\lim\sup_nn^{-1}\sum_{i=1}^n\|{\mathbf{x}}_i\|^3<\infty$.
\end{assumption}
  \begin{assumption}\label{asp:2n}
    The covariate distribution satisfies that $n^{-1}\sum_{i=1}^n {\mathbf{x}}_i{\mathbf{x}}_i^T$ converges to a positive definite matrix, and
    $\lim\sup_nn^{-1}\sum_{i=1}^n e^{a\|{\mathbf{x}}_i\|}<\infty$ for any
    $a\in\mathbb{R}$.
  \end{assumption}
  Note that $\hat{{\boldsymbol{\beta}}}_{{{\textnormal{\tiny MLE}}}}$ is random, so the condition on
  ${\mathbf{M}}_X$ holds in probability in Assumption~\ref{asp:1n}. $\pi_i$'s could be
  functions of the responses, and the optimal $\pi_i$'s are indeed
  functions of the responses. Thus Assumptions~\ref{asp:2} and
  \ref{asp:3} involve random terms and remain unchanged.
  
  The proof of Lemma~\ref{lem1} does not require the condition that
  $n^{-1}\sum_{i=1}^n\|{\mathbf{x}}_i\|^3=O_P(1)$, so it is automatically valid for
  nonrandom covariates. The proof of Theorem~\ref{thm:1} requires
  $n^{-1}\sum_{i=1}^n\|{\mathbf{x}}_i\|^3=O_P(1)$ in \eqref{eq:53}. If it is replaced
  with $\lim\sup_nn^{-1}\sum_{i=1}^n\|{\mathbf{x}}_i\|^3<\infty$, \eqref{eq:53} still
  holds. Thus Theorem~\ref{thm:1} is valid if Assumptions~\ref{asp:2}
  and \ref{asp:1n} are true.

  Theorem~\ref{thm:2} is built upon Theorem~\ref{thm:1} and does not
  require additional conditions besides Assumption \ref{asp:3}. Thus it is valid under Assumptions~\ref{asp:2},
  \ref{asp:3} and \ref{asp:1n}.

  Theorems~\ref{thm:3} and \ref{thm:4} are proved by the application
  of Cauchy-Schwarz inequality, and they are valid regardless whether
  the covariates are random or nonrandom.

  To prove Theorems~\ref{thm:5} and \ref{thm:6} for nonrandom
  covariates, we first prove Lemma~\ref{lem2}. From Cauchy-Schwarz inequality,
  \begin{align*}
    \frac{1}{n}\sum_{i=1}^n\|{\mathbf{x}}_i\|^{k_1-k_2}e^{\lambda k_2\|{\mathbf{x}}_i\|}
    &\le\bigg\{\bigg(\frac{1}{n}\sum_{i=1}^n\|{\mathbf{x}}_i\|^{2(k_1-k_2)}\bigg)
      \bigg(\frac{1}{n}\sum_{i=1}^n e^{2\lambda k_2\|{\mathbf{x}}_i\|}\bigg)\bigg\}^{1/2}\notag\\
&\le\bigg\{\frac{\{2(k_1-k_2)\}!}{n}\sum_{i=1}^n e^{\|{\mathbf{x}}_i\|}\bigg\}^{1/2}
    \bigg\{\frac{1}{n}\sum_{i=1}^n e^{2\lambda k_2\|{\mathbf{x}}_i\|}\bigg\}^{1/2}
\end{align*}
Thus, under Assumption~\ref{asp:2n},
  \begin{align}
    \lim\sup_n\frac{1}{n}\sum_{i=1}^n\|{\mathbf{x}}_i\|^{k_1-k_2}e^{\lambda k_2\|{\mathbf{x}}_i\|}
    &\le\infty.\label{eq:26}
\end{align}
Combining \eqref{eq:35}, \eqref{eq:36} and \eqref{eq:26}, Lemma~\ref{lem2} follows. With the results in Lemma \ref{lem2}, the proofs of Lemma \ref{lem3} and Theorem~\ref{thm:5}, and Theorem~\ref{thm:6} are the same as those in Section~\ref{sec:proof-theor-refthm:5}, and Section~\ref{sec:proof-theor-refthm:6}, respectively, except that $(n\tau)^{-1}\sum_{i=1}^n\|{\mathbf{x}}_i\|^3\rightarrow0$ deterministically instead of in probability in \eqref{eq:54}. 

\section{Additional numerical results}
\label{sec:addit-numer-results}
In this section, we provide additional numerical results for rare events data and unconditional MSEs. 

\subsection{Further numerical evaluations for rare events data}
\label{sec:furth-numer-eval}

To further investigate the performance of the proposed method for more extreme rare events data, we adopt the model setup with a univariate covariate in \cite{king2001logistic}, namely,
\begin{equation*}
  P(y=1|x)=\frac{1}{1+\exp(-\beta_0-\beta_1x)}.
\end{equation*}
Following \cite{king2001logistic}, we assume that the covariate $x$ follows a standard normal
distribution and consider different values of $\beta_0$ and a fixed
value of $\beta_1=1$. The full data sample size is set to $n=10^6$ and
$\beta_0$ is set to $-7, -9.5, -12.5$, and $-13.5$, generating
responses with the percentages of 1's equaling 0.1493\%, 0.0111\%,
0.0008\%, and 0.0002\% respectively. For the last case there are only
two 1's (0.0002\%) in the full data of $n=10^6$, and this is a very
extreme case of rare events data. For comparison, we also calculate the MSE
of the full data approach using 1000 Bootstrap sample (the gray
dashed line). Results are reported in Figure~\ref{fig:s3}. It is seen
that as the rare event rate gets closer to 0, the performance of the
OSMAC methods relative to the full data Bootstrap gets better. When the rare event rate is 0.0002\%, for the
full data Bootstrap approach, there are 110 cases out of 1000
Bootstrap samples that the MLE are not found, while this occurs for 18,
2, 4, and 1 cases when $r_0=200$, and $r=200, 500, 700$, and $1000$,
respectively.

\begin{figure}[htp]
  \centering
  \begin{subfigure}{0.49\textwidth}
    \includegraphics[width=\textwidth]{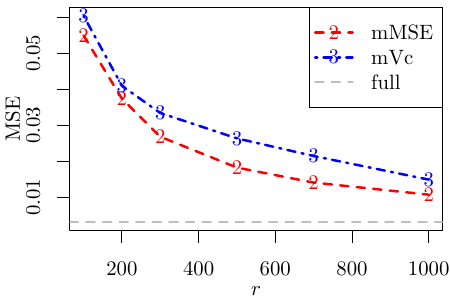}
    \caption{0.1493\% of $y_i$'s are 1}
  \end{subfigure}
  \begin{subfigure}{0.49\textwidth}
    \includegraphics[width=\textwidth]{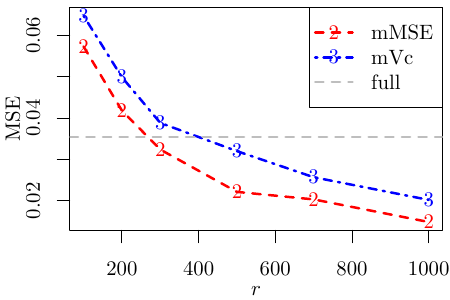}
    \caption{0.0111\% of $y_i$'s are 1}
  \end{subfigure}
  \begin{subfigure}{0.49\textwidth}
    \includegraphics[width=\textwidth]{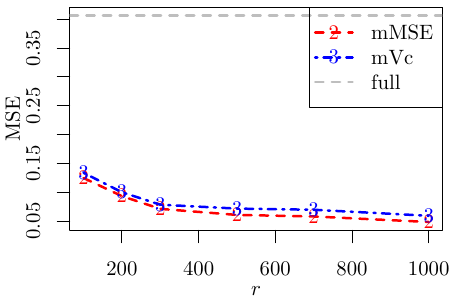}
    \caption{0.0008\% of $y_i$'s are 1}
  \end{subfigure}
  \begin{subfigure}{0.49\textwidth}
    \includegraphics[width=\textwidth]{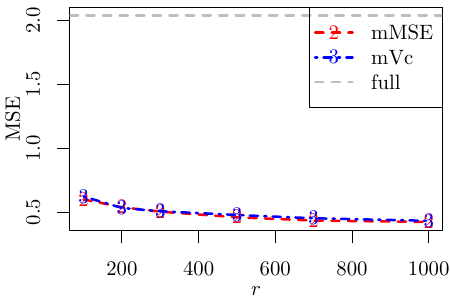}
    \caption{0.0002\% of $y_i$'s are 1}
  \end{subfigure}
  \caption{MSEs for rare events data with different second step subsample size $r$ and a fixed first step subsample size $r_0=200$, where the covariate follows the standard normal distribution.}
  \label{fig:s3}
\end{figure}

\subsection{Numerical results on unconditional MSEs}
\label{sec:numer-results-uncond}
To calculate unconditional MSEs, we generate the full data in each repetition and then apply the subsampling methods. This way, the resultant MSEs are the unconditional MSEs. The exactly same configurations in Section~\ref{sec:numerical-examples} are used. Results are presented in Figure~\ref{fig:s4}. It is seen that the unconditional results are very similar to the conditional results, even for the imbalanced case of nzNormal data sets. For extreme imbalanced data or rare events data, the conditional MSE and the unconditional MSE can be different, as seen in the results in Section~\ref{sec:furth-numer-eval}. 
\begin{figure}[htp]
  \centering
  \begin{subfigure}{0.49\textwidth}
    \includegraphics[width=\textwidth]{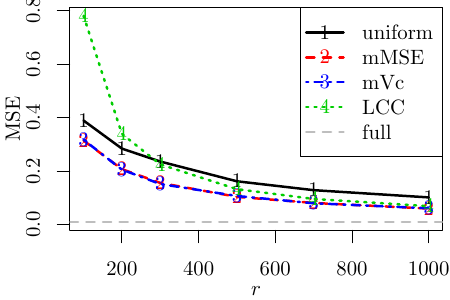}
    \caption{mzNormal}
  \end{subfigure}
  \begin{subfigure}{0.49\textwidth}
    \includegraphics[width=\textwidth]{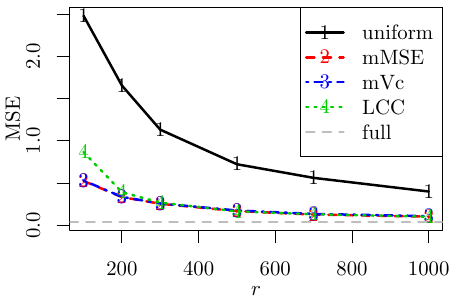}
    \caption{nzNormal}
  \end{subfigure}\\[5mm]
  \begin{subfigure}{0.49\textwidth}
    \includegraphics[width=\textwidth]{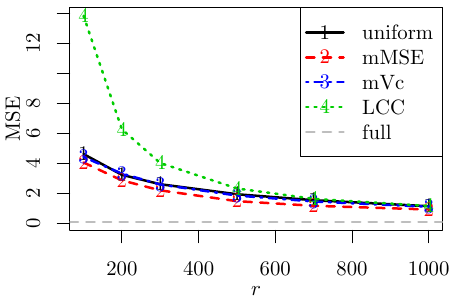}
    \caption{ueNormal}
  \end{subfigure}
  \begin{subfigure}{0.49\textwidth}
    \includegraphics[width=\textwidth]{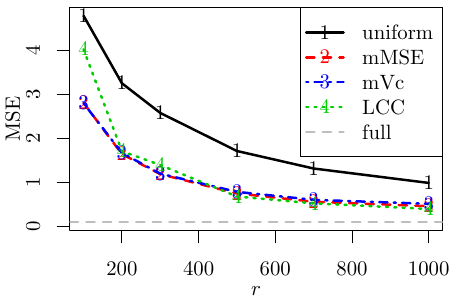}
    \caption{mixNormal}
  \end{subfigure}\\[5mm]
  \begin{subfigure}{0.49\textwidth}
    \includegraphics[width=\textwidth]{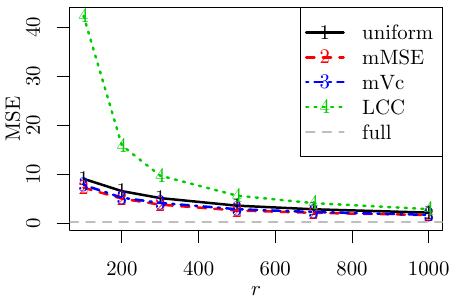}
    \caption{$T_3$}
  \end{subfigure}
  \begin{subfigure}{0.49\textwidth}
    \includegraphics[width=\textwidth]{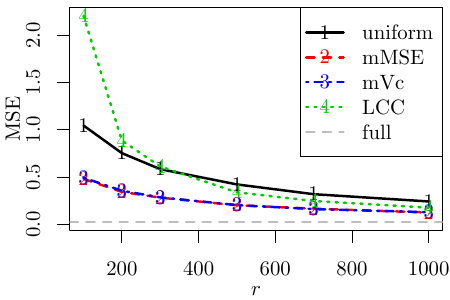}
    \caption{EXP}
  \end{subfigure}\\[5mm]
  \caption{Unconditional MSEs for different second step subsample size ${r}$ with
    the first step subsample size being fixed at $r_0=200$.}
  \label{fig:s4}
\end{figure}

\end{document}